\newcommand{\ynote}[1]{{\color{blue} Yuval: #1}}
\newtheorem{theorem}{Theorem}[section]
\newtheorem{proposition}[theorem]{Proposition}
\newtheorem{lemma}[theorem]{Lemma}
\newtheorem{claim}[theorem]{Claim}
\newtheorem{corollary}[theorem]{Corollary}
\newtheorem{definition}[theorem]{Definition}
\theoremstyle{remark}\newtheorem*{remark}{\textbf{Remark}}
\renewcommand{\paragraph}[1]{\vspace{.1in}\noindent{\bf #1}}
\newtheorem{notation}[theorem]{Notation}
\newtheorem*{proposition*}{Proposition}
\newtheorem{construction}[theorem]{Construction}
\newlist{casenv}{enumerate}{4}
\setlist[casenv]{leftmargin=*,align=left,widest={iiii}}
\setlist[casenv,1]{label={{\itshape\ \casename} \arabic*.},ref=\arabic*}
\setlist[casenv,2]{label={{\itshape\ \casename} \roman*.},ref=\roman*}
\setlist[casenv,3]{label={{\itshape\ \casename\ \alph*.}},ref=\alph*}
\setlist[casenv,4]{label={{\itshape\ \casename} \arabic*.},ref=\arabic*}
\providecommand{\casename}{Case}
\newcommand{\mrm}[1]{\mathrm{#1}}
\newcommand{\bem}[0]{\begin{eqnarray*}}
\newcommand{\enm}[0]{\end{eqnarray*}}
\newcommand{\anderbrace}[2]{
	\if\relax\detokenize{#2}\relax
		\sbox0{$\underbrace{#1}_{}$}
		\mathrel{\mathmakebox[\wd0]{#1}}
	\else
		\mathrel{\underbrace{#1}_{\mathclap{#2}}}
	\fi}
\newcommand{\integers}{\mathbb{Z}}
\newcommand{\pintegers}{\mathbb{N}}
\newcommand{\reals}{\mathbb{R}}
\newcommand{\spm}{\left\{ -1,1\right\}  }
\newcommand{\zo}{\left\{  0,1\right\}  }
\newcommand{\sm}{\setminus}
\newcommand{\es}{\emptyset}
\newcommand{\ddd}{\doteqdot}
\newcommand{\one}{\mathbbm{1}}
\newcommand{\eps}{\epsilon}
\newcommand{\pr}{\Pr}
\newcommand{\given}[2]{#1\,\middle|\, #2}
\DeclareMathOperator*{\be}{\mathbb{E}}
\DeclareMathOperator*{\var}{\mrm{Var}}
\DeclareMathOperator*{\cov}{\mrm{Cov}}
\DeclareMathOperator{\E}{E}
\DeclareMathOperator{\HH}{H}
\DeclareMathOperator{\I}{I}
\newcommand{\wh}[1]{\widehat{#1}}
\newcommand{\lbr}{\left(}
\newcommand{\rbr}{\right)}
\newcommand{\lbs}{\left[}
\newcommand{\rbs}{\right]}
\newcommand{\lba}{\left|}
\newcommand{\rba}{\right|}
\newcommand{\reddit}[1]{\begin{color}{purple}{#1}\end{color}}
\definecolor{ohadcolor}{RGB}{0, 100, 250}
\newcommand{\func}[3]{{#1}\colon {#2} \to {#3}}
\newcommand{\restrict}[2]{{
  \left.\kern-\nulldelimiterspace 
  #1 
  \vphantom{\big|} 
  \right|_{#2} 
}}
\newcommand{\len}{b}
\newcommand{\calH}{\mathcal{H}}
\newcommand{\Good}{$\alpha$-Good}
\newcommand{\good}{$\alpha$-good}
\newcommand{\sfGood}{{\sf Good}^{\alpha}_n}
\newcommand{\Goodw}{$(\alpha,W)$-Good}
\newcommand{\goodw}{$(\alpha,W)$-good}
\newcommand{\sfGoodw}{{\sf Good}^{\alpha,W}_n}
\newcommand{\sfGoodwp}{{\sf Good}^{\alpha,W}_{n'}}
\newcommand{\sfGoodwpp}{{\sf Good}^{\alpha/2,W}_{n'+r}}
\newcommand{\HShare}{{\sf Share}}
\newcommand{\HEval}{{\sf Eval}}
\newcommand{\G}{G}	
\newcommand{\cD}{\mathcal{D}}
\renewenvironment{proof}[1][]
    {\noindent
       \ifx&#1&{\it Proof.}
       \else{\it Proof ({#1}).}
       \fi}{\hfill $\blacksquare$}
\title{Locality-Preserving Hashing for Shifts with \\Connections to Cryptography\thanks{This is a full version of~\cite{LPHSconf}.}}
\author{Elette Boyle\thanks{IDC Herzliya, Israel and NTT Research, USA. \tt{elette.boyle@idc.ac.il}}
  \and Itai Dinur\thanks{Ben-Gurion University, Be'er Sheva, Israel. \tt{dinuri@cs.bgu.ac.il}}
  \and Niv Gilboa\thanks{Ben-Gurion University, Be'er Sheva, Israel. \tt{gilboan@bgu.ac.il}}
  \and Yuval Ishai \thanks{Technion, Haifa Israel. \tt{yuvali@cs.technion.ac.il}}
  \and Nathan Keller\thanks{Bar-Ilan University, Ramat Gan, Israel. \tt{nathan.keller27@gmail.com}} 
  \and Ohad Klein\thanks{Bar-Ilan University, Ramat Gan, Israel. \tt{ohadkel@gmail.com}}}
\begin{document}

\maketitle

\begin{abstract}
Can we sense our location in an unfamiliar environment by taking a sublinear-size sample of our surroundings? Can we efficiently encrypt a message that only someone physically close to us can decrypt? To solve this kind of problems, we introduce and study a new type of hash functions for finding shifts in sublinear time. A function $h:\{0,1\}^n\to \integers_n$ is a $(d,\delta)$ {\em locality-preserving hash function for shifts} (LPHS) if: (1) $h$ can be computed by (adaptively) querying $d$ bits of its input, and (2) $\pr \lbs h(x) \neq h(x \ll 1) + 1 \rbs \leq \delta$, where $x$ is random and $\ll 1$ denotes a cyclic shift by one bit to the left. 
We make the following contributions.
\begin{itemize}
\item {\bf Near-optimal LPHS via Distributed Discrete Log.}  We establish a general two-way connection between LPHS and algorithms for {\em distributed discrete logarithm} in the generic group model. Using such an algorithm of Dinur et al.\ (Crypto 2018), we get  LPHS with near-optimal error of $\delta=\tilde O(1/d^2)$. 
This  gives an unusual example for the usefulness of group-based cryptography in a post-quantum world. 
We extend the positive result to non-cyclic and worst-case variants of LPHS. 

  
\item {\bf Multidimensional LPHS.}  We obtain positive and negative results for a multidimensional extension of LPHS, making progress towards an optimal 2-dimensional LPHS.

\item {\bf Applications.} We demonstrate the usefulness of LPHS by presenting cryptographic and algorithmic applications.  In particular, we apply multidimensional LPHS to obtain an efficient ``packed'' implementation of  {\em homomorphic secret sharing} and a sublinear-time implementation of {\em location-sensitive encryption} whose decryption requires a significantly overlapping view. 

\end{itemize}
\end{abstract}
\thispagestyle{empty}



\newpage
\tableofcontents
\newpage



\section{Introduction}

A {\em locality-preserving hash function}~\cite{LinialS96,IndykMRV97}  is a distance-respecting mapping from a complex input space to a simpler output space.
Inspired by recent results in cryptography, we study a new kind of locality-preserving hash functions that map strings to integers while respecting the {\em shift} distance
between pairs of input strings with high probability. A distinctive feature of these hash functions is that they can be computed in {\em sublinear time} with low error probability.

\paragraph{\bf Why shifts? Why sublinear?}
Our hash functions for shifts can be thought of as {\em sublinear-time location sensors} that measure a relative position in an unfamiliar environment by taking a sublinear-size sample of the surroundings. This can apply in a variety of settings. For instance, ``surroundings'' may refer to a local view of an unexplored territory, a long string such as a DNA sequence, an external signal such as a GPS synchronization sequence, a digital document such as big pdf file or a virtual world, or a huge mathematical object such as a cryptographic group. See~\cite{AndoniIKH13,ohlsson2014compressive}  for applications of shift finding to GPS synchronization, image alignment,  motion estimation, and more.\footnote{While previous related works study a {\em noise tolerant} variant of shift distance, which arises naturally in the applications they consider, in this work we focus on the simpler noiseless case. Beyond theoretical interest, the simpler notion is motivated by applications. For instance, a local view of a digital document or a mathematical object is noiseless. The noisy case is studied in a follow-up work~\cite{DKK-followup}, which obtains nearly tight bounds on the (sublinear) amount of random noise that can be tolerated.
}
We will discuss additional cryptographic and algorithmic applications in Section~\ref{sec-introapps} below.
We are motivated by scenarios in which the local view contains an enormous amount of relevant information that cannot be naively sub-sampled or compressed. This calls for {\em sublinear-time} solutions.  

\paragraph{\bf Simple shift-finding solutions.} To motivate the new primitive,
consider the following simple shift-finding problem. An $n$-bit string $x$ is picked uniformly at random, and then cyclically shifted by $s$ bits the left, for some $0\le s < n$. Let $y$ be the resulting string. For instance, $x,y$ may be obtained by measuring the same periodic signal at different phases. We write $y=x\ll s$. The
shift-finding problem is to find the shift amount $s$ given $x$ and $y$. 

In a centralized setting, where $x$ and $y$ are both given as inputs, it is easy to solve the problem  in sublinear time (with small error probability), querying only $\tilde O(n^{1/2})$ bits of the input, by matching substrings of $x$ of length $\ell=O(\log n)$ starting at positions $1,2,\ldots, \sqrt{n}$ with length-$\ell$ substrings of $y$ whose starting position is a multiple of $\sqrt{n}$. (This is a simplified version of a noise-resilient algorithm from~\cite{AndoniIKH13}.) This algorithm is nearly optimal, since any shift-finding algorithm
for an unbounded shift amount $s$
should read $\Omega(\sqrt{n})$ bits of the input~\cite{BatuEKMRRS03}.

In a distributed setting, a natural goal is to design a {\em sketching} algorithm that compresses a single input into a short sketch, such that given the sketches of $x$ and $y$ one can recover $s$ with high probability. Note that the previous centralized algorithm does imply such a sublinear-size sketch, but only with $\tilde O(\sqrt{n})$ output size, which is far from optimal. Instead, one could use the following classical approach~\cite{Cover}:
let the sketch of $x$ be an integer $0\le z_x<n$ that minimizes $x\ll z_x$ (viewed as an $n$-bit integer), and similarly for $y$. It can be easily seen that $s=z_x-z_y \mod n$ whenever the minimum is uniquely defined.

The logarithmic sketch size of this simple solution is clearly optimal. Moreover, it realizes something even stronger than sketching: a hash function $h:\{0,1\}^n\to\integers_n$ that respects cyclic shifts in the sense that for a random input $x$,  we have $h(x)=h(x\ll s)+s$ except with small probability. That is, shifting the input by $s$ changes the output by $s$ in the same direction. This is useful for applications. For instance, given $t$ hashes $z_i=h(y_i)$, where $y_i=x\ll s_i$ for $i=1,\ldots,t$, one can easily compute in time $\tilde O(t)$ the relative offsets of all $y_i$.


The main downside of the above hashing-based solution compared to the centralized algorithm is its linear running time. A natural question is whether one can enjoy the best of both worlds:
\begin{quote}
{\em Can we combine the sublinear running time of the centralized algorithm with the optimal sketch size and locality-sensing features of the hashing-based solution?}
\end{quote}

\subsection{Our Contribution}

We initiate a  study of hashing-based solutions to the shift-finding problem. We capture such solutions via the following notion of locality-preserving hash function for shifts.

\begin{definition}
A function $h:\{0,1\}^n\to \integers_n$ is a $(d,\delta)$ {\em locality-preserving hash function for shifts} (LPHS) if: (1) $h$ can be computed by (adaptively) querying $d$ bits of its input, and (2) $\pr \lbs h(x) \neq h(x \ll 1) + 1 \rbs \leq \delta$, where $x$ is random and $\ll 1$ denotes a cyclic shift by one bit to the left.
\end{definition}
Note that, by a union bound, an LPHS as above satisfies $\pr \lbs h(x) \neq h(x \ll s) + s \rbs \leq s\cdot \delta$ for any shift amount $0\le s<n$. Thus, an LPHS has a better accuracy guarantee for smaller shifts. Intuitively, an LPHS can be thought of as a sublinear-time computable location identifier that suffices (with high probability) for determining the exact relative location with respect to adjacent identifiers.


\paragraph{Other LPHS flavors.} The above notion of LPHS addresses the basic shift-finding problem as discussed above, but is limited in several important ways: it only considers cyclic shifts and 1-dimensional inputs,
and it only guarantees average-case correctness for uniformly random inputs. To address these limitations, we additionally consider other flavors of the basic LPHS notion defined above that are more suitable for applications. These include a  {\em non-cyclic} variant, where instead of $x\ll 1$ we remove the leftmost bit of $x$ and add a random bit on the right; a {\em $k$-dimensional} variant, where the input is a $k$-dimensional matrix and the output is in $\integers_n^k$;
and a {\em worst-case} variant where the quantification is over an arbitrary $x$ that is ``far from periodic'' and the probability is over the choice of $h$. (The latter variant better corresponds to the typical notion of a randomized hash function.) The applications we present crucially depend on these extensions.

\subsubsection {Near-Optimal LPHS via Distributed Discrete Log}

We establish a general two-way connection between LPHS and algorithms for the {\em distributed discrete logarithm} (DDL) problem~\cite{BGI16}. Before explaining this connection, we start with relevant background.

The traditional discrete logarithm (DL) problem is parameterized by a cyclic group $\mathbb G$ of order $n$ with a generator $g$, where $n$ is typically a large prime. The challenge is to recover a random $u\in\integers_n$ from $g^u$. Many cryptographic applications
rely on the conjectured intractability of the DL problem in special types of groups, including subgroups of $\integers^*_p$ and certain families of elliptic curves.

The DDL problem is a distributed variant of the DL problem that was recently introduced in the context of group-based homomorphic secret sharing~\cite{BGI16}. In DDL there are two parties, where the first party's input is $g^u$ for a random $u$, and the second party's input is $g^{u+s}$ where $s\in\{0,1\}$ (more generally, $s$ can be a small integer). The goal is for each party to locally output an integer, such that the difference between the two outputs is $s$. One can assume without loss of generality that the two parties run the same algorithm.

Note that a DL algorithm can be used to perfectly solve the DDL problem. However, this is computationally infeasible in a cryptographically hard group, where $n$ is enormous. Instead, a DDL algorithm uses a bounded running time (typically polylogarithmic in the group order $n$) to obtain the correct difference except with error probability $\delta$. For instance,
the initial solution proposed in~\cite{BGI16} uses a pseudorandom function to mark each group element as ``distinguished'' with probability $\delta$, and makes each party, on input $v$, output the smallest $z\ge 0$ such that $v\cdot g^z$ is distinguished. The (expected) running time of this algorithm is roughly $1/\delta$, and the error probability is $\delta$ (corresponding to the case where $s=1$ and $g^u$ is distinguished).

The DDL problem can be related to the LPHS problem (over a non-binary alphabet) by associating each party's DDL input $v$ with an LPHS input consisting of the sequence of group elements $x=(v,  gv, g^2v,\ldots, g^{n-1}v)$. Indeed, multiplication of the DDL input by $g$ corresponds to a cyclic shift of $x$ by one symbol to the left. We formalize this intuition by proving a general two-way relation between LPHS and DDL algorithms in the {\em generic group model}~\cite{S97}, where group elements are assigned random labels and the algorithm is only given oracle access to the group operation.\footnote{Specifically, in Section~\ref{sec:LPHS-DDL} we prove that any LPHS gives a DDL algorithm (in the generic group model) with similar parameters, while any DDL algorithm gives an LPHS with a negligible cost in error probability assuming $d = O(n^{1/4})$ and $n$ is prime.}

The applications we derive from the above connection give an unusual example for the usefulness of results on group-based cryptography in a post-quantum world. Indeed, all traditional applications of group-based cryptography are subject to quantum polynomial-time attacks using Shor's algorithm, and are thus useless in a post-quantum world. If scalable quantum computers become a reality, cryptosystems that are ``quantum broken'' will become obsolete. In contrast, sublinear-time classical algorithms will still be meaningful even in a post-quantum world.

\paragraph{\bf LPHS constructions.} The simple DDL algorithm from~\cite{BGI16} corresponds to a $(d,\delta)$-LPHS where $\delta=\tilde O(1/d)$. Another simple LPHS construction with similar parameters, implicit in a DDL algorithm from~\cite{BGI17}, makes a simple use of MinHash~\cite{BroderCFM00}: let $h(x)$ output the index $i$, $1\le i \le d$, that minimizes the value of a MinHash applied to a polylogarithmic-length substring of $x$ starting from $x_i$.

It is tempting to conjecture that the above simple LPHS constructions are near-optimal, in the sense that $\delta=o(1/d)$ is impossible. It turns out, however, that a quadratic improvement can be obtained from a recent optimal DDL algorithm due to Dinur et al.~\cite{DKK18}. Their \emph{Iterated Random Walk (IRW)} algorithm, whose self-contained description appears in
Appendix~\ref{sec:dkk}
is based on a carefully chosen sequence of random walks in the group. It can be viewed as a non-trivial extension of Pollard's classical ``kangaroo'' DL algorithm~\cite{Pollard78}, which runs in time $\tilde O(\sqrt{n})$ and has low space complexity. Applying the LPHS vs.\ DDL connection to the positive and negative results on DDL from~\cite{DKK18}, we get the following theorem.

\begin{theorem} [Near-optimal LPHS]\label{th-main}
There exist $(d,\delta)$-LPHS with: (1) $\delta= \tilde{O}(d^{-2})$ for $d\le\sqrt{n}$,
and (2) $\delta = n^{-\omega(1)}$ for $d = \tilde{O}(\sqrt{n})$. Furthermore, both ``$\delta= \tilde{O}(d^{-2})$'' in (1) and ``$d = \tilde{O}(n^{1/2})$'' in (2) are optimal up to polylogarithmic factors.
\end{theorem}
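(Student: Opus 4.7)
The plan is to invoke the two-way correspondence between LPHS and generic-group DDL established in Section~\ref{sec:LPHS-DDL}, and to route both the upper and lower bounds of Dinur, Keller, and Klein~\cite{DKK18} through it.

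For the upper bounds, I would feed the IRW algorithm of~\cite{DKK18} (reproduced in Appendix~\ref{sec:dkk}) into the DDL-to-LPHS direction. IRW solves DDL in $T$ expected generic-group queries with error $\tilde{O}(T^{-2})$ for $T \le \sqrt{n}$, dropping to $n^{-\omega(1)}$ once $T = \tilde{O}(\sqrt{n})$. The reduction converts this into an LPHS by using the $n$-bit input $x$ to emulate the random labels of the generic group: each adaptive query $g^u v$ of the DDL algorithm is answered by reading a short window of $x$ at the position encoding $u$. The result is a $(d, \delta)$-LPHS with $d$ equal to $T$ up to polylogarithmic factors and $\delta$ matching the DDL error up to a negligible additive term, giving both part~(1) and part~(2).

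For the optimality claims, I would run the correspondence in the reverse direction: any $(d, \delta)$-LPHS compiles into a generic-group DDL algorithm with $d$ queries and error $\delta$. Combined with the DDL lower bound of~\cite{DKK18}---namely, that any generic-group DDL algorithm with $T$ queries must err with probability $\tilde{\Omega}(T^{-2})$ for $T \le \sqrt{n}$, and that achieving $n^{-\omega(1)}$ error forces $T = \tilde{\Omega}(\sqrt{n})$---this yields $\delta = \tilde{\Omega}(d^{-2})$ in the regime of (1) and $d = \tilde{\Omega}(\sqrt{n})$ in the regime of (2), matching the upper bounds up to polylogarithmic factors.

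The main obstacle I anticipate is pushing the DDL-to-LPHS direction to the full range of $d$ in part~(2), since the formal version of that direction in Section~\ref{sec:LPHS-DDL} is stated under $d = O(n^{1/4})$ (and $n$ prime), while Theorem~\ref{th-main} needs $d = \tilde{O}(\sqrt{n})$. I would handle this by using shorter group-element labels in the generic-group emulation---keeping the total number of input bits read by the LPHS within $n$---so that the reduction extends throughout $d \le \sqrt{n}$, or, failing that, by giving a direct construction in the large-$d$ regime based on a windowed MinHash-style LPHS whose substring length is raised to $\mathrm{polylog}(n)$ in order to drive the failure probability down to $n^{-\omega(1)}$.
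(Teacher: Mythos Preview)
Your overall strategy matches the paper's: both parts of Theorem~\ref{th-main} are obtained by routing the upper and lower bounds of~\cite{DKK18} through the LPHS--DDL equivalence of Section~\ref{sec:LPHS-DDL}. However, the obstacle you flag is a misreading. The $d = O(n^{1/4})$ restriction applies only to Lemma~\ref{lem:qr} and Corollary~\ref{corr:conv}, which convert a \emph{general} generic-group DDL algorithm into a query-restricted one. The IRW algorithm of~\cite{DKK18} is already query-restricted, so the lossless equivalence of Lemma~\ref{lem:equiv} applies directly for all $d \le \sqrt{n}$; no label-shortening or MinHash fallback is needed. Separately, a single IRW run does not give $\delta = n^{-\omega(1)}$ at $d = \tilde O(\sqrt n)$ --- it only gives $\tilde O(1/n)$. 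For part~(2) the paper uses a distinct cyclic random-walk construction (Lemma~\ref{lem:crandomw}) that repeats the walk $\mathrm{polylog}(n)$ times with independently re-randomized step functions and jump offsets, so that each repetition succeeds with constant probability independently of the others. The optimality of $d = \tilde O(\sqrt n)$ in~(2) then follows directly from the $\delta = \Omega(1/d^2)$ lower bound of Theorem~\ref{thm:1dlower}, not from a separate DDL argument.
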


Interestingly, any sublinear-time LPHS must inherently make adaptive queries to its input.
Adaptive queries are unusual in the context of sublinear metric algorithms, but were previously used in sublinear algorithms for approximating edit distance~\cite{Saha14,ChakrabortyGK16,GoldenbergKS19}. A random walk technique was recently
used in~\cite{KociumakaS20} to obtain a sublinear-time embedding of edit distance to Hamming distance.

\paragraph{\bf Additional variants.} We prove similar bounds for the {\em worst-case} and {\em non-cyclic variants} of LPHS, which are motivated by the applications we discuss in Section~\ref{sec-introapps}. The result for the worst-case variant is obtained via a general reduction, and inevitably excludes a small set of inputs that are close to being periodic. The result for the non-cyclic case does not follow generically from the cyclic case (except when $d<n^{1/3}$), and requires a special analysis of the IRW algorithm~\cite{DKK18}. Also, in this case only  (1) holds, since the error probability of a non-cyclic LPHS must satisfy $\delta=\Omega(1/n)$ regardless of $d$. In fact, whereas $d = \sqrt{n}$ is the hardest case for Theorem~\ref{th-main} in the non-cyclic case (in that (1) for $d = \sqrt{n}$ easily implies (1) for smaller $d$), it is the easiest for the cyclic case (in that it is implied by the simpler algorithm of Pollard~\cite{Pollard78}).

In the context of {\em sketching} for shifts, the above results imply solutions that simultaneously achieve near-optimal sketch size of at most polylog$(n)$, near-optimal running time of $\tilde O(\sqrt{n})$, and negligible error probability, for both cyclic and non-cyclic shifts, and for arbitrary ``far-from-periodic'' inputs.

\subsubsection{Multidimensional LPHS}

Viewing LPHS as a location identifier, it is natural to consider a generalization to two dimensions and beyond. Indeed, a 2-dimensional (non-cyclic) LPHS can be useful for aligning or sequencing local views of a big 2-dimensional (digital or physical) object. A $k$-dimensional LPHS maps a $k$-dimensional matrix (with entries indexed by $\integers_n^k$) into a vector in $\integers_n^k$ so that (cyclically) shifting the input matrix by 1 in axis $i$ changes the output vector by the unit vector $e_i$, except with $\delta$ error probability. As before, this guarantees recovering an arbitrary shift vector with error probability that scales with the $\ell_1$ norm of the shift.

\paragraph{Upper bounds.} In the 2-dimensional case, the algorithm can be viewed as allowing two non-communicating parties, who are given points $(x,y)$ and $(x+\alpha,y+\beta)$ in the same random for unknown $\alpha,\beta \in \{0,1\}$, to maximize the probability of synchronizing at the same point, where only $d$ queries are allowed. A straightforward approach is to use a MinHash algorithm in which the parties take the minimal  hash value computed on values of a $d^{1/2} \times d^{1/2}$ matrix of elements beginning at the location of each party, resulting in a $(d,\delta)$-LPHS with $\delta = \tilde{O}(d^{-1/2})$. A better error bound of $\delta = \tilde{O}(d^{-2/3})$ can be obtained by combining the application of MinHash on one axis with the application of the aforementioned optimal IRW algorithm on the other axis.

We present three improved algorithms in Section~\ref{sec:2dim}. The simplest of those, with a bound of $\delta = \tilde{O}(d^{-4/5})$, is obtained by applying IRW on both axes.
A natural idea is to first synchronize on the column; then, synchronizing on the row is easy, using the 1-dimensional IRW algorithm. To synchronize on the column, the parties perform the 1-dimensional IRW algorithm with $d/d'$ `horizontal' steps, where the information used to determine each step is distilled from the column in which the current point resides by using an IRW algorithm with $d'$ `vertical' steps. The analysis in Section~\ref{sec:2dim} (Lemma~\ref{lem:2D45}) shows that the bound $\delta = \tilde{O}(d^{-4/5})$ is obtained for the parameter $d'=d^{3/5}$.

Our main upper bound is obtained by a more complex algorithm, which -- perhaps, surprisingly -- does not rely on the optimal 1-dimensional IRW algorithm at all. We prove:
\begin{theorem} \label{thm:2dupper}
For $n = \tilde{\Omega}(d)$,
there is a 2-dimensional
$(d,\delta)$-LPHS
with $\delta = \tilde{O}(d^{-7/8})$. There is also a non-cyclic 2-dimensional LPHS with the same parameters.
\end{theorem}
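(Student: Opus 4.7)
The plan is to construct a 2-dimensional $(d,\delta)$-LPHS with $\delta = \tilde{O}(d^{-7/8})$ via a multi-scale synchronization protocol in which each scale is itself a 2-dimensional MinHash-style step, rather than a 1-dimensional IRW subroutine. This distinction is crucial: as the discussion preceding the theorem notes, nesting IRW on the two axes saturates at $\tilde{O}(d^{-4/5})$ (Lemma~\ref{lem:2D45}), so any further improvement must replace the inner subroutine by something natively two-dimensional.

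First, I would specify the algorithm. Both parties, whose inputs differ by a cyclic shift of $(\alpha,\beta) \in \{0,1\}^2$, iterate over $T$ scales. At scale $t$, the current ``anchor'' position determines a rectangle $R_t$ of dimensions $a_t \times b_t$ based at that anchor; the party hashes each entry inside $R_t$ with an independent pseudorandom function $h_t$ and moves the anchor to the minimum-hash entry. The final output is the anchor after scale $T$. The total query cost is $d = \sum_{t=1}^{T} a_t b_t$, where $T = O(\log d)$.

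Second, I would bound the per-scale synchronization probability. Given that entering scale $t$ the two anchors differ by $(\Delta_x^{(t)},\Delta_y^{(t)})$, a standard MinHash analysis on overlapping rectangles shows that the probability of producing the same new anchor is at least $1 - \tilde{O}\!\left( \Delta_x^{(t)}/a_t + \Delta_y^{(t)}/b_t \right)$, and conditional on failure the new misalignment is typically small (of the same order as the old misalignment plus a constant). Chaining these, $\delta$ is bounded by a telescoping sum over scales. Choosing the rectangle dimensions $a_t, b_t$ (and their aspect ratios) to grow geometrically in a carefully balanced way, subject to the budget constraint $\sum_t a_t b_t \leq d$, the optimization yields $\delta = \tilde{O}(d^{-7/8})$.

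For the non-cyclic variant, padding the input with additional uniformly random bits along each axis beyond the original matrix preserves the analysis, since the queried padded bits are independent and uniform; the same algorithm and analysis thus give $\tilde{O}(d^{-7/8})$ in that setting as well.

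The main obstacle will be the inter-scale dependencies: the bits read at scale $t$, and hence the misalignment $(\Delta_x^{(t+1)},\Delta_y^{(t+1)})$, depend on the random choices of earlier scales. Controlling this requires inductively bounding $\mathbb{E}\bigl[\Delta_x^{(t)}/a_t + \Delta_y^{(t)}/b_t\bigr]$ after conditioning on the scale-$t$ history, using independent hash seeds $h_t$ to restore conditional uniformity of the MinHash at each scale. The delicate part is ensuring that the conditional-failure misalignment is tightly controlled (rather than blowing up across scales), since this is precisely what allows the exponent to improve from $4/5$ to $7/8$.
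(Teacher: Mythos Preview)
Your proposal has a genuine gap: the claim that ``conditional on failure the new misalignment is typically small (of the same order as the old misalignment plus a constant)'' is false for MinHash. When MinHash over an $a_t \times b_t$ rectangle fails, the overall minimum lies in one party's non-overlap strip; that party anchors there, while the other party anchors at its own minimum, an essentially unrelated point anywhere in its rectangle. The post-failure misalignment is therefore of order $(a_t,b_t)$, not of order the old misalignment. Feeding this into your recursion, the conditional failure probability at scale $t$ is $\approx a_{t-1}/a_t + b_{t-1}/b_t \ge 2\sqrt{a_{t-1}b_{t-1}/(a_t b_t)}$, and the telescoping product over scales is $\ge 2^T/\sqrt{a_T b_T} \ge 2/\sqrt{d}$. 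Iterated rectangular MinHash thus cannot beat the single-stage $O(d^{-1/2})$ bound of Lemma~\ref{lem:min-hash-pr}, let alone reach $d^{-7/8}$. (Intuitively, you have the direction of improvement reversed: MinHash is strictly \emph{less} query-efficient than a random walk for synchronizing across a large gap, which is why replacing IRW by MinHash cannot help.)

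The paper's \textsc{3-Stage-Hash} uses MinHash only at Stage~1. Stages~2 and~3 are deterministic \emph{walks} whose step sizes are read off the input: Stage~2 takes $\sqrt{d'}$ horizontal steps of size $\sim d'^{1/4}$, each determined by a vertical walk of $\sqrt{d'}$ steps of size $\sim d'^{1/4}$; Stage~3 takes $d'$ steps with unit horizontal increment and vertical increment uniform in $\{-d'^{3/8},\ldots,d'^{3/8}\}$. A walk of $m$ steps of size $L$ ranges over an interval of length $\sim mL$ using only $m$ queries, whereas MinHash over that interval would need $mL$ queries---this square-root savings is precisely what lets Stage~2 absorb a $\sqrt{d}$ misalignment with conditional failure $O(d^{-1/4})$ and Stage~3 absorb a $d^{3/4}$ misalignment with conditional failure $O(d^{-1/8})$, each on a budget of $d/3$ queries. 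The Stage~3 analysis in particular requires a martingale hitting-time bound (Lemma~\ref{lem:sparse_walks}), not overlap counting.
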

The algorithm we use to prove Theorem~\ref{thm:2dupper} consists of three stages. After each stage the parties either converge to the same location, in which case they stay synchronized to the end of the algorithm, or the two walks are within a bounded distance from each other. Stage 1 begins with a distance of at most $1$ on each axis, and is a straightforward application of the 2-dimensional MinHash-based algorithm. Stage 2 begins with a distance of at most $\sqrt{d}$ on each axis and uses an asymmetric deterministic walk that consists of $\sqrt{d}$ horizontal steps of size $\sim d^{1/4}$, where each step is pseudo-randomly determined by information distilled from a vertical walk of length $\sqrt{d}$ and step sizes $\sim d^{1/4}$. Stage 3 begins with a distance of at most $d^{3/4}$ on each axis and uses a different deterministic walk. This time, the horizontal steps are of size $1$, while the vertical steps are of size about $d^{3/8}$, and unlike all other steps, can be negative. The analysis of the algorithm relies on martingale techniques.

Finally, we present another 2-dimensional LPHS algorithm, which seems harder to analyze, but for which we conjecture that the error rate is at most $\tilde{O}(d^{-1})$. This bound is essentially the best one can hope for given the lower bound discussed below. The idea behind this algorithm is to not treat the axes separately but rather to perform a series of deterministic walks over $\mathbb{Z}^2$, with step sizes of about $d^{1/4}, d^{3/8}, d^{7/16},\ldots,\sqrt{d}/2$. Our experiments suggest that the error rate of this algorithm is indeed $\tilde{O}(d^{-1})$. However, the analysis (and especially deterministic resolution of cycles in the random walk) is quite involved, and settling our conjecture is left open for future work. Nevertheless, the heuristic algorithm can be used in cryptographic applications (such as packed homomorphic secret sharing which is described next) without compromising their security. Moreover, the worst-case scenario in which the error is larger than predicted by our experiments can be easily detected by applications.

\paragraph{Lower bound.}
We complement our positive results by proving the following lower bound, extending in a nontrivial way the lower-bound for 1-dimensional LPHS obtained from~\cite{DKK18} via the DDL connection.
\begin{theorem}\label{thm:kdlower}
For
$n=\Omega(d^{2/k})$,
any $k$-dimensional $(d,\delta)$-LPHS satisfies $\delta = \Omega(d^{-2/k})$.
\end{theorem}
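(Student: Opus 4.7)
The plan is to extend the 1-dimensional lower bound $\delta = \Omega(d^{-2})$---implicit in the DDL lower bound of~\cite{DKK18} via the LPHS-DDL connection established earlier in the paper---to $k$ dimensions by means of a Minkowski-sum-based projection argument that controls the effective 1-dimensional query complexity of a $k$-dimensional LPHS along a well-chosen axis.

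First I would associate to a $(d,\delta)$ $k$-dimensional LPHS $h$ the \emph{fingerprint} set $F(x) = A(x) - h(x) \subseteq \mathbb{Z}^k$, where $A(x)$ is the set of $d$ positions queried by $h$ on input $x$. The key geometric observation is that $F$ is approximately shift-invariant: by the LPHS shift-equivariance on correct inputs and a coupling of the algorithm's executions on $x$ and $x \ll s$, one has $F(x \ll s) = F(x)$ with probability $1 - O(\|s\|_1 \cdot \delta)$. Conditioning on correctness over a small cube of shifts thus yields a fixed fingerprint $F \subseteq \mathbb{Z}^k$ of size $d$.

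The Minkowski-sum step is then the following projection inequality. For each axis $i$, let $\pi_i(F) \subseteq \mathbb{Z}$ be the projection of $F$ onto the $i$-th coordinate; since $F \subseteq \pi_1(F) \times \cdots \times \pi_k(F)$, we have $d = |F| \leq \prod_{i=1}^k |\pi_i(F)|$, and by the pigeonhole principle there exists an axis $i^\ast$ with $|\pi_{i^\ast}(F)| \leq d^{1/k}$. Along this axis, $h$ restricted to shifts of the form $j \cdot e_{i^\ast}$ behaves as a 1-dimensional LPHS of \emph{effective} query complexity $O(d^{1/k})$, since only $d^{1/k}$ distinct $i^\ast$-coordinates are ever probed. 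The 1-dimensional lower bound then yields $\delta = \Omega((d^{1/k})^{-2}) = \Omega(d^{-2/k})$. The hypothesis $n = \Omega(d^{2/k})$ guarantees that the derived 1-dimensional LPHS lives over an alphabet large enough for the~\cite{DKK18} bound to be non-trivial.

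The main technical obstacle is formalizing the ``effective'' 1-dimensional LPHS obtained by projection: the inequality $|\pi_{i^\ast}(F)| \leq d^{1/k}$ tells us only that the queries hit at most $d^{1/k}$ distinct $i^\ast$-coordinates, while each coordinate can correspond to many $k$-dimensional queries carrying non-trivial bits. I would handle this by constructing a canonical 1-dimensional sub-algorithm that retains only the $i^\ast$-coordinate information of each query while abstracting away the remaining $k-1$ coordinates---arguing that these perpendicular bits, being shift-invariant along $e_{i^\ast}$, do not aid in distinguishing shifts. A secondary subtlety is the shift-invariance of $F$ in the presence of adaptive queries: coupling the algorithm's executions on $x$ and $x \ll s$ requires care, but because divergence events are in bijection with LPHS failures, they are absorbed into the $\delta$ error budget at the cost of constants.
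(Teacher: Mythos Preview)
Your proposal has a genuine gap at the fingerprint step. You assert that $F(x \ll s) = F(x)$ except with probability $O(\|s\|_1\delta)$, and that ``divergence events are in bijection with LPHS failures.'' This is false. Correctness says only $h(x \ll s) = h(x) - s$; for $F(x)=A(x)-h(x)$ to be shift-invariant you would additionally need $A(x \ll s) = A(x) - s$, i.e.\ shift-equivariance of the \emph{query positions} themselves. But the first query of $h$ is to a fixed index $q_1$ independent of the input, so on $x$ it reads $x[q_1]$ while on $x \ll s$ it reads $x[q_1+s]$, and the subsequent adaptive trajectories diverge from the outset. The non-adaptive min-hash LPHS already witnesses the failure: there $A(x)=\{0,\ldots,d-1\}$ is constant, so on every \emph{correct} pair one has $F(x\ll s)=A(x)-(h(x)-s)=F(x)+s\neq F(x)$. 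Thus divergence of query sets is the rule, not a failure event, and there is nothing to couple since both executions run the same deterministic procedure on different inputs. Even if one somehow extracted a typical fingerprint, the projection-to-1D step inherits the same trouble for adaptive algorithms: the small-projection axis $i^\ast$ would depend on $x$, so no single axis can be fixed in advance to define a genuine 1-dimensional LPHS with $d^{1/k}$ queries.

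The paper's argument is structurally different and does not pass through the 1-dimensional bound. It is a direct birthday argument: draw a uniformly random shift $r$ from a box of side $R\approx (2d)^{2/k}$. Then (i) by the shift-extension lemma the LPHS errs on shift $r$ with probability at most $R\delta$; (ii) the two executions' query sets $P_1,P_2\subseteq\mathbb{Z}_n^k$ collide---meaning $r\in P_1-P_2$, a Minkowski difference of cardinality at most $|P_1|\cdot|P_2|\le d^2$---with probability at most $d^2/R^k\approx 1/4$; and (iii) conditioned on no collision the two executions read disjoint, hence independent, symbols and can output the correct shift with probability at most $1/R^k$. Combining, $1 - R\delta \le (d^2+1)/R^k$, which forces $\delta = \Omega(R^{-1}) = \Omega(d^{-2/k})$. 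The ``Minkowski'' ingredient the paper advertises is the bound $|P_1-P_2|\le d^2$ controlling the collision probability, not a projection of a fixed footprint.
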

The intuition is related to the birthday bound. A $k$-dimensional box with edge length $d^{2/k}$ contains $d^2$ points.
If the $k$-dimensional shift is uniform within this box, we expect the $d$ queries of the two parties not to
intersect with constant probability, implying that $\delta = \Omega(1)$.
Given this, the proof for
smaller shifts follows by a union bound.
While the intuition is simple, is it not clear how to directly apply the birthday bound, and the formal proof
is based on an argument involving Minkowski sums and differences of sets in $\mathbb{R}^k$.


\subsubsection{Applications}
\label{sec-introapps}
We present several cryptographic and algorithmic applications that motivate different variants of LPHS, exploiting both the functionality and the sublinearity feature. All of the applications can benefit from our 2-dimensional LPHS constructions, and most require the non-cyclic, worst-case variant. See 
Section~\ref{sec:applications}
for a taxonomy of the LPHS variants required by different applications.

\paragraph{Packed homomorphic secret sharing.}
We demonstrate a cryptographic application of \emph{$k$-dimensional} LPHS in trading computation for communication in group-based homomorphic secret sharing (HSS). In a nutshell, we use LPHS to further improve the succinctness of the most succinct approach for simple ``homomorphic'' computations on encrypted data, by packing 2 or more plaintexts into a single ciphertext. Compared to competing approaches (see, e.g.,~\cite{AkaviaSWY19,Paillier1} for recent examples), group-based packed HSS can provide much better succinctness and client efficiency. The key technical idea is to use a non-cyclic $k$-dimensional LPHS for implementing a $k$-dimensional variant of DDL, where $k$ independent group generators are used for encoding $k$ small integers by a single group element, and where multiplication by each generator is viewed as a (non-cyclic) shift in the corresponding direction. This $k$-dimensional generalization of DDL can be potentially useful for other  recent applications of DDL that are unrelated to HSS~\cite{DottlingGIMMO19,trapdoor2,BrakerskiKM20}.  See 
Section~\ref{sec-applicationhss}
for a detailed discussion of this cryptographic application of LPHS along with the relevant background.

\paragraph{Location-sensitive encryption.} We apply LPHS to obtain a sublinear-time solution for {\em location-sensitive encryption} (LSE), allowing one to generate a public ciphertext that can only be decrypted by someone in their (physical or virtual) neighborhood. Here proximity is defined as having significantly overlapping views, and security should be guaranteed as long as a non-overlapping view is sufficiently unpredictable.  The above goal can be reduced to realizing a sublinear-time computable fuzzy extractor~\cite{DORS}\footnote{There are two differences from the  standard notion of fuzzy extractors: the ``distance'' is not a strict metric, and the notion of unpredictability needs to ensure that the source is far from periodic with high probability. }  for shift distance.
Obtaining such fuzzy extractors from LPHS constructions requires an understanding of their behavior on entropic sources. It turns out that even for high-entropy sources, an LPHS provides no unpredictability guarantees. We get around this problem by defining a hash function that combines the output of an LPHS with a local function of the source. Using this approach, we obtain a sublinear-time LSE whose security holds for a broad class of mildly unpredictable sources.
See Section~\ref{sec-lse}
for the LSE application of LPHS.

\paragraph{Algorithmic applications.} As discussed above, algorithmic applications of LPHS follow from the vast literature on sketching, locality-preserving and locality-sensitive hashing, and metric embeddings. Indeed, our different LPHS flavors can be roughly viewed as probabilistic isometric embeddings of certain shift metrics into a Euclidean space. Thus, for example, an LSH for the same shift metric can potentially follow by concatenating the LPHS with an LSH from the literature. However, some care should be taken in applying this high-level approach. One issue is the average-case nature of LPHS, which makes the failure probability input-dependent. We get around this via a worst-case to average-case reduction that restricts the input space to ``non-pathological'' inputs that are far from periodic. Another issue is that LPHS provides no explicit guarantees for inputs that are too far apart. We get around this by using the fact that an LPHS must have a well-spread output distribution on a random input. As representative examples, we demonstrate how LPHS can be applied in the contexts of communication complexity and LSH-based near-neighbor data structures for shifts.
The algorithmic applications of LPHS are discussed in 
Section~\ref{sec-algorithmic}.

\paragraph{\bf Open questions.}
Our work leaves several open questions.
The main question, on which we make partial progress, is obtaining optimal parameters for  $k$-dimensional LPHS.
Other questions concern the optimality of the LPHS-based approach to sketching.
A negative result from~\cite{DKK18}, which can be used to rule out sublinear-time LPHS with non-adaptive queries, in fact holds even for sketching.
Do LPHS-based sketches also provide an optimal tradeoff between sketch size and error probability?



\paragraph{\bf Organization.}
In Section~\ref{sec:prelims} we introduce necessary preliminaries and notation, including the definition of Locality-Preserving Hash functions for Shifts (LPHS), and simple properties of and relations between LPHS variants. 
In Section~\ref{sec:LPHS-DDL}, we present the general two-way connection between LPHS and algorithms for distributed discrete logarithm in the generic group model. 
In Section~\ref{sec:2dim} we provide our results on multidimensional LPHS. 
Section~\ref{sec:applications} contains applications of LPHS. And, for completeness, in Appendix~\ref{app:dkk}, we provide LPHS results based on the Iterative Random Walk algorithms of Dinur, Keller, and Klein~\cite{DKK18}.

\newcommand{\Z}{\mathbb{Z}}

\section{Preliminaries}
\label{sec:prelims}

We denote by $\mathbb{Z}_n$ the additive group of integers modulo $n$.
We will typically consider strings of length $n$ over an alphabet $\Sigma_b = \zo^{b}$, indexing string entries by $i\in\Z_n$.
We will use the notation $x^{(b)}$ when we want to make the alphabet size explicit. When the alphabet is binary or when $b$ is clear from the context, we will typically omit the superscript and use the notation $x$.
For $x^{(b)}\in\Sigma_b^n$ we denote by $x^{(b)}[i]$ the $i$'th symbol of $x^{(b)}$, for $i\in\Z_n$.

We denote by $x^{(b)} \ll r$ the cyclic rotation of $x^{(b)}$ by $r$ symbols to the left, namely the string $y^{(b)}$ defined by $y^{(b)}[i]=x^{(b)}[i+r]$ with addition modulo $n$. We will also consider a {\em non-cyclic shift}, denoted by $x^{(b)} \lll r$, where the $r$ leftmost symbols of $x$ are chopped and $r$ {\em random} symbols are added on the right. Note that unlike the cyclic shift operator, which is deterministic, the non-cyclic version is randomized.
We use $\Delta(x,y)$ to denote the Hamming distance between $x$ and $y$, namely the number of symbols $i$ in which $x[i]$ and $y[i]$ differ.

We use the notation $x^{(2,b)}$ to denote a 2-dimensional string (i.e., matrix) over alphabet $\Sigma_b$ and denote by $x^{(2,b)}[i,j]$ its $(i,j)$ entry. We denote by $x^{(2,b)} \ll (r_1,r_2)$ the cyclic rotation of $y^{(2,b)}$ by $r_1$ symbols to the left on the first axis and $r_2$ symbols to the left on the second axis. That is, $y=x^{(2,b)} \ll (r_1,r_2)$ is defined by $y[i_1,i_2]=x^{(2,b)}[i_1+r_1, i_2+r_2]$, where addition is modulo $n$. We will also consider the natural $k$-dimensional generalization $x^{(k,b)} \ll (r_1,r_2,\ldots,r_k)$ and its non-cyclic variant $x^{(k,b)} \lll (r_1,r_2,\ldots,r_k)$.


\subsection{Locality-Preserving Hash Functions for Shifts}


We now define our main notion of LPHS and some of its useful variants.

\begin{definition}[LPHS: main variants] \label{def-lphs}
Let $\func{h}{\Sigma_b^n}{\mathbb{Z}_n}$ be a function. We say that $h$ is a (cyclic) {\em $(d,\delta)$-LPHS} if $h$ can be computed by making $d$ {\em adaptive} queries (of the form $x[i]$) to an input $x\in\Sigma_b^n$ and moreover
$\pr_{x\in_R\Sigma_b^n} \lbs h(x) \neq h(x \ll 1) + 1 \rbs \leq \delta$.
\medskip

We will consider the following modifiers (that can be combined in a natural way):
\begin{itemize}
\item {\bf Non-cyclic} LPHS: replace $\mathbb{Z}_n$ by $\mathbb{Z}$ and $x \ll 1$ by
$x \lll 1$;
\item {\bf $k$-dimensional} LPHS: let $x$ be a random $k$-dimensional string  $x\in\Sigma_b^{\Z^k_n}$ and $\func{h}{\Sigma_b^{\Z^k_n}}{\mathbb{Z}_n^k}$. We require that $\pr_{x} \lbs h(x) \neq h(x \ll e_i) + e_i \rbs \leq \delta$ for every {\em unit vector} $e_i\in\Z_n^k$.
\end{itemize}
We will sometimes make more parameters explicit in the notation. For instance, an $(n,b,d,\delta)$-LPHS is
a
$(d,\delta)$-LPHS $\func{h}{\Sigma_b^n}{\mathbb{Z}_n}$.
\end{definition}

\begin{remark}[On computational complexity]
A $(d,\delta)$-LPHS $\func{h}{\Sigma_b^n}{\mathbb{Z}_n}$ can be viewed as a {\em depth-$d$ decision tree} over $n$ input variables taking values from the alphabet $\Sigma_b$. In all of our positive results, $h$ is {\em semi-explicit} in the sense that it can be realized by a {\em randomized} polynomial-time algorithm having oracle access to the input $x$. (In fact, our algorithms can be implemented in probabilistic  $\tilde O(d)$ time.) Here the same randomness for $h$ is used in the two invocations $h(x)$ and $h(x \ll 1)$. Alternatively, our positive results imply a deterministic $h$ in a non-uniform setting. Our negative results apply to the existence of $h$ with the given parameters, irrespective of the computational complexity of generating it.
\end{remark}

\begin{remark}[Worst-case vs.\ average-case LPHS] Our default notions of LPHS assume a uniformly random input $x$. While this suffices for some  applications, a worst-case notion of LPHS is more desirable for most applications. Since shift detection is impossible for highly periodic inputs (such as the all-0 string), or even for approximately periodic in the context of sublinear-time algorithms, the notion of worst-case LPHS is restricted to a set of ``typical'' inputs that are far from being periodic. Our notion of ``typical'' is very broad and arguably captures essentially all naturally occurring inputs in our motivating applications. In
Section~\ref{sec-worstcase} we present a simple reduction of this worst-case flavor of LPHS to our default notion of LPHS for random inputs. This applies both to the cyclic and non-cyclic variants. The reduction only incurs a polylogarithmic loss in the parameters. Note that, unlike our main notion of LPHS, here it is inherent that the function $h$ be randomized. A useful related byproduct of the worst-case variant is that the failure events of two independently chosen $h_1$ and $h_2$ are independent. This is useful for algorithmic applications of LPHS.
\end{remark}

For some applications, we will be interested in the following additional LPHS variants.
\begin{definition}[LPHS: additional variants] \label{def-morelphs}
We consider the following additional variants of the main notion of LPHS from Definition~\ref{def-lphs}.
\begin{itemize}
\item {\bf Shift-bounded} LPHS with shift bound $R$: requires that for every $1\le r\le R$, we have
\[\pr_{x\in_R\Sigma_b^n} \lbs h(x) \neq h(x \ll r) + r \rbs \leq \delta,\]
and similarly for the non-cyclic case.
\item {\bf Las Vegas} LPHS: allow $h$ to output $\bot$ with probability $\le \delta$, and require that $h$ never fail in the event that neither of its two invocations outputs $\bot$.
\end{itemize}
\end{definition}

A generic way of obtaining a Las Vegas LPHS $h'$ from an LPHS $h$ is to invoke $h$ on both $x$ and $x'=x\ll 1$ and output $\bot$ if $h(x)\neq h(x')+1$. However, an extension of this to a {\em shift-bounded} LPHS is inefficient, since it requires invoking $h$ on $x\ll r$ for every $0\le r\le R$. In Appendix~\ref{sec:LasVegas} (Lemma~\ref{lem-lv})
we show that an optimal shift-bounded LPHS admits a Las Vegas variant with better parameters.

\subsection{Simple LPHS Properties and Reductions}
\label{sec:relations}

In this section we prove simple properties of LPHS and reductions between different LPHS variants that will be useful in what follows.

The following lemma is immediate.
\begin{lemma}[Conversion between non-cyclic and cyclic LPHS] \label{lem:convert}
Any $k$-dimensional non-cyclic $(n,b,d,\delta)$-LPHS that queries the last symbol $(n-1) \cdot e_i\in\Z_n^k$ in every dimension with probability 0 gives a cyclic $(n,b,d,\,\delta)$-LPHS and vice versa.
\end{lemma}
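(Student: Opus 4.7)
The proof is a coupling argument. Fix a dimension $i\in[k]$, and compare the failure events $\{h(x)\neq h(x\ll e_i)+e_i\}$ and $\{h(x)\neq h(x\lll e_i)+e_i\}$ under a uniformly random input $x\in\Sigma_b^{\Z_n^k}$.

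First, I would record the key structural fact: for any $x$, the strings $x\ll e_i$ and $x\lll e_i$ agree at every coordinate $j\in\Z_n^k$ with $j_i\neq n-1$, and may disagree only on the ``last slice'' $\{j:j_i=n-1\}$; on this slice the cyclic version wraps around to $x[j-(n-1)e_i]$, while the non-cyclic version is populated with fresh uniform symbols. Moreover, when $x$ is uniform the shifted strings are marginally uniform in both cases, so the two experiments differ only in their joint distribution with $x$.

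Next, I would couple the two experiments by using the same $x$ and drawing the fresh random symbols of $x\lll e_i$ to coincide with the corresponding entries of $x\ll e_i$; uniformity makes this a valid coupling. Under this coupling, every query issued by the decision tree computing $h$ returns the same symbol in the two experiments, except possibly when a query lands on an index whose $i$-th coordinate equals $n-1$. By the hypothesis that every such ``forbidden'' coordinate, denoted in the lemma by $(n-1)\cdot e_i$, is queried with probability $0$, such queries occur only on an event of probability $0$. It follows that the pairs $(h(x),h(x\ll e_i))$ and $(h(x),h(x\lll e_i))$ are almost surely equal, whence the cyclic and non-cyclic failure probabilities coincide. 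Applying this argument for every $i\in[k]$ proves the forward direction, and the converse follows from exactly the same coupling.

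The main subtlety I would need to handle carefully is formalizing ``queried with probability $0$'' for an adaptive and possibly randomized $h$: I would state it as saying that the set of pairs (input, internal coins) on which a forbidden coordinate is ever read has measure zero under the uniform distribution on $x$ together with $h$'s coin distribution. Once that is pinned down, the coupling and the equality of error events go through routinely, with no loss in the parameters $d$ or $\delta$.
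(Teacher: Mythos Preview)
Your proposal is correct and is exactly the natural argument; the paper simply declares the lemma ``immediate'' and gives no proof. The core observation you make---that $x\ll e_i$ and $x\lll e_i$ agree off the slice $\{j:j_i=n-1\}$, so under the query hypothesis $h$ produces identical outputs on the two shifted inputs and the cyclic and non-cyclic failure events coincide pointwise---is precisely what is needed.
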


The following lemma shows the usefulness of LPHS for detecting arbitrary shift amounts.
\begin{lemma}[Bigger shifts]\label{lem:shift}
For any positive integer $r$, an $(n,b,d,\delta)$-LPHS $h$ satisfies
$$\pr_{x^{(b)}} \lbs h(x^{(b)}) \neq h(x^{(b)} \ll r) + r \rbs \leq r \cdot \delta.$$ The same holds for non-cyclic LPHS, where $x^{(b)} \ll r$ is replaced by the string $x^{(b)} \lll r$ obtained from $x^{(b)}$ by chopping the $r$ left-most symbols and adding $r$ uniformly random symbols to the right. In the $k$-dimensional case, we have
$$\pr_{x^{(k,b)}} \lbs h(x^{(k,b)}) \neq h(x^{(k,b)} \ll (r_1,\ldots,r_k)) + (r_1,\ldots,r_k) \rbs \leq \left(\sum_{i=1}^k r_i\right) \cdot \delta.$$
\end{lemma}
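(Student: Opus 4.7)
The plan is to use a standard hybrid/telescoping argument combined with a union bound over $r$ intermediate shifts, using the single-shift LPHS guarantee $r$ times.

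For the cyclic 1-dimensional case, I would consider the sequence $x_0, x_1, \ldots, x_r$ defined by $x_i = x \ll i$. Since $x$ is uniform over $\Sigma_b^n$ and cyclic shift is a bijection, each $x_i$ is individually uniformly distributed. The LPHS guarantee applied to $x_i$ (i.e., to $x_i$ and $x_i \ll 1 = x_{i+1}$) yields $\Pr[h(x_i) \neq h(x_{i+1}) + 1] \leq \delta$. By a union bound over $i = 0, \ldots, r-1$, the probability that there exists some $i$ with $h(x_i) \neq h(x_{i+1}) + 1$ is at most $r\delta$. If this bad event does not occur, then telescoping gives $h(x_r) = h(x_0) + r$, i.e., $h(x \ll r) = h(x) + r$, as needed.

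For the non-cyclic case I would use the same approach, but couple the process appropriately. Define $x_0 = x$ and let $x_{i+1}$ be obtained from $x_i$ by chopping the leftmost symbol and appending a fresh uniform symbol on the right. Then $x_r$ has exactly the distribution of $x \lll r$ (and this can be used as the definition of the joint coupling). Because a single $\lll 1$ step preserves the uniform distribution on $\Sigma_b^n$ (a uniform prefix of length $n-1$ paired with an independent uniform symbol is still uniform), each $x_i$ is marginally uniform. The LPHS guarantee then applies to each adjacent pair $(x_i, x_{i+1})$ individually, giving per-step error $\delta$, and the union bound plus telescoping completes the argument exactly as before.

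For the $k$-dimensional case, I would order the $\sum_{i=1}^k r_i$ unit-vector shifts into a single path from $x$ to $x \ll (r_1, \ldots, r_k)$: first apply $e_1$ a total of $r_1$ times, then $e_2$ a total of $r_2$ times, and so on through $e_k$. This produces $\sum_i r_i + 1$ intermediate matrices, each uniformly distributed by the same bijection/coupling argument (cyclic or non-cyclic). The $k$-dimensional LPHS property bounds each adjacent-pair error by $\delta$, so a union bound gives total error at most $(\sum_{i=1}^k r_i)\delta$, and on the good event telescoping yields $h(x \ll (r_1,\ldots,r_k)) = h(x) + (r_1,\ldots,r_k)$.

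The only mild subtlety — not really an obstacle — is the non-cyclic coupling: one must verify that the natural $r$-fold iteration of $\lll 1$ is distributionally identical to a single $\lll r$, and that each intermediate string is marginally uniform so that the single-shift LPHS guarantee can legitimately be invoked at each step. Once this coupling is in place, the proof is entirely a union bound plus telescoping and requires no new ideas.
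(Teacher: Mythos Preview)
Your proposal is correct and matches the paper's proof essentially verbatim: the paper also telescopes through the intermediate shifts $x \ll i$ for $i=0,\ldots,r-1$, applies a union bound using that each $x \ll i$ is uniform, and remarks that the non-cyclic and $k$-dimensional cases are similar. (One tiny slip: your telescoping should give $h(x_0)=h(x_r)+r$, not $h(x_r)=h(x_0)+r$.)
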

\begin{proof} We give the proof for the cyclic 1-dimensional case for  simplicity. The proof for the non-cyclic and $k$-dimensional cases is similar.
\begin{align*}
\pr_{x^{(b)}} \lbs h(x^{(b)}) \neq h(x^{(b)} \ll r) + r \rbs \leq \\
\pr_{x^{(b)}}
\lbs \exists i \in [0,r-1] : h(x^{(b)} \ll i) \neq h(x^{(b)} \ll i+1) + 1 \rbs \leq \\
\sum_{i=0}^{r-1} \pr_{x^{(b)}}
 \lbs h(x^{(b)}) \neq h(x^{(b)} \ll 1) + 1 \rbs \leq r \cdot \delta.
\end{align*}
\end{proof}
\medskip

As a simple corollary, we get a lower bound on the error probability of non-cyclic LPHS.
\begin{claim}[Error bound for non-cyclic LPHS]\label{cl:nc}
For any non-cyclic $(n,b,d,\delta)$-LPHS we have $\delta\ge 1/(2n)$.
\end{claim}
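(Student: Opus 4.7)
The plan is to instantiate Lemma~\ref{lem:shift} with the maximal shift amount $r=n$ and exploit the fact that a non-cyclic shift by $n$ symbols replaces \emph{every} input symbol with a fresh uniform draw, so that $x$ and $x\lll n$ are independent and identically distributed.

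First I would apply Lemma~\ref{lem:shift} with $r=n$ to obtain
\[
\Pr_{x,\,x\lll n}\bigl[h(x)=h(x\lll n)+n\bigr]\;\ge\;1-n\delta .
\]
Next I would observe that, by the definition of $\lll n$, the random strings $x$ and $x\lll n$ are independent and each uniform on $\Sigma_b^n$. Hence $Y_1:=h(x)$ and $Y_2:=h(x\lll n)$ are i.i.d.\ random variables taking values in $\mathbb{Z}$. Let $f(k):=\Pr[Y_1-Y_2=k]$.

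The key step is a symmetry argument: because $Y_1,Y_2$ are i.i.d., the distribution of $Y_1-Y_2$ is symmetric about $0$, so $f(n)=f(-n)$. Since $n\ge 1$ the events $\{Y_1-Y_2=n\}$ and $\{Y_1-Y_2=-n\}$ are disjoint, giving
\[
2f(n)\;=\;f(n)+f(-n)\;\le\;1,
\]
so $f(n)\le 1/2$.

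Combining the two bounds yields
\[
1-n\delta\;\le\;\Pr[Y_1=Y_2+n]\;=\;f(n)\;\le\;\tfrac12,
\]
and hence $\delta\ge 1/(2n)$, as claimed. There is no real obstacle here; the only subtlety worth flagging is that the argument crucially uses the non-cyclic shift (a cyclic shift by $n$ is the identity, and indeed the analogous claim fails in the cyclic case), which is exactly why the statement is restricted to non-cyclic LPHS.
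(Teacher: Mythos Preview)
Your proof is correct and follows essentially the same approach as the paper: apply Lemma~\ref{lem:shift} with $r=n$, observe that $x$ and $x\lll n$ are independent uniform strings, and use the symmetry of the difference $h(x)-h(x\lll n)$ to bound the success probability by $1/2$. The paper phrases it as a proof by contradiction rather than a direct inequality chain, but the substance is identical.
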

\begin{proof} Suppose towards contradiction that $\delta<1/(2n)$. Applying Lemma~\ref{lem:shift} with $r=n$ we get that for two random and independent strings $x,x'\in_R\Sigma_b^n$ we have $\Pr_{x,x'}[h(x)=h(x')+n]>1/2$, where addition is taken over the integers. By symmetry, we also have $\Pr_{x,x'}[h(x')=h(x)+n]>1/2$. Since the two events cannot co-occur, we get a contradiction.
\end{proof}

The above lower bound does {\em not} apply to cyclic LPHS. Indeed, when $d>n^{1/2}$ one can use discrete logarithm techniques to get cyclic LPHS in which $\delta$ is negligible in $n$. The proof of Claim~\ref{cl:nc} extends naturally to the $k$-dimensional case, where we have $\delta\ge 1/(2kn)$.
\medskip

The following lemma shows a simple way to trade alphabet size for input length.
\begin{lemma}[Bigger alphabet]\label{lem:bigger-alphabet}
For any positive integer $r$ such that $n/r$ is an integer, if there exists an $(n,b,d,\delta)$-LPHS, then there exists an $(n/r,b \cdot r,d,r \cdot \delta)$-LPHS.
\end{lemma}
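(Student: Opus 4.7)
The plan is to construct the required $(n/r, br, d, r\delta)$-LPHS $h' \colon \Sigma_{br}^{n/r} \to \Z_{n/r}$ directly from the given LPHS $h \colon \Sigma_b^n \to \Z_n$ via a natural ``unpack and divide'' reduction, and then use Lemma~\ref{lem:shift} (with shift $r$) to control the error.

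First I would define the unpacking map: for $y \in \Sigma_{br}^{n/r}$, let $x = \mathrm{unpack}(y) \in \Sigma_b^n$ be the string whose symbol $x[i]$ is the $(i \bmod r)$-th $b$-bit chunk of $y[\lfloor i/r \rfloor]$. The hash $h'(y)$ is then computed by simulating $h$ on $x$, answering each of the $d$ queries to some $x[i]$ by a single query to $y[\lfloor i/r \rfloor]$, and outputting $h'(y) := \lfloor h(x)/r \rfloor \in \Z_{n/r}$. The query complexity is clearly at most $d$.

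Next I would verify correctness. If $y$ is uniform over $\Sigma_{br}^{n/r}$, then $x = \mathrm{unpack}(y)$ is uniform over $\Sigma_b^n$, and unpacking commutes with shifts in the sense that $\mathrm{unpack}(y \ll 1) = \mathrm{unpack}(y) \ll r$ (and similarly for $\lll 1$ versus $\lll r$ in the non-cyclic setting). Applying Lemma~\ref{lem:shift} with shift amount $r$, except with probability at most $r\delta$ we have $h(x) \equiv h(x \ll r) + r \pmod{n}$.

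The only remaining point, which is the mildly technical step of the proof, is to check that the identity $a \equiv b + r \pmod n$ implies $\lfloor a/r \rfloor \equiv \lfloor b/r \rfloor + 1 \pmod{n/r}$ for $a, b \in \{0, \dots, n-1\}$. I would handle this by a two-case analysis on whether $a \ge r$. If $a \ge r$, then $b = a - r$ as integers, so $\lfloor b/r \rfloor = \lfloor a/r \rfloor - 1$ directly. If $a < r$, then $b = a + n - r$, which satisfies $n - r \le b < n$, so $\lfloor b/r \rfloor = n/r - 1$ while $\lfloor a/r \rfloor = 0$, and the congruence holds modulo $n/r$. The same reduction works verbatim in the non-cyclic case, since Lemma~\ref{lem:shift} is stated for both variants. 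No novel idea is required; the small obstacle is simply keeping the modular arithmetic straight when $a$ is close to $0$ and $b$ wraps around.
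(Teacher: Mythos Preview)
Your proposal is correct and follows essentially the same approach as the paper: unpack the large-alphabet input into a length-$n$ string over $\Sigma_b$, simulate $h$ on it, and output $\lfloor h(x)/r \rfloor$, invoking Lemma~\ref{lem:shift} with shift $r$ to bound the error. The paper's proof is terser and simply asserts the inequality $\Pr[h_1(x^{(br)}) \neq h_1(x^{(br)} \ll 1) + 1] \le \Pr[h_2(x^{(b)}) \neq h_2(x^{(b)} \ll r) + r]$ without spelling out the floor-division case analysis that you carry out; your explicit verification of this step is a nice addition.
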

\begin{proof}
On input $x^{(b \cdot r)}$, the $(n,b \cdot r,d,\delta \cdot k)$-LPHS (denoted by $h_1$) runs the $(n,b,d,\delta)$-LPHS (denoted by $h_2$) on $x^{(b)}$, and outputs $\lfloor h_2(x^{b})/r \rfloor$. Clearly, each query of $h_2$ to $x^{(b)}$ can be answered by a single query of $h_2$ to $x^{(b \cdot r)}$. Based on Lemma~\ref{lem:shift},
\begin{align*}
\pr_{x^{(b \cdot r)}} \lbs h_1(x^{(b \cdot r)}) \neq h_1(x^{(b \cdot r)} \ll 1) + 1 \rbs \leq
\pr_{x^{(b)}} \lbs h_2(x^{(b)}) \neq h_2(x^{(b)} \ll r) + r \rbs \leq r \cdot \delta.
\end{align*}
\end{proof}

The following lemma can be used to decrease the alphabet size of LPHS. In particular, it is useful for obtaining near-optimal LPHS over a {\em binary} alphabet from LPHS over a bigger alphabet $\Sigma_b$, such as the one derived directly from~\cite{DKK18}.
\begin{lemma}[Smaller alphabet]\label{lem:smaller-alphabet}
For any positive integers $r,k$, if there exists an $(n,b \cdot r,d,\delta)$-LPHS, then there exists an $(n,b,d \cdot k,\delta + 4 \cdot d^2/2^{b \cdot k})$-LPHS.
\end{lemma}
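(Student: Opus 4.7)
The plan is to simulate the given $(n, b\cdot r, d,\delta)$-LPHS $h_2$ by applying it to a virtual larger-alphabet string derived from the input $x\in\Sigma_b^n$ via sliding-window hashing. Fix (as auxiliary randomness of the constructed $h_1$) a uniformly random $h'\colon \Sigma_b^k\to\Sigma_{b\cdot r}$ and define $\phi(x)\in\Sigma_{b\cdot r}^n$ by
\[
\phi(x)[i] \;=\; h'\!\bigl(x[i],x[i+1],\ldots,x[i+k-1]\bigr), \qquad i\in\mathbb{Z}_n \text{ (indices taken cyclically)},
\]
and set $h_1(x):=h_2(\phi(x))$. Each query to $\phi(x)$ is answered by reading the corresponding length-$k$ window of $x$, so $h_1$ makes $d\cdot k$ queries, as required. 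Shift-preservation is immediate from the sliding-window form: $\phi(x\ll 1)=\phi(x)\ll 1$, so correctness of $h_1$ on $x$ coincides with correctness of $h_2$ on $\phi(x)$ (the non-cyclic case is handled analogously, or via Lemma~\ref{lem:convert}).

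It remains to compare $h_2$'s behavior on $\phi(x)$ with its behavior on a truly uniform $y\in\Sigma_{b\cdot r}^n$. Let $j_1,\ldots,j_{2d}$ be the positions $h_2$ queries across its two invocations (on $\phi(x)$ and $\phi(x)\ll 1$), and set $Q_{j_m}:=(x[j_m],\ldots,x[j_m+k-1])$. If every pair of distinct queried positions $j_m\neq j_{m'}$ satisfies $Q_{j_m}\neq Q_{j_{m'}}$, then the returned values $h'(Q_{j_m})$ are evaluations of the random function $h'$ at distinct inputs, hence independent uniform over $\Sigma_{b\cdot r}$. In this ``no collision'' event the joint distribution of $h_2$'s view matches the uniform-$y$ case exactly, so $h_2$'s error contribution is at most $\delta$.

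For the collision bound, for any two fixed distinct positions $j,j'\in\mathbb{Z}_n$, the event $Q_j=Q_{j'}$ imposes $k$ equations of the form $x[\cdot]=x[\cdot+t]$ (with $t=j'-j$) on the entries of $x$ in the union of the two length-$k$ windows. Counting satisfying assignments via the equivalence-class decomposition induced by these equations shows $\Pr_x[Q_j=Q_{j'}]\leq 2^{-bk}$ (with equality in the typical non-degenerate regime $n\gtrsim k$). Union-bounding over the $\binom{2d}{2}\leq 2d^2$ pairs of queries gives a collision probability of at most $2d^2/2^{bk}$; the extra factor of $2$ in the stated $4d^2/2^{bk}$ absorbs edge cases (such as cyclic wrap-around for very small $n$) and conservative counting.

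The main obstacle is that $h_2$'s queries are adaptive, so the positions $j_m$ depend on earlier responses, which are themselves functions of $x$, and one cannot naively union-bound over fixed pairs. I would handle this via a deferred-decisions argument: conditioned on no collision having occurred in steps $1,\ldots,m-1$, the responses received so far are independent uniform (by the random-oracle property of $h'$ on distinct inputs) and hence independent of $x$; consequently the next query $j_m$ is independent of $x$, so the probability that $Q_{j_m}=Q_{j_{m'}}$ for some prior $j_{m'}\neq j_m$ is at most $(m-1)\cdot 2^{-bk}$, and summing over $m\le 2d$ recovers the $O(d^2/2^{bk})$ bound. Finally, averaging over $h'$ shows the existence of a specific (deterministic) $h'$ attaining the claimed error, yielding an $(n,b,d\cdot k,\delta+4d^2/2^{b\cdot k})$-LPHS.
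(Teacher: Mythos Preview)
Your proposal is correct and follows essentially the same approach as the paper: simulate the large-alphabet LPHS $h_2$ by hashing length-$k$ sliding windows of $x$ with a uniformly random function, then bound the window-collision probability by a union bound over the at most $2d$ queried positions across the two invocations. If anything, you are more careful than the paper about the adaptivity of $h_2$'s queries (via the deferred-decisions/coupling argument) and about the per-pair collision probability for overlapping windows; the paper simply asserts the per-pair bound $2^{-bk}$ and union-bounds over $\binom{2d}{2}$ pairs, then fixes the shared randomness $f$ by averaging, exactly as you do.
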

\begin{proof}
Given a function $f: \zo^{b \cdot k} \rightarrow \zo^{b \cdot r}$ selected uniformly at random,
on input $x^{(b)}$, the $(n,b,d \cdot k,\delta + 4 \cdot d^2/2^{b \cdot k})$-LPHS (denoted by $h_1$) runs the $(n,b \cdot r,d,\delta)$-LPHS (denoted by $h_2$) and answers query $i$ by querying a $k$-tuple of elements and applying $f(x^{(b)}[i], \ldots , x^{(b)}[i+k-1])$. Finally, it outputs the same value as $h_2$.

We define the bad event $\mathcal{E}$, where two $k$-tuple queries of $h_1(x^{(b)})$ or $h_1(x^{(b)} \ll 1)$ for $i \neq j$ satisfy $(x^{(b)}[i], \ldots , x^{(b)}[i+k-1]) = (x^{(b)}[j], \ldots , x^{(b)}[j+k-1])$.
Conditioned on $\neg \mathcal{E}$,
\begin{align*}
\pr_{x^{(b)}} \lbs h_1(x^{(b)}) \neq h_1(x^{(b)} \ll 1) + 1 \rbs =
\pr_{x^{(b \cdot r)}} \lbs h_2(x^{(b \cdot r)}) \neq h_2(x^{(b \cdot r)} \ll 1) + 1 \rbs \leq \delta.
\end{align*}
For any $i \neq j$, the probability that $(x^{(b)}[i], \ldots , x^{(b)}[i+k-1]) = (x^{(b)}[j], \ldots , x^{(b)}[j+k-1])$ is $2^{-b \cdot k}$, and taking a union bound on all $\binom{2d}{2}$ query pairs of $h_1(x^{(b)})$ and $h_1(x^{(b)} \ll 1)$ gives $\pr[\mathcal{E}] < 4 \cdot d^2/2^{b \cdot k}$.

Note that this requires both executions of $h_1(x^{(b)})$ and $h_1(x^{(b)} \ll 1)$ to have shared randomness. In a non-uniform setting, we can use an averaging argument to fix this randomness and obtain a deterministic $h$. To make $h$ semi-explicit, pick $f$ from an efficient family of $n$-wise independent hash functions.
\end{proof}


Next, we observe that any cyclic LPHS implies a non-cyclic LPHS with a higher error probability $\delta$. Concretely, the error probability grows by $d/n$. For $d=n^{1/3}$, this can yield a non-cyclic LPHS with (near-optimal) $\delta=\tilde O(n^{-2/3})$. But for $n^{1/3}<d<n^{1/2}$ this generic construction is subsumed by the direct construction from
Section~\ref{sec:iterated}
that can achieve the near-optimal parameters $d=n^{1/2}$ and $\delta=\tilde O(1/n)$.
\begin{lemma}[From cyclic to non-cyclic]
\label{lem:non-cyclic}
If there exists a {\em cyclic} $(n,b,d,\delta)$-LPHS, then there exists a {\em non-cyclic} $(n,b,d,\delta+d/n)$-LPHS.
\end{lemma}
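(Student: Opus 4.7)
The strategy is to apply the cyclic LPHS $h$ to a random cyclic rotation of the input, charging a small additive error for the mismatch between cyclic and non-cyclic shifts. Using shared randomness to sample $s \in \mathbb{Z}_n$ uniformly, define $h'(x) := h(x \ll s) - s$, interpreting $h$'s output as the canonical integer in $\{0,\dots,n-1\}$. Both invocations $h'(x)$ and $h'(x \lll 1)$ use the same $s$, so the construction is deterministic in $(x,s)$; each query $(x \ll s)[i]$ is answered by a single query to $x[(i+s) \bmod n]$, preserving the query complexity~$d$.

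For the analysis, let $r$ denote the fresh symbol introduced by $x \lll 1$, and compare the strings $(x \lll 1) \ll s$ and $x \ll (s+1 \bmod n)$. These agree at every coordinate except at position $i_0 := (n-1-s) \bmod n$, where the first holds $r$ and the second holds $x_0$. A routine induction on the adaptive query trace shows that if $h$'s execution on $x \ll (s+1)$ never probes position $i_0$, then its execution on $(x \lll 1) \ll s$ is identical, so $h((x \lll 1) \ll s) = h(x \ll (s+1))$; call this the \emph{coupling} event. For any fixed $y := x \ll (s+1)$, the algorithm $h(y)$ probes at most $d$ positions, and since for uniform $s$ independent of $x$ the position $i_0$ is uniform in $\mathbb{Z}_n$ and independent of $y$ (because $x \ll (s+1)$ remains uniform for every fixed $s$), the probability that $i_0$ lands in $h(y)$'s query set is at most $d/n$. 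Combined with the cyclic LPHS bound $\Pr[h(x \ll (s+1)) \not\equiv h(x \ll s) + 1 \pmod n] \le \delta$ applied to the uniform string $x \ll s$, a union bound yields $h'(x \lll 1) \equiv h'(x) + 1 \pmod n$ except with probability $\delta + d/n$.

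The final step is lifting this modular equality to integer equality, as required by the non-cyclic LPHS codomain $\mathbb{Z}$. On the good event, the signed difference $h((x \lll 1) \ll s) - h(x \ll s)$ lies in $[-(n-1),n-1]$ and is congruent to $1 \pmod n$, so it equals $1$ except in a single ``wraparound'' case (when $h(x \ll s) = n-1$), which can be absorbed by a minor bookkeeping convention (e.g., enlarging $h'$'s output range or adjusting the lift of $h$'s canonical representative) without affecting the stated bound. The main conceptual step is the adaptive-coupling argument; the random shift $s$ is essential, since without it a worst-case $h$ could deterministically probe the single affected position and spoil the $d/n$ bound. Everything else is bookkeeping.
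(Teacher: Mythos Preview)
Your approach is essentially identical to the paper's: define the non-cyclic hash by applying the cyclic $h$ to a uniformly random cyclic rotation of the input, observe that $(x\lll 1)\ll s$ and $x\ll(s+1)$ differ at the single position $(n-1-s)\bmod n$, and charge $d/n$ for the event that $h$'s adaptive trace touches that position. Your write-up is in fact more explicit than the paper's on the coupling/independence step and on the $\mathbb{Z}_n$-versus-$\mathbb{Z}$ wraparound, which the paper glosses over; note however that your displayed inequalities have the shift direction reversed (the cyclic guarantee is $h(x\ll s)=h(x\ll(s+1))+1$, yielding $h'(x)=h'(x\lll 1)+1$, not the other way around), and the wraparound case is $h(x\ll s)=0$ rather than $n-1$.
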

\begin{proof}
Let $h$ be a cyclic $(n,b,d,\delta)$-LPHS.  We define a randomized non-cyclic LPHS $h'_\rho(x)=h(x\ll \rho)$ where $\rho\in_R\Z_n$ is a uniformly random cyclic shift. Since the difference between $x\ll 1$ and the corresponding string $x'$ for non-cyclic shift is restricted to the $(n-1)$-th position, after cyclically shifting by $\rho$ the difference will be restricted to a single random position $\rho^*=(n-1-\rho)\mod n$. Since $h$ (adaptively) queries $d$ symbols of its input, the probability of querying the differing position $\rho^*$ is bounded by $d/n$, from which the lemma follows.
\end{proof}

Finally, several of our results will rely on the following ``smoothness'' feature of LPHS.

\begin{lemma}\label{lem-smoothness}
Let $m > 1$ be a natural number. Then any $(n,b,d,\delta)$-LPHS satisfies
\[
\Pr_x[h(x) \bmod m = a] \leq m^{-1} + \delta + O(1/n)
\]
for every $a \in \mathbb{Z}_m$.
\end{lemma}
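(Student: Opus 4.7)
The plan is to exploit shift-invariance of the uniform distribution together with the local behavior of $h$ under shifts. Let $p_a = \Pr_x[h(x) \bmod m = a]$. Since $x \ll r$ is uniformly distributed for every $r$, we have $p_a = \Pr_x[h(x \ll r) \bmod m = a]$ for all $r$. Introducing indicators $Z_r = \one[h(x \ll r) \bmod m = a]$ for $r = 0, 1, \ldots, m-1$, linearity gives $m \cdot p_a = \mathbb{E}_x\bigl[\sum_{r=0}^{m-1} Z_r\bigr]$, so the task reduces to upper bounding $\sum_r Z_r$.

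The key combinatorial observation is that the reduced trajectory $T_r := h(x \ll r) \bmod m$ satisfies $T_{r+1} = T_r + 1 \pmod m$ unless one of two ``jumps'' occurs at step $r$: either the LPHS errs at this step (an event of probability at most $\delta$ by the defining property), or a wraparound happens where $h(x \ll r) = n-1$ and $n \not\equiv 0 \pmod m$. Between consecutive jumps the trajectory cycles through $\mathbb{Z}_m$ in strict arithmetic progression of step $+1$, so it hits the fixed target $a$ at most once per segment. Hence $\sum_{r=0}^{m-1} Z_r \leq J + 1$, where $J$ is the number of jumps in the window.

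Taking expectations, the error contribution to $J$ is at most $(m-1)\delta$ by the LPHS guarantee combined with the stationarity of the distribution of $x \ll r$. For the wraparound contribution I need a pointwise bound of the form $\Pr[h(x) = n-1] \leq 1/n + O(\delta)$, which is itself an instance of this very lemma specialized to $m = n$. The plan is to establish that case first as a base: in the case $m = n$, the wraparound is vacuous (since $n \equiv 0 \pmod n$ the cyclic shift introduces no extra jump), and the same counting argument gives $n \cdot \Pr[h(x) = v] \leq 1 + (n-1)\delta$, yielding $\Pr[h(x) = v] \leq 1/n + \delta$. Plugging this back for general $m$, the expected number of wraparound jumps is at most $(m-1)(1/n + \delta)$, so $m\cdot p_a \leq 1 + (m-1)(2\delta + 1/n)$ and dividing by $m$ yields the claimed bound $p_a \leq 1/m + \delta + O(1/n)$.

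The main subtlety I anticipate is the apparent circularity between the general statement and its $m=n$ specialization, which is why the $m=n$ base case must be treated in isolation, exploiting that in this regime wraparound contributes no jumps by definition. A secondary point that must be verified carefully is the ``at most once per segment'' geometric claim: any inter-jump segment has length at most $m$ (since the entire window has length $m$), so within such a segment the trajectory $T_r$ is a strict arithmetic progression modulo $m$ that cannot revisit the value $a$. Once these two points are handled, the rest reduces to straightforward bookkeeping of the error contributions.
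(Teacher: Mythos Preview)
Your approach is sound and close in spirit to the paper's, but the last arithmetic step does not give what you claim. From $m\cdot p_a \le 1 + (m-1)(2\delta + 1/n)$ you get $p_a \le 1/m + 2\delta + 1/n$, not $1/m + \delta + O(1/n)$. The extra $\delta$ comes precisely from your bootstrap: you bound the wraparound event by $\Pr[h(x)=v]\le 1/n+\delta$ (the $m=n$ case), and that $\delta$ then gets added to the LPHS-error $\delta$. So as written you establish the lemma only with $2\delta$ in place of $\delta$. (A minor aside: with the paper's convention $h(x)=h(x\ll 1)+1$ the trajectory actually satisfies $T_{r+1}=T_r-1\pmod m$, and the wraparound trigger is $h(x\ll r)=0$; this sign flip is cosmetic and does not affect the argument.)

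The paper sidesteps this double counting by working over the set $S$ of aperiodic inputs and \emph{partitioning} each shift-orbit of size $n$ into consecutive length-$m$ blocks. When $m\mid n$, reducing mod $n$ then mod $m$ commutes, so the mod-$n$ wraparound is invisible modulo $m$; hence every jump in the mod-$m$ trajectory within a block is a genuine LPHS error, and one gets $(\text{hits of }a)\le |S|/m + (\text{errors})$, i.e.\ exactly $\delta$ with no bootstrap needed. For $m\nmid n$ the partition of each orbit is imperfect, and the paper argues that each orbit contributes at most one additional hit of $a$, which is absorbed into the $O(1/n)$ term. If you want the stated constant $\delta$ (rather than $2\delta$), you should replace the per-step wraparound bound by this orbit-level accounting.
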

\begin{proof}
We assume that $b = 1$. The proof for $b> 1$ is similar.

Consider the set $S$ of inputs $x$ for which the minimal $r$ such that $x = x \ll r$ is $n$. The set $S$ contains a fraction of $1 - o(1/n)$ of the inputs $x \in \{0,1\}^n$. Suppose first that $m$ divides $n$. Partition the set $S$ into sequences of length $m$ of the form $x,x \ll 1,\ldots,x \ll (m-1)$ and consider the executions $h(x) \bmod m,\ldots,h(x \ll (m-1)) \bmod m$. Note that if $a$ appears $t > 1$ times in an execution sequence, then $h$ must err on at least $t-1$ inputs in this sequence. Consequently,
\[
\Pr_x[h(x) \bmod m = a] \leq m^{-1} + \delta + o(1/n).
\]
If $m$ does not divide $n$, then the set $S$ cannot be accurately partitioned as above.
Consequently, each set of cyclic shifts of any input $x$ (of size $n$) in $S$ may contain at most one additional input $y$ such that $h(y) \bmod m = a$, contributing to the additional $O(1/n)$ factor.
\end{proof}



\section{LPHS and Distributed Discrete Log}
\label{sec:LPHS-DDL}

In this section we introduce the Generic Group Model (GGM) and Distributed Discrete Logarithm (DDL) problem, and present a general two-way relation between (cyclic) LPHS and generic-group algorithms for DDL.
We discuss and define the GGM and DDL problem in Section~\ref{sec:DDL}, and we provide the correspondence with LPHS in Section~\ref{sec-ddl}.

\subsection{Generic Group Model and Distributed Discrete Log}
\label{sec:DDL}

Our notion of LPHS is closely related to variants of the discrete logarithm problem: given a group generator $g\in \mathrm{G}$ and a group element $g^v$, find $v$. More concretely, we will be interested in algorithms for problems related to discrete logarithm in the so-called {\em generic group model} (GGM). The GGM assigns random labels to group elements and treats the group operation as an oracle.  We formalize this below.

Let $n$ be a positive integer parameter (corresponding to the group order) and $b \geq 3\log n$ an integer (representation length of group elements). The GGM setting can be described as a game, where at the beginning, a string $x^{(b)} \in \Sigma_b^n$ is chosen uniformly at random.\footnote{We require $b \geq 3\log n$ to ensure that for each $i \neq j$, $x[i] \neq x[j]$ (except with
$\le 1/n$
probability).} In the discrete log problem for $\mathbb{Z}_n$, a value $v \in \mathbb{Z}_n$ is chosen uniformly at random. A generic algorithm $A$ for the discrete log problem in $\mathbb{Z}_n$ is a probabilistic algorithm that issues $d$ (adaptive) queries of the form $(i,j) \in \mathbb{Z}_n \times \mathbb{Z}_n$. The answer to query $(i,j)$ is $x^{(b)}[\ell_v(i,j)]$, where $\ell: \mathbb{Z}_n \times \mathbb{Z}_n \rightarrow \mathbb{Z}_n$ is the affine query evaluation function defined by $\ell_v(i,j) = i \cdot v + j$ (where arithmetic operations are performed modulo $n$). Using the group notation, the query $(i,j)$ corresponds to group element $(g^v)^i\cdot g^j$.

The algorithm $A$ succeeds to solve the discrete log problem if $A^{\mathrm{G}}(x^{(b)},v) = v$,\footnote{We use the notation $A^{\mathrm{G}}(x^{(b)},v)$ to indicate that $A$ is a generic algorithm with no direct access to the parameters $x^{(b)},v$.} and its success probability is taken over the uniform choices of $x^{(b)}$ and $v$ (and possibly additional randomness of its own coin-tosses).

The flavor of GGM we use in this paper is similar
to the one of Shoup~\cite{S97}. Besides differences in notations, there are two additional technical differences which are generally minor. First, in~\cite{S97}, strings are uniformly assigned to elements of $\mathbb{Z}_n$ without replacement, whereas in our model, we assign strings with replacement. However, a collision $x[i] = x[j]$ for some pair $(i,j)$ is possible with probability $\le 1/n$, which is negligible in our context.
Second, in~\cite{S97} queries of $A$ are limited to linear combinations with coefficients of $\pm 1$ to previously queried elements (where the initial queried elements consist of $g$ and $g^v$). We note that any query $(i,j)$ can be issued in Shoup's original GGM after at most $O(\log n)$ queries (using the double-and-add algorithm). Therefore, although our model is slightly stronger, any algorithm in our model can be simulated by an algorithm in the model of~\cite{S97} by increasing the query complexity by a multiplicative factor of $O(\log n)$.

The following success probability upper bound was proved in~\cite{S97}.

\begin{theorem}[\cite{S97}, Theorem 1 (adapted)] \label{thm:DL}
For a generic discrete log algorithm $A$ with $d$ queries and prime $n$, we have
$\pr_{x^{(b)},v}[A^{\mathrm{G}}(x^{(b)},v) = v] = O(d^2/n)$.
\end{theorem}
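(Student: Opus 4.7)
The plan is to prove Theorem~\ref{thm:DL} via a standard lazy-sampling (two-game) argument. The idea is to replace the real game by a simulated game whose transcript is independent of $v$, show that the two games are indistinguishable except for an easily analyzable ``collision'' event, and observe that in the simulated game $A$ has no information about $v$ whatsoever.

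Concretely, I will set up the following coupling. In the simulated game, I answer the $k$-th query $(i_k,j_k)$ by a fresh uniform label $\sigma_k \in \Sigma_b$, except that whenever $(i_k,j_k)$ exactly repeats a previous query I return the same label, and only at the very end do I draw $v \in \mathbb{Z}_n$ uniformly at random, independently of the transcript. In this game $A$'s output equals $v$ with probability exactly $1/n$. In the real game, I implement the responses by lazily assigning uniform labels to the cell indices $\ell_v(i_k,j_k) = i_k v + j_k \bmod n$ as they are first touched, using the same coins for $A$.

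The two games produce identical transcripts unless one of two events occurs: (i) two real-game positions $\ell_v(i_k,j_k)$ collide even though the corresponding queries $(i_k,j_k)$ differ; or (ii) two of the simulated-game random labels $\sigma_k$ coincide. Event (ii) has probability at most $\binom{d}{2}/2^b = O(d^2/n^3)$ by the birthday bound and the assumption $b \geq 3\log n$. For event (i), a collision $i_k v + j_k \equiv i_{k'} v + j_{k'} \pmod{n}$ with $(i_k,j_k) \neq (i_{k'},j_{k'})$ forces $i_k \neq i_{k'}$ (otherwise we would also have $j_k = j_{k'}$) and, because $n$ is prime, pins $v$ to the unique value $(j_{k'} - j_k)(i_k - i_{k'})^{-1} \bmod n$. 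Hence each of the $\binom{d}{2}$ query pairs contributes probability at most $1/n$, and event (i) has probability $O(d^2/n)$. Combining everything, $\pr_{x^{(b)},v}[A^{\mathrm{G}}(x^{(b)},v) = v] \leq 1/n + O(d^2/n) + O(d^2/n^3) = O(d^2/n)$.

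The main technical point requiring care is adaptivity: the tuple $(i_k, j_k, i_{k'}, j_{k'})$ whose unique ``bad'' value of $v$ causes the collision depends on the prior transcript, so I must argue that conditioned on that transcript (and on no earlier collision) $v$ remains sufficiently close to uniform on $\mathbb{Z}_n$ that its hitting the pinned value contributes only $1/n$. This is the routine bookkeeping step; once it is in place, the union bound over the $\binom{d}{2}$ pairs yields the claimed $O(d^2/n)$ bound.
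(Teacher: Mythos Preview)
Your proposal is correct and is precisely the standard lazy-sampling argument of Shoup that the paper invokes without reproducing (the paper gives no proof of its own here, remarking only that Shoup's proof adapts straightforwardly to its with-replacement model). One minor point: your event (ii)---coincidence of two simulated labels---matters in Shoup's original without-replacement model but is not a divergence event in the paper's with-replacement model, since identical labels on distinct cells can occur in the real game as well; dropping it leaves your $O(d^2/n)$ bound unchanged.
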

Although our model is slightly different than the one of~\cite{S97}, this result holds in our model as well (by a straightforward adaptation of the proof of~\cite{S97}). The assumption that $n$ is prime ensures that $\mathbb{Z}_n$ does not contain any non-trivial subgroup. It is necessary in general, since for composite $n$, the Pohlig-Hellman algorithm~\cite{PohligH78} breaks the discrete log problem into smaller problems in subgroups of $\mathbb{Z}_n$, beating the bound of Theorem~\ref{thm:DL}.

We now define a restricted class of GGM algorithms that better correspond to LPHS.
\begin{definition}
A GGM algorithm $A$ is called query-restricted if it only issues queries of the form $(i,j) \in \mathbb{Z}_n \times \mathbb{Z}_n$ with $i = 1$.
\end{definition}
Thus, $A$ is restricted to query group elements with a known shift $j$ from $v$, analogously to the way an LPHS algorithm queries elements at a known offset.
Query-restricted algorithms cannot exploit the subgroup structure of composite groups, and thus Theorem~\ref{thm:DL} holds for them regardless of whether $n$ is prime. For a similar reason, the factorization of $n$ will not play any role in our results on LPHS.



LPHS is closely related to query-restricted GGM algorithms for a variant of discrete log called {\em distributed discrete log} (DDL)~\cite{BGI16,DKK18} that we describe next.
The syntax is identical to that of discrete log. However, the goal here is different: rather than output $v$ when the (implicit) input is $v$, the goal here is to maintain the {\em difference} between the outputs on $v$ and $v+1$,
except with error probability $\delta$. More formally:
\begin{definition}
A GGM algorithm $A$ is an $(n,b,d,\delta)$-DDL algorithm if it makes $d$  (potentially adaptive)  queries to $x^{(b)}$ and
$\pr_{x^{(b)},v}[A^{\mathrm{G}}(x^{(b)},v) - A^{\mathrm{G}}(x^{(b)},v+1) \neq 1] \leq \delta$.
\end{definition}

\begin{remark}
The original definition of DDL in~\cite{BGI16} involves two parties $A$ and $B$
that may potentially run two different algorithms. The parties are placed within an unknown distance $r \in \{-1,0,1\}$ from each other and their goal is to minimize the error probability defined as
$\pr_{x^{(b)},v}[A^{\mathrm{G}}(x^{(b)},v) - B^{\mathrm{G}}(x^{(b)},v+r) \neq r].$
However, it was shown in~\cite[Lemma 9]{DKK18}, that if both parties use $A$'s algorithm, then the multiplicative loss in error probability is bounded by a constant. Hence, the above restricted definition of DDL is essentially equivalent to the original one of~\cite{BGI16}.
\end{remark}

While LPHS is only directly related to query-restricted GGM algorithms for DDL, Lemma~\ref{lem:qr} asserts that when $n$ is prime and $d$ is sufficiently small compared to $n$, any unrestricted GGM algorithm for DDL can be converted to a query-restricted one at a negligible cost in error probability. This gives rise to Corollary~\ref{corr:conv} that establishes a reduction which converts any unrestricted GGM algorithm for DDL to an LPHS with a negligible cost in error probability.

%


\subsection{Reductions Between LPHS Variants and DDL}
\label{sec-ddl}

In this section we show that a query-restricted DDL algorithm in the GGM is equivalent to our basic notion of (cyclic) LPHS, and then describe consequences of this equivalence.  We then show that for most parameters, any DDL algorithm can be converted into a query-restricted one at a negligible cost.

We use this correspondence to derive asymptotically tight bounds on the parameters of LPHS, using the Iterative Random Walk algorithm from~\cite{DKK18}.

\paragraph{Equivalence of LPHS and query-restricted DDL.}
The equivalence in the query-restricted model is formally captured by the following two-way relation.
\begin{lemma}\label{lem:equiv}
There exist reductions that convert an $(n,b,d,\delta)$-LPHS to a query-restricted $(n,b,d,\delta)$-DDLA and vice versa.

\end{lemma}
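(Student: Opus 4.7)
The plan is to exhibit both reductions via a simple re-labeling that exploits the fact that a query-restricted DDL algorithm, on input $(x^{(b)}, v)$, only accesses positions of the form $\ell_v(1,j) = v+j$ in $x^{(b)}$, which are exactly the positions $j$ of the cyclically shifted string $y = x^{(b)} \ll v$. In particular, the algorithm's view depends on $(x^{(b)}, v)$ only through $y = x^{(b)} \ll v$, and the operation $v \mapsto v+1$ on the DDL input corresponds to the operation $y \mapsto y \ll 1$ on the LPHS input.

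For the LPHS $\Rightarrow$ DDL direction, given a $(d,\delta)$-LPHS $h$, I would define $A^{\mathrm{G}}(x^{(b)}, v)$ to simulate $h$, answering each query that $h$ makes to position $j$ by issuing the DDL query $(1,j)$ and returning the oracle response $x^{(b)}[v+j]$; $A$ outputs whatever $h$ outputs. Then $A^{\mathrm{G}}(x^{(b)}, v) = h(x^{(b)} \ll v)$ and $A^{\mathrm{G}}(x^{(b)}, v+1) = h((x^{(b)} \ll v) \ll 1)$. Conditioning on any fixed $v$, the map $x^{(b)} \mapsto x^{(b)} \ll v$ is a bijection on $\Sigma_b^n$, so
\[
\Pr_{x^{(b)}, v}\!\left[A^{\mathrm{G}}(x^{(b)},v) \neq A^{\mathrm{G}}(x^{(b)}, v+1) + 1\right] = \Pr_{y}\!\left[h(y) \neq h(y \ll 1) + 1\right] \leq \delta.
\]

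For the DDL $\Rightarrow$ LPHS direction, given a query-restricted $(d,\delta)$-DDL algorithm $A$, I would define $h(y)$ by running $A$ and answering each of its (necessarily query-restricted) queries $(1,j)$ with $y[j]$, then outputting $A$'s output. This corresponds to the canonical choice $(x^{(b)}, v) = (y, 0)$, giving $h(y) = A^{\mathrm{G}}(y, 0)$ and $h(y \ll 1) = A^{\mathrm{G}}(y, 1)$. By the same bijection argument, for the uniform distribution over $y$ the LPHS error equals the $v=0$ slice of the DDL error, which in turn equals the full averaged DDL error, so it is at most $\delta$.

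There is no substantial obstacle here; the reduction is essentially a bookkeeping identification of the LPHS input $y$ with the pair $(x^{(b)}, v)$ modulo the shift. The only point that needs to be handled cleanly is verifying that the adaptive query structure survives the translation (trivially, since queries correspond one-to-one) and that the probability identity holds for every fixed $v$ via the bijection $x^{(b)} \mapsto x^{(b)} \ll v$, which lets us match the parameters $(n,b,d,\delta)$ exactly in both directions with no loss.
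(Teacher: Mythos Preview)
Your proof is correct and matches the paper's approach essentially verbatim: both directions are the same re-labeling $A^{\mathrm{G}}(x^{(b)},v)=h(x^{(b)}\ll v)$ via translating an LPHS query at index $j$ to the query-restricted DDL query $(1,j)$, together with the bijection $x^{(b)}\mapsto x^{(b)}\ll v$ to equate the error probabilities. The paper only spells out the LPHS $\Rightarrow$ DDL direction and dismisses the converse with ``in a similar way''; your explicit observation that all $v$-slices have the same error (so the $v=0$ slice already equals the averaged DDL error) is exactly the content of that converse.
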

\begin{proof}
Given access to an $(n,b,d,\delta)$-LPHS denoted by $h$, we construct a query-restricted $(n,b,d,\delta)$-DDLA, denoted by $A$ as follows. We run $h$ and translate query $j$ into query $(1,j)$ for $A^{\mathrm{G}}(x^{(b)},v)$. We then feed the answer $x^{(b)}[v + j]$ to $h$. Finally, we output the same value as $h$. Since $x^{(b)}[v + j] = (x^{(b)} \ll v)[j]$, we have $A^{\mathrm{G}}(x^{(b)},v) = h(x^{(b)} \ll v)$, where $x^{(b)} \ll v$ is a uniform string. Therefore,
\begin{align*}
\pr_{x^{(b)},v}[A^{\mathrm{G}}(x^{(b)},v) - A^{\mathrm{G}}(x^{(b)},v+1) \neq 1] = \\
\pr_{x^{(b)},v}[h(x^{(b)} \ll v) - h(x^{(b)} \ll v + 1) \neq 1] = \\
\pr_{x^{(b)}}[h(x^{(b)}) - h(x^{(b)} \ll 1) \neq 1] = \delta.
\end{align*}
In a similar way, a query-restricted $(n,b,d,\delta)$-DDLA can be used to construct a $(n,b,d,\delta)$-LPHS.
\end{proof}

The DDL algorithm based on the Iterated Random Walk (IRW) from~\cite{DKK18} 
is query-restricted. Therefore, combining Lemma~\ref{lem:equiv} with the parameters of IRW,
we get the positive result below for cyclic LPHS. The result for non-cyclic LPHS follows from
the fact that the random walk makes queries within an interval of size bounded by $O(d^2)$, hence if $n =\Omega(d^2)$ is large enough, the LPHS gives both cyclic and non-cyclic LPHS with the same parameters.
\begin{theorem}[LPHS Upper Bounds]
\label{thm:1dupper}
For $n = \Omega(d^2)$ and $b \ge 3 \log n$ there is an $(n,b,d,\delta)$-LPHS such that
   $ \delta =  O(1/d^2)$. Moreover, for $n = \Omega(d^2)$ and any $b \ge 1$ there is an $(n,b,d,\delta)$-LPHS with
   $ \delta =  \tilde{O}(1/d^2)$. There are also non-cyclic LPHS with the same parameters.
\end{theorem}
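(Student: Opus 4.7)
The plan is to reduce the theorem to two ingredients already in hand: the two-way correspondence between LPHS and query-restricted DDL in the GGM (Lemma~\ref{lem:equiv}), and the Iterated Random Walk (IRW) algorithm of Dinur--Keller--Klein~\cite{DKK18} presented in Appendix~\ref{sec:dkk}. The IRW is, by construction, query-restricted, and when $b \ge 3 \log n$ (so that group labels are distinct with probability at least $1 - 1/n$) it uses $d$ queries to achieve DDL error $\delta = O(1/d^2)$. Plugging IRW into Lemma~\ref{lem:equiv} immediately yields an $(n, b, d, O(1/d^2))$-LPHS for $b \ge 3 \log n$, establishing the first claim of the theorem.

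To handle arbitrary alphabet $b \ge 1$, I would apply the alphabet-reduction transformation of Lemma~\ref{lem:smaller-alphabet} to the large-alphabet LPHS from the previous paragraph. Concretely, given the target $b$, choose $k = \Theta((\log d)/b)$ large enough that the collision term $4(d')^2 / 2^{bk}$ is $O(1/d^2)$, and set $d' = \lceil d/k \rceil$ so that the total query count $d' \cdot k$ does not exceed $d$. The big-alphabet LPHS (instantiated with query budget $d'$) contributes error $O(1/(d')^2) = \tilde O(1/d^2)$, and Lemma~\ref{lem:smaller-alphabet} adds another $\tilde O(1/d^2)$, producing the desired $(n, b, d, \tilde O(1/d^2))$-LPHS for any $b \ge 1$.

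For the non-cyclic versions, the key property (observed in~\cite{DKK18}) is that every query issued by IRW falls within an interval of length $O(d^2)$ around the origin. Since $n = \Omega(d^2)$ with a sufficiently large constant, position $n-1$ is never queried, so Lemma~\ref{lem:convert} converts the cyclic LPHS into a non-cyclic one with identical parameters. The alphabet-reduction step expands each query into $k = O(\log n)$ consecutive positions, which only blows up the interval bound by $O(\log n)$, still well inside $[0, n)$; hence the non-cyclic guarantee carries over to the small-alphabet case as well.

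The main obstacle is essentially encapsulated inside the IRW algorithm: establishing $\delta = O(1/d^2)$ for DDL requires the delicate analysis of a carefully chosen iterated sequence of random walks, which is carried out in Appendix~\ref{sec:dkk} and which I would import as a black box. Beyond that, the only point meriting care is verifying that the $O(d^2)$ interval bound on IRW's query locations is indeed maintained, both in the base construction and after applying the alphabet reduction, so that Lemma~\ref{lem:convert} can be invoked to derive the non-cyclic variants uniformly; the remaining steps are routine bookkeeping via the reductions collected in Section~\ref{sec:prelims}.
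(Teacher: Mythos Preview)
Your proposal is correct and follows essentially the same route as the paper: invoke Lemma~\ref{lem:equiv} together with the query-restricted IRW DDL algorithm of~\cite{DKK18} to get the $b \ge 3\log n$ case, apply Lemma~\ref{lem:smaller-alphabet} to extend to arbitrary $b \ge 1$, and use the $O(d^2)$ interval bound on IRW's queries (hence Lemma~\ref{lem:convert}) to transfer everything to the non-cyclic setting. Your write-up in fact spells out the parameter choices for the alphabet-reduction step more explicitly than the paper does.
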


\medskip We can similarly convert the main negative result for DDLA from~\cite[Theorem~5]{DKK18} to a nearly tight lower bound on the error probability of LPHS.
\begin{theorem}[LPHS Lower Bound]
\label{thm:1dlower}
For $n = \Omega(d^2)$, any (cyclic or non-cyclic) $(n,1,d,\delta)$-LPHS satisfies $\delta \geq \Omega(1/d^2)$.
\end{theorem}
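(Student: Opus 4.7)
The plan is to derive the lower bound from the DDL lower bound of Dinur, Keller, and Klein~\cite[Theorem~5]{DKK18} via the tight LPHS-to-DDL correspondence established in Lemma~\ref{lem:equiv}.

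First I would handle the cyclic case. Given a cyclic $(n,1,d,\delta)$-LPHS, I would invoke Lemma~\ref{lem:bigger-alphabet} with $r = \lceil 3 \log n \rceil$ to obtain an $(n', r, d, r\delta)$-LPHS on $n' = n/r$ symbols whose alphabet size $2^r \geq n^3 \geq (n')^3$ satisfies the GGM convention that group labels be pairwise distinct with high probability. Lemma~\ref{lem:equiv} then converts this into a query-restricted $(n', r, d, r\delta)$-DDL algorithm in the generic group model, at which point \cite[Theorem~5]{DKK18} yields $r\delta = \Omega(1/d^2)$. To reach the claimed $\delta = \Omega(1/d^2)$ without a spurious logarithmic factor, I would reinspect the DKK18 proof: the argument is combinatorial (counting sequences of queried positions and resulting group elements), and only uses the large alphabet to ensure that freshly discovered group elements get distinct labels. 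Transporting this argument to a binary LPHS requires only that accidental collisions among the $2d$ queried input bits be rare, which contributes an additive $O(d^2/n)$ term that is absorbed into the $\delta$ bound since $n = \Omega(d^2)$.

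For the non-cyclic case, I would reduce to the cyclic lower bound via a random-shift argument dual to Lemma~\ref{lem:non-cyclic}. Given a non-cyclic $(n,1,d,\delta)$-LPHS $h$, define the randomized hash $h'(x) = h(x \ll \rho)$ for a uniformly random $\rho \in \mathbb{Z}_n$ (kept fixed across the two invocations on $x$ and $x \ll 1$). The strings $x \ll 1$ and $x \lll 1$ agree everywhere except at position $n-1$; after a random cyclic preshift by $\rho$, this differing position is uniform over $\mathbb{Z}_n$, and $h$ queries it with probability at most $d/n$. By a union bound, $h'$ is a cyclic $(n,1,d,\delta + d/n)$-LPHS, and the cyclic lower bound just established gives $\delta + d/n = \Omega(1/d^2)$. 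Since $n = \Omega(d^2)$ implies $d/n = O(1/d^2)$, the conclusion $\delta = \Omega(1/d^2)$ follows.

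The main obstacle is the quantitative step in the cyclic case: matching the DKK18 bound tightly for a binary alphabet without losing a $\log n$ factor from Lemma~\ref{lem:bigger-alphabet}. A black-box reduction through the larger-alphabet GGM yields only $\delta = \Omega(1/(d^2 \log n))$, so the tight statement requires unpacking the DKK18 proof and verifying that its combinatorial core transports directly to the binary LPHS setting (using, e.g., the smoothness property of Lemma~\ref{lem-smoothness} to control the distribution of LPHS outputs and absorb the effect of label collisions into lower-order terms).
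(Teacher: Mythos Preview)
Your high-level plan---reduce to query-restricted DDL via Lemma~\ref{lem:equiv} and invoke \cite[Theorem~5]{DKK18}---matches the paper's. But the execution has two issues.

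For the cyclic case, the detour through Lemma~\ref{lem:bigger-alphabet} is unnecessary. A binary $(n,1,d,\delta)$-LPHS gives, via Lemma~\ref{lem:equiv}, a query-restricted $(n,1,d,\delta)$-DDLA. That algorithm can be run verbatim in the large-$b$ GGM by reading only the first bit of each group label; since labels are uniformly random in $\Sigma_b$, their first bits are i.i.d.\ uniform, and the error probability is preserved exactly. The DKK18 lower bound then applies directly, with no logarithmic loss and no need to open its proof. (Your suggested ``transport'' of the DKK18 argument, arguing that collisions among the $2d$ queried input bits are rare, is also off: with a binary alphabet such collisions are inevitable, not rare.)

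The non-cyclic reduction has a genuine gap. Setting $y = x \ll \rho$, your construction needs $h(y) = h(y \ll 1) + 1$. To invoke the non-cyclic guarantee on $h$ you must compare $h(y \ll 1)$ with $h(y \lll 1)$; but these two inputs to $h$ differ precisely at position $n-1$, \emph{regardless of $\rho$}. The random preshift does not randomize this position from $h$'s vantage point---it only relabels where the bits of $x$ land inside $y$---so if $h$ deterministically queries index $n-1$, your coupling fails with probability $1$, not $d/n$. (Contrast with Lemma~\ref{lem:non-cyclic}, where the preshift is applied \emph{after} the single-position perturbation, moving it to a random location in $h$'s input.) A clean fix: embed into a longer cyclic instance. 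Define $\tilde h\colon\{0,1\}^{2n}\to\mathbb{Z}_{2n}$ by applying $h$ to the first $n$ bits of its input and reducing the output modulo $2n$. For uniform $w\in\{0,1\}^{2n}$, the pair consisting of the first $n$ bits of $w$ and of $w\ll 1$ is distributed exactly as $(x, x \lll 1)$, so $\tilde h$ is a cyclic $(2n,1,d,\delta)$-LPHS, and the cyclic bound (with $2n = \Omega(d^2)$) yields $\delta = \Omega(1/d^2)$.
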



\paragraph{From GGM to query-restricted GGM.} 
In this section we show that, when $n$ is prime and $d$ is sufficiently small compared to $n$, any DDL algorithm can be converted into a query-restricted one with similar parameters.

\begin{lemma}\label{lem:qr}
For $b \geq 3 \log b$, there exists a reduction that converts any $(n,b,d,\delta)$-DDLA, for prime $n$, to a query-restricted $(n,b,d,\delta + O(d^2/n))$-DDLA.
\end{lemma}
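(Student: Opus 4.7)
The plan is to construct a query-restricted DDLA $A'$ that simulates $A$ internally, and then bound the loss in success probability via a standard generic-group argument. On input $(x,v)$, $A'$ runs $A$ on input $v$ and handles each query $(i,j)$ as follows: if $i = 1$, it forwards the query to its own oracle (returning $x[v+j]$, a legal query-restricted access); if $i \neq 1$, it returns a freshly sampled uniform label from $\Sigma_b$. A lookup table is maintained so that repeated $(i,j)$ queries are answered consistently, and the randomness used by $A'$ to sample the fresh labels is shared between the two DDL invocations (on $v$ and $v+1$) so that the simulator is consistent across both runs. Finally, $A'$ returns whatever $A$ outputs.

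The main step is to argue that the view of $A$ inside $A'^{\mathrm{G}}(x,v)$ is statistically close to its view in the real execution $A^{\mathrm{G}}(x,v)$, and similarly for $v+1$. Define the bad event $B$: some pair of distinct queries $(i_k,j_k) \neq (i_l,j_l)$ issued during the DDL experiment (over either invocation) produces the same underlying position, i.e., $i_k v + j_k \equiv i_l v + j_l \pmod{n}$. Conditioned on $\neg B$, all queried positions in $x$ are distinct, so the real execution's answers are i.i.d.\ uniform over $\Sigma_b$ -- which is exactly the distribution that $A'$ produces (uniform fresh labels for $i \neq 1$ and uniform values $x[v+j]$ for $i = 1$). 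Hence on $\neg B$ the two views are identically distributed, and the statistical distance between them is at most $\Pr[B]$.

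To bound $\Pr[B] = O(d^2/n)$ I would use the classical Shoup-style deferred-decisions argument. Process $A$'s queries sequentially, and observe that conditioned on no collision so far, the labels seen up to step $k-1$ are i.i.d.\ uniform and reveal nothing about $v$, so the posterior of $v$ remains uniform on $\mathbb{Z}_n$. A new collision at step $k$ with an earlier query $(i_l, j_l)$ requires $(i_k - i_l)v \equiv j_l - j_k \pmod{n}$. When $i_k = i_l$, this forces $j_k = j_l$, meaning the queries coincide (not a nontrivial collision). When $i_k \neq i_l$, primality of $n$ makes $i_k - i_l$ invertible in $\mathbb{Z}_n$, so this equation has exactly one solution for $v$, contributing at most $1/n$ to the probability. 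A union bound over the $O(d^2)$ ordered pairs of queries across both invocations yields $\Pr[B] = O(d^2/n)$.

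The main obstacle is handling the adaptivity of $A$ together with the fact that the two invocations (on $v$ and $v+1$) share the same underlying $x$, so that collision events across the two runs are correlated and depend on $A$'s adaptively chosen queries. The deferred-decisions argument resolves this cleanly by never fixing $x$ in advance: revealing $x$ entries lazily ensures that, until a collision occurs, the adversary's entire query history remains independent of $v$'s posterior, so the per-step collision probability is genuinely at most $(k-1)/n$. Combining the two invocation bounds, we obtain $\Pr[A'^{\mathrm{G}}(x,v) - A'^{\mathrm{G}}(x,v+1) \neq 1] \leq \delta + O(d^2/n)$, which completes the proof.
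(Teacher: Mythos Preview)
Your reduction has a genuine gap: it does not preserve the \emph{joint} distribution of oracle answers across the two runs $A^{\mathrm{G}}(x,v)$ and $A^{\mathrm{G}}(x,v+1)$, and this joint distribution is precisely what DDL correctness depends on. A query $(i,j)$ accesses position $iv+j$ in the $v$-run but position $i(v{+}1)+j=iv+i+j$ in the $(v{+}1)$-run. For $i\notin\{0,1\}$ these positions are different, so the real answers are independent; yet your shared lookup table returns the \emph{same} label to both. Conversely, the $v$-run's query $(i,j)$ and the $(v{+}1)$-run's query $(i,j-i)$ hit the same position $iv+j$ and must receive identical answers, but your table (keyed by the pair $(i,j)$) returns independent labels. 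Hence your claim that on $\neg B$ the two views coincide is false: even with no position collisions, the cross-run correlations are simulated incorrectly.

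This matters. Take the DDLA that issues queries $(2,0),(2,2),\ldots,(2,2(d{-}1))$ and outputs the index $j^*$ of the minimum label. In the real experiment the $(v{+}1)$-run's $j$-th answer equals the $v$-run's $(j{+}1)$-th answer (both are $x[2v+2j+2]$), making this a min-hash DDLA with $\delta=O(1/d)$. Under your simulation both runs see the identical label sequence and output the same $j^*$, so the difference is always $0$ and the error is $1$. Your bound on $\Pr[B]$ also breaks here: for cross-run query pairs with equal first coordinate $i$, the collision condition becomes $j_k=j_l+i$, which is deterministic rather than a $1/n$-event.

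The paper's proof addresses exactly this by routing each query $(i,j)$ to a genuine query-restricted query $(1,\mathit{map}(i,j))$ with $\mathit{map}(i,j)=D_i+j\cdot i^{-1}$ for random offsets $D_i$. This map is engineered so that, within each fixed $i$, the equality pattern $\ell_{v+a_0}(i,j)=\ell_{v+a_1}(i,j')$ is preserved after mapping; only cross-level collisions ($i\neq i'$), which would anyway determine $v$, are lost, and those are what get absorbed into the $O(d^2/n)$ bad event.
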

If $d = O(n^{1/4})$, then since $\delta = \Omega(d^{-2})$ by
Theorem~\ref{thm:1dlower},
we have $\delta + O(d^2/n) = \delta + O(d^{-2}) = O(\delta)$. Hence the reduction is almost without loss.

\begin{proof} (sketch) Given black-box access to an $(n,b,d,\delta)$-DDLA denoted by $A$, we construct a query-restricted $(n,b,d,\delta + O(d^2/n))$-DDLA, denoted by $B$. We will define a query mapping $map : \mathbb{Z}_n \times \mathbb{Z}_n \rightarrow \mathbb{Z}_n$ below and run $A$, while translating query $(i,j)$ into query $(1,map(i,j))$ for $B^{\mathrm{G}}(x^{(b)},v)$ for which the answer $x^{(b)}[v + map(i,j)]$ is fed back into $A$. Finally, we output the same value as $A$.

Our goal in defining $map$ is to simulate the joint distribution of query answers for $A^{\mathrm{G}}(y^{(b)},v)$ and $A^{\mathrm{G}}(y^{(b)},v+1)$ (for uniform $y^{(b)},v$), while making only restricted queries. The simulation will be perfect unless some bad event (which happens with probability $O(d^2/n)$) occurs. This will assure that the error probability of $h$ is bounded by $\delta + O(d^2/n)$.

The query answers of $A^{\mathrm{G}}(y^{(b)},v)$ and $A^{\mathrm{G}}(y^{(b)},v+1)$ are uniform in $\Sigma_b$, unless they query the same element, namely, $\ell_{v+a_0}(i,j) = \ell_{v+a_1}(i',j')$ for $i,j,i',j' \in \mathbb{Z}_n$ and $a_0,a_1 \in \{0,1\}$. We thus require that the mapping preserves equality, namely,
\begin{align}\label{eq:map}
\ell_{v+a_0}(i,j) = \ell_{v+a_1}(i',j') \Leftrightarrow \ell_{v+a_0}(1,map(i,j)) = \ell_{v+a_1}(1,map(i',j')).
\end{align}
Considering the left-hand side, if $i \neq i'$, then the discrete log $v$ can be computed (e.g., for $a_0 = a_1 = 0$, $v = (j' - j) \cdot (i - i')^{-1}$). We refer to this first bad event as a collision, and can bound its probability by $O(d^2/n)$ using Theorem~\ref{thm:DL}. Hence, the analysis will assume that if $\ell_{v+ a_0}(i,j) = \ell_{v+a_1}(i',j')$, then $i = i'$.

The second bad event is that $\ell_{v+a_0}(1,map(i,j)) = \ell_{v+a_1}(1,map(i',j'))$, but $\ell_{v+a_0}(i,j) \neq \ell_{v+a_1}(i',j')$. The probability of this event will be bounded by $O(d^2/n)$ below.

In order to define $map$,\footnote{This mapping was defined in a slightly different context in~\cite{DKK18}.} assume that we fix elements $D_0,D_1,\ldots,D_{n-1}$, where $D_i \in \mathbb{Z}_n$. We set
\[
    map(i,j) =
\begin{cases}
    D_i + j \cdot i^{-1}, & \text{if } i \neq 0\\
    D_i + j, & \text{otherwise},
\end{cases}
\]
where $i^{-1}$ is the multiplicative inverse of $i$ modulo $n$.

Observe that for $i \neq 0$, if $\ell_v(i,j) = \ell_{v+1}(i,j')$, then $j = j' + i$ implying that
$$\ell_v(1,map(i,j)) = v + D_i + j \cdot i^{-1} = v + D_i + (j' + i)\cdot i^{-1} = v + 1 + D_i + j' \cdot i^{-1} = \ell_{v+1}(1,map(i,j')),$$
and the same equality holds for $i=0$. Hence, assuming no collision, the first (right) part of Equation~(\ref{eq:map}) holds for $a_0 \neq a_1$ (it also holds trivially for $a_0 = a_1$).

For the second part, observe that if $\ell_{v+a_0}(1,map(i,j)) = \ell_{v+a_1}(1,map(i',j')$ for $i,i' \neq 0$, then $v + a_0 + D_i + j \cdot i^{-1} = v + a_1 + D_{i'} + j' \cdot (i')^{-1}$. If $i = i'$, we get $i \cdot a_0 + j = i \cdot a_1 + j'$. Hence,
$$\ell_{v+a_0}(i,j) = i \cdot (v + a_0) + j = i \cdot (v + a_1) + j' = \ell_{v+a_1}(i,j'),$$ as required (a similar equality holds when $i = i' =0$). When $i \neq i'$, we get a second bad event and we choose each $D_i \in \mathbb{Z}_n$ uniformly at random to bound its probability (for any specific $a_0,i,j,a_1,i',j'$) by $1/n$. Summing over all $\binom{2d}{2}$ query pairs of $B^{\mathrm{G}}(x^{(b)},v)$ and  $B^{\mathrm{G}}(x^{(b)},v+1)$, we bound the probability of the second bad event by $O(d^2/n)$.

Note that this requires both executions of $B^{\mathrm{G}}(x^{(b)},v)$ and $B^{\mathrm{G}}(x^{(b)},v+1)$ to have shared randomness. However, this requirement can be removed since our model is non-uniform, and $D_i \in \mathbb{Z}_n$ can be fixed by a standard averaging argument.
\end{proof}

Combined with Lemma~\ref{lem:equiv}, we obtain the following corollary.
\begin{corollary}
\label{corr:conv}
For $b \geq 3 \log b$, there exists a reduction that converts any $(n,b,d,\delta)$-DDLA, for prime $n$, to an $(n,b,d,\delta + O(d^2/n))$-LPHS.
\end{corollary}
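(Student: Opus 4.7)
The plan is to obtain the reduction by simply composing the two reductions already established in this section: first Lemma~\ref{lem:qr}, which converts an arbitrary DDLA into a query-restricted one with a small additive loss, and then the query-restricted DDLA $\Rightarrow$ LPHS direction of Lemma~\ref{lem:equiv}, which is lossless.

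More concretely, starting from an $(n,b,d,\delta)$-DDLA $A$ over a prime-order group $\mathbb{Z}_n$, I would first invoke Lemma~\ref{lem:qr} to obtain a query-restricted $(n,b,d,\delta')$-DDLA $B$ with $\delta' = \delta + O(d^2/n)$. Note that the hypothesis $b \geq 3\log n$ needed in Lemma~\ref{lem:qr} is inherited from the assumption on $b$ in the corollary. Then I would feed $B$ into the second half of the proof of Lemma~\ref{lem:equiv}: since $B$ only issues queries of the form $(1,j)$, the LPHS $h$ constructed from $B$ answers each query $j$ of $B$ by reading position $j$ of its input string, and outputs whatever $B$ outputs. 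Because the cyclic shift of a uniform input string produces another uniform input string, the LPHS error probability matches the DDLA error probability exactly, so $h$ is an $(n,b,d,\delta + O(d^2/n))$-LPHS.

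There is essentially no new obstacle here, since both constituent reductions have already been proved. The only subtlety to verify is that the parameters line up: the query count $d$, the alphabet parameter $b$, and the group size $n$ are preserved by both reductions, and the error terms simply add. One should also note, as remarked right after Lemma~\ref{lem:qr}, that whenever $d = O(n^{1/4})$ the additive $O(d^2/n)$ loss is dominated by the $\Omega(1/d^2)$ lower bound on $\delta$ from Theorem~\ref{thm:1dlower}, so in that regime the reduction is essentially free; this sharpness of the bound is the main quantitative content worth emphasizing after the formal composition is stated.
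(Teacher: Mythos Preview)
Your proposal is correct and mirrors the paper's own argument exactly: the corollary is stated immediately after Lemma~\ref{lem:qr} as the composition of Lemma~\ref{lem:qr} (DDLA $\to$ query-restricted DDLA with additive $O(d^2/n)$ loss) with Lemma~\ref{lem:equiv} (query-restricted DDLA $\leftrightarrow$ LPHS, lossless). Your observation about the $d=O(n^{1/4})$ regime also matches the paper's remark following Lemma~\ref{lem:qr}.
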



\section{Multidimensional LPHS}\label{sec:2dim}

In this section we study the $k$--dimensional generalization of LPHS, focusing mainly on the case $k=2$ (2D-LPHS).
First, in Section~\ref{sec:sub:upper}, we consider the upper bound side.
We start with simple constructions achieving error $\delta=\tilde{O}(d^{-1/2})$ (Section~\ref{sec-simple}) and $\delta=\tilde{O}(d^{-4/5})$ (Section~\ref{sec-lesssimple}). The latter makes a black-box use of the 1-dimensional IRW algorithm of~\cite{DKK18}.
More concretely, for $n = \Omega(d^{6/5})$, any $b \geq 3 \log n$ and any $(r_1,r_2) \in \{0,1\} \times \{0,1\}$, we have
\begin{equation}\label{eq:2Dreq}
\pr_{x \in_R \Sigma_b^{\integers_{n} \times \integers_{n}}}\lbs h(x)-h(x \ll (r_1,r_2)) \neq (r_1,r_2) \rbs = O(d^{-4/5}).
\end{equation}
Since our construction only makes queries in a limited box of dimensions $O(d^{6/5}) \times O(d^{4/5})$ while $n = \Omega(d^{6/5})$, it gives both a cyclic and a non-cyclic LPHS with the same parameters.

This proves a weak version of Theorem~\ref{thm:2dupper}. In Section~\ref{app:2D78} we
 obtain the improved upper bound of Theorem~\ref{thm:2dupper}
by presenting a more intricate algorithm that achieves error rate of $\delta=\tilde O(d^{-7/8})$.
Finally, in Section~\ref{app:2Dopt} we present a heuristic algorithm that we {\em conjecture} to achieve the near-optimal error probability of $\delta=\tilde{O}(d^{-1})$.
This conjecture is supported by experimental evidence.

In Section~\ref{sec:sub:lower} we study limitations of $k$-dimensional LPHS. We prove Theorem~\ref{thm:kdlower}, which for $k=2$ implies that the error probability of a 2D-LPHS must satisfy $\delta=\tilde{\Omega}(d^{-1})$.



\subsection{Upper bounds on 2D-LPHS algorithms}
\label{sec:sub:upper}

We present and analyze 2D-LPHS algorithms achieving error $\delta = \tilde O(d^{-1/2}), \tilde O(d^{-4/5}), \tilde O(d^{-7/8})$, respectively, as well as a heuristic algorithm conjectured to achieve optimal error $\delta = \tilde O(d^{-1})$.


\subsubsection{A simple 2D-LPHS with error rate $\delta = O(d^{-1/2})$}
\label{sec-simple}

We begin by describing a very simple 2D-LPHS algorithm called \textsc{Min-Hash}.

\medskip
\noindent
\textbf{Notation} All integer operations in algorithms within this section are assumed to be floored. For example, we write $x/y$ for $\lfloor x/y \rfloor$ and $\sqrt{n}$ for $\lfloor \sqrt{n} \rfloor$.
\begin{algorithm}
	\Begin{
		\Return $\mathrm{arg\,min}_{i,j\in [0, \sqrt{d}]} \{x[i,j]\}$
	}
	\caption{\textsc{Min-Hash}$(x \in \Sigma_{b}^{\integers_{n}^2},d \in \pintegers)$}
	\label{alg:2Dstage1}
\end{algorithm}

The following lemma captures the performance of \textsc{Min-Hash}.
\begin{lemma}\label{lem:min-hash-pr} For $n = \Omega(d^{1/2})$ and any $b \geq 3 \log n$ and $(r_1,r_2) \in \{0,1\} \times \{0,1\}$,
	\begin{equation}\label{eq:min-hashErr}
	\pr\lbs \textsc{Min-Hash}(x, d)-\textsc{Min-Hash}(x \ll (r_1, r_2) , d) \neq (r_1, r_2) \rbs = O(d^{-1/2}).
	\end{equation}
\end{lemma}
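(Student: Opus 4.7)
The plan is to track the two query windows of the two invocations of \textsc{Min-Hash} inside $x$'s coordinate system and reduce the error analysis to the location of the global minimum of $x$ over their union. Setting $D=\lfloor\sqrt d\rfloor$, the first invocation reads $x$ on $B_A=\{0,\dots,D\}^2$, while the second, after undoing the shift $(r_1,r_2)$, reads $x$ on $B_B=\{r_1,\dots,r_1+D\}\times\{r_2,\dots,r_2+D\}$. Re-expressing the second output in $x$'s coordinates, one sees that~\eqref{eq:min-hashErr} holds iff $\arg\min_{B_A}x=\arg\min_{B_B}x$ in $x$'s coordinates. Let $m=\min_{v\in B_A\cup B_B}x[v]$ and $T=\{v\in B_A\cup B_B:x[v]=m\}$. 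Using any fixed deterministic tie-breaking rule, both invocations return $\arg\min T$ whenever $T\subseteq B_A\cap B_B$, so it suffices to upper bound
\[
\Pr\bigl[\,T\cap(B_A\triangle B_B)\neq\emptyset\,\bigr]\ \le\ \sum_{v\in B_A\triangle B_B}\Pr[v\in T].
\]

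Since $\{x[v]\}_{v\in B_A\cup B_B}$ are i.i.d.\ uniform over $\Sigma_b$, the distribution of $T$ is exchangeable, so $p:=\Pr[v\in T]$ is independent of $v\in B_A\cup B_B$ and equals $\mathbb{E}[|T|]/|B_A\cup B_B|$. The main technical step is the bound $\mathbb{E}[|T|]=1+O(d^{-1/2})$: a direct summation of $(1-j/2^b)^{k-1}$ over the value $j$ of the minimum of $k:=|B_A\cup B_B|=\Theta(d)$ i.i.d.\ uniform elements of $\{0,\dots,2^b-1\}$ yields $\mathbb{E}[|T|]-1=O(k/2^b)$, and the hypotheses $b\ge 3\log n$ and $n=\Omega(\sqrt d)$ give $2^b\ge n^3\ge d^{3/2}$, so $\mathbb{E}[|T|]-1=O(d^{-1/2})$. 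I expect this tie bound to be the main obstacle: a naive union bound over all $\binom{k}{2}$ collision events would contribute $\Omega(d^{1/2})$, which is too large, so the argument has to be tailored to collisions \emph{at the minimum} and exploit the cubic alphabet size.

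The remaining step is an elementary geometric count: $|B_A\cap B_B|=(D+1-r_1)(D+1-r_2)\ge D^2$, $|B_A\cup B_B|=2(D+1)^2-(D+1-r_1)(D+1-r_2)=D^2+O(D)$, and consequently $|B_A\triangle B_B|=O(D)=O(\sqrt d)$. Combining, the failure probability is at most $|B_A\triangle B_B|\cdot p=O(\sqrt d)\cdot O(1/d)=O(d^{-1/2})$, which matches~\eqref{eq:min-hashErr}.
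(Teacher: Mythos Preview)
Your proposal is correct and follows essentially the same approach as the paper: both arguments observe that the two invocations query a common region of size $\Theta(d)$ and a symmetric difference of size $O(\sqrt d)$, and bound the failure probability by the chance that the minimum of i.i.d.\ uniform symbols lands in the symmetric difference (plus a tie term controlled by $b\ge 3\log n$). Your treatment of ties via $\mathbb{E}[|T|]$ is a bit more unified than the paper's separate ``minimum is unique with probability $\ge 1-d/2^b$'' step, but the content is the same.
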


\begin{proof} 
	Notice that no matter what the values of $r_1, r_2 \in \{0, 1\}$ are, both applications of $\textsc{Min-Hash}$ on $x$ and $y$ query the values $G = \{x[i,j]\}_{i,j \in [1, \sqrt{d}]}$, together with some other $2\sqrt{d}-1$ values. Hence $\textsc{Min-Hash}(x, d)$ and $\textsc{Min-Hash}(y, d)$ together read at most $4\sqrt{d}$ values outside of $G$. Under the uniformity assumption of $x$, the probability the minimum of all the symbols read by the two applications of $\textsc{Min-Hash}$ is not in $G$, is bounded by $4\sqrt{d}/d = O(1/\sqrt{d})$. Hence, assuming that the minimum $x[i_0, j_0]$ is in $G$, and that this minimum is unique, we have
	\begin{equation*}
	\begin{gathered}
	\textsc{Min-Hash}(x, d) = (i_0, j_0).\\
	\textsc{Min-Hash}(y, d) = \mathrm{arg\,min}_{i,j\in [0, \sqrt{d}]} \{x[i+r_1,j+r_2)\} = (i_0-r_1, j_0-r_2),
	\end{gathered}
	\end{equation*}
	in which case, $\textsc{Min-Hash}(x, d)-\textsc{Min-Hash}(y, d) = (r_1, r_2)$.
	
	The only thing left for the proof is showing that with a very high probability, the minimum of Algorithm~\ref{alg:2Dstage1} is uniquely attained. It can be easily verified (e.g., by induction on $d$) that the probability that the minimum is not unique, is upper bounded by $d/2^b$. Under the assumption $b \geq 3/2\lg(d)$, this probability is dominated by the $O(1/\sqrt{d})$ error in~\eqref{eq:min-hashErr}.
\end{proof}

\subsubsection{An IRW-based 2D-LPHS with error rate $\delta = O(d^{-4/5})$}
\label{sec-lesssimple}

In this subsection we demonstrate how an 1D-LPHS may be used in order to construct a 2D-LPHS with $\delta = O(d^{-4/5})$. Let us recall the functionality of an optimal 1D-LPHS (see
Theorem~\ref{thm:1dupper}).

\medskip\noindent
\textbf{Optimal 1D-LPHS, rephrased.}
	For $b \geq 3 \log n$, there exists an algorithm $\textsc{optimal1D}\colon \Sigma_b^{\integers_n} \times \integers \to \integers_n$ with the following properties. If $n = \Omega(d^2)$, then
	\[
	\pr_{x\sim \Sigma_b^{\integers_n}}[\textsc{optimal1D}(x, d) - \textsc{optimal1D}( x \ll 1, d) \neq 1] < O(1/d^{2}).
	\]
	
\medskip

The 2D-LPHS is described in the \textsc{Recursive-Hash} algorithm (Algorithm~\ref{alg:2D45}). The algorithm works in two stages, first returning a column $i_1$ and then a row $j_1$. Namely, the parties first try to synchronize on a column via a random walk along their input rows, and then try to synchronize on a row via a random walk along this common column.

Specifically, the column $i_1$ is located by a walk along the fixed input row. The walk uses the \textsc{optimal1D} algorithm with $d/d'-1$ queries (for a parameter $d'$). A query in this algorithm on any column $i_0$ is answered by the \textsc{rec1D} algorithm, which executes \textsc{optimal1D} with $d'$ queries on column $i_0$.

After the column $i_1$ is returned, the \textsc{Recursive-Hash} algorithm runs the \textsc{optimal1D} algorithm along this column, starting from the input row to return the output row $j_1$.


\begin{algorithm}
	\Begin{
		Define $u \in \Sigma_b ^{\integers_n}$ by $u[j] \gets z[i_{0},j]$\\
		$j_0 \gets \textsc{optimal1D}(u, d')$\\
		\Return $z[i_0,j_0 + 10 d'^{2}]$
	}
	\caption{\textsc{rec1D}$(z \in \Sigma_b^{\integers_{n}^2},d' \in \pintegers, i_{0} \in \integers_{n})$}
\end{algorithm}
\begin{algorithm}
	\Begin{
		$i_1 \gets \textsc{optimal1D}(i \mapsto \textsc{rec1D}(z,d^{3/5}-1,i), d^{2/5})$\\
		$j_1 \gets \textsc{optimal1D}(j \mapsto z[i_1,j], d^{2/5})$\\
		\Return $(i_{1},j_{1})$
	}
	\caption{\textsc{Recursive-Hash}$(z \in \Sigma_b^{\integers_{n}^2},d \in \pintegers)$}
	\label{alg:2D45}
\end{algorithm}
\begin{lemma}\label{lem:2D45}
	For $n = \Omega(d^{6/5})$ and any $b \geq 3 \log n$ and $(r_1,r_2) \in \{0,1\} \times \{0,1\}$,
	\begin{equation*}\label{eq:Rec2DErr}
	\pr\lbs \textsc{Rec2D}(x, d)-\textsc{Rec2D}(x \ll (r_1,r_2), d) \neq (r_1, r_2) \rbs =  O(d^{-4/5}).
	\end{equation*}
\end{lemma}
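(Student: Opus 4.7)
The plan is to bound the error of each of the two stages of \textsc{Rec2D} by $O(d^{-4/5})$ and combine via a union bound. Set $d' = d^{3/5} - 1$, let $y = x \ll (r_1, r_2)$, and write $(i_1^A, j_1^A) = \textsc{Rec2D}(x, d)$, $(i_1^B, j_1^B) = \textsc{Rec2D}(y, d)$; the goal is to show $(i_1^A - i_1^B, j_1^A - j_1^B) = (r_1, r_2)$ except with probability $O(d^{-4/5})$. The parameter choice is calibrated so that each inner \textsc{optimal1D} contributes error $O(d'^{-2}) = O(d^{-6/5})$, each of the two outer \textsc{optimal1D} calls contributes error $O((d^{2/5})^{-2}) = O(d^{-4/5})$, and the total contribution of $O(d^{2/5})$ inner failures combines to $O(d^{-4/5})$ as well.

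To analyze stage 1, I would first fix a column index $c$ and consider the \textsc{rec1D} call made by party A on column $c$ of $x$ alongside the \textsc{rec1D} call made by party B on column $c - r_1$ of $y$. Party A's column vector is $u^A[j] = x[c, j]$ and party B's is $u^B[j] = y[c - r_1, j] = x[c, j + r_2] = u^A[j + r_2]$, so the two inner \textsc{optimal1D} invocations form a 1D-LPHS instance with shift $r_2$. By Theorem~\ref{thm:1dupper} the inner call succeeds with probability $1 - O(d^{-6/5})$, and when it does, both parties return the same symbol $x[c, j_0^A + 10 d'^2]$. Hence the virtual sequences $V^A(i) = \textsc{rec1D}(x, d', i)$ and $V^B(i) = \textsc{rec1D}(y, d', i)$ satisfy $V^B(i) = V^A(i + r_1)$ whenever the corresponding inner call succeeds. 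The outer walks together perform at most $2 d^{2/5}$ inner calls, so by a union bound all of them succeed with probability $1 - O(d^{2/5} \cdot d^{-6/5}) = 1 - O(d^{-4/5})$; call this event $E_1$. Conditioned on $E_1$, the outer walk sees virtual sequences that are cyclic shifts of each other by $r_1$, are uniformly distributed in $\Sigma_b$, and are mutually independent across the positions queried: uniformity follows from the fact that the offset $+10 d'^2$ places the returned symbol outside the $O(d'^2)$-wide interval queried by the inner walk, and independence follows because distinct outer queries access distinct columns of $x$. Theorem~\ref{thm:1dupper} applied to the outer walk with $d^{2/5}$ queries then yields $i_1^A - i_1^B = r_1$ except with further probability $O(d^{-4/5})$.

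For stage 2, assuming $i_1^A - i_1^B = r_1$, party B's column-$i_1^B$ vector equals $y[i_1^B, \cdot] = x[i_1^A, \cdot + r_2]$, so the stage-2 inputs are a cyclic shift by $r_2$ of the same column of $x$. Invoking Theorem~\ref{thm:1dupper} with $d^{2/5}$ queries gives $j_1^A - j_1^B = r_2$ except with probability $O(d^{-4/5})$. Summing the three contributions via a union bound yields the advertised $O(d^{-4/5})$ error bound. The main technical obstacle is the rigorous justification of the uniformity of the outer virtual sequence: one must verify both that each virtual symbol is uniform in $\Sigma_b$ conditional on all queries performed during the corresponding inner call (handled by the $10 d'^2$ buffer that isolates the returned symbol from the explored interval), and that virtual symbols at different outer positions are mutually independent (which follows from column disjointness). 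A smaller subtlety arises in stage 2, since the column $i_1^A$ may have been partially exposed during stage 1; but only $O(d^{3/5})$ of its positions are revealed, which is a vanishing fraction of its length $n = \Omega(d^{6/5})$, so the 1D-LPHS guarantee still applies up to absolute constants.
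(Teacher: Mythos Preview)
Your proposal is correct and follows the paper's proof essentially step for step: bound each inner \textsc{rec1D} discrepancy at $O(d^{-6/5})$, union-bound over the $O(d^{2/5})$ inner calls, invoke the 1D guarantee on the outer walk (using the $10d'^2$ buffer to certify that the returned virtual symbols are fresh and column-wise independent), and then repeat on stage~2. The paper in fact dismisses stage~2 with a single ``similarly''; you go further and flag the conditioning issue. Your ``vanishing fraction of positions revealed'' resolution is the one heuristic step---a small revealed fraction does not by itself guarantee the LPHS bound survives conditioning on $i_1^A$. The clean fix just reuses your own stage-1 observation: the virtual symbol $V^A(c)=x[c,j_0^c+10d'^2]$ sits beyond both the inner-walk interval and the stage-2 interval $[0,O(d^{4/5})]$, so $V^A(c)$ (and hence $i_1^A$, a function of $V^A$) is independent of the stage-2 failure event $F_c$ on column $c$; therefore $\Pr[F_{i_1^A}]=\sum_c\Pr[i_1^A=c]\Pr[F_c]=O(d^{-4/5})$ with no conditioning bias.
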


\begin{proof} 
	First, notice $\textsc{Recursive-Hash}(x, d)$ makes at most $d$ queries to $x$. This is because all but $d^{2/5}$ queries are made inside $\textsc{rec1D}$. This procedure is called at most $d^{2/5}$ times, and queries $u$ at most $d^{3/5}-1$ times, which in turn makes exactly one query to $x$.
	
	Second, denoting $y = x \ll (r_1,r_2)$, notice that since $r_1 \in \{0, 1\}$, we have from 
	Theorem \ref{thm:1dupper},
	\[
	\pr \lbs \textsc{rec1D}(x,d^{3/5}-1,i+r_1) \neq \textsc{rec1D}(y,d^{3/5}-1,i) \rbs \leq O(d^{-6/5}).
	\]
	Since $\textsc{Recursive-Hash}$ makes at most $d^{2/5}$ calls to $\textsc{rec1D}$, then except for probability $O(d^{-6/5}\cdot d^{2/5})=O(d^{-4/5})$, we have $\textsc{rec1D}(x,d^{3/5},i+r_1) = \textsc{rec1D}(y,d^{3/5},i)$ for all $i$'s such that both sides were computed in $\textsc{Recursive-Hash}(x,d)$ and $\textsc{Recursive-Hash}(y,d)$, respectively. If the output of $\textsc{rec1D}(x,d^{3/5},i)$ would be uniformly distributed in $\Sigma_b$ (and independent of all other queries),
Theorem~\ref{thm:1dupper}
would imply $i_{1}^{x}+r_1=i_{1}^{y}$ except for probability $O(d^{-4/5})$, where $i_{1}^{z}$ is the value of $i_{1}$ computed during $\textsc{Recursive-Hash}(z,d)$, and $z \in \{x,y\}$. Similarly, $j_{1}$ is synchronized except for probability $O(d^{-4/5})$. In conclusion, the total error probability of the algorithm is $O(d^{-4/5})$. Finally, $x[i_0, j_0+10d'^{2}]$ is indeed uniformly distributed in $\Sigma_b$ independently of all other queries, as \textsc{optimal1D} does not query that value.
\end{proof}

\subsubsection{3-Stage-Hash: a 2D-LPHS algorithm with $\delta = \tilde O(d^{-7/8})$}\label{app:2D78}

In this subsection we prove Theorem~\ref{thm:2dupper}, by presenting the algorithm \textsc{3-Stage-Hash} which achieves error rate of $\delta=\tilde{O}(d^{-7/8})$.
\textsc{3-Stage-Hash} is composed of 3 stages. The first is \textsc{Min-Hash} and we refer to the other two as \textsc{stage2} and \textsc{stage3}.

\begin{algorithm}
	\Begin{
		$(i_0, j_0) \gets \textsc{Min-Hash}(z,\ d/3)$\\
		$(i_1, j_1) \gets \textsc{stage2}(z,\ d/3,\ i_0 + 2\sqrt{d},\ j_0+2\sqrt{d})$\\
		$(i_2, j_2) \gets \textsc{stage3}(z,\ d/3,\ i_1 + 2d^{3/4},\ j_1+2d^{3/4})$\\
		\Return $(i_2, j_2)$
	}
	\caption{\textsc{3-Stage-Hash}$(z \in \Sigma_b^{\integers_n^2},d \in \pintegers)$}
	\label{alg:2D3stage}
\end{algorithm}

\begin{algorithm}[t]
	\Begin{
		$L \gets \sqrt[4]{d'}$\\
		Let $\psi_{1}, \psi_{2} \colon \Sigma_b \to \integers_{L}$ be $\psi_{1}(t)=t\ \mathrm{mod}\,L$ and $\psi_{2}(t)=\lfloor t/L\rfloor\ \mathrm{mod}\,L$\\
		$P, P' \gets \es, \es$\\
		$i \gets i_0$\\
		\For{$s=1$ to $\sqrt{d'}$}{
			$j \gets j_0$\\
			\For{$t=1$ to $\sqrt{d'}$}{
				$P' \gets P' \cup \{(i,j)\}$\\
				$j \gets j + 1 + \psi_{1}(z[i, j])$
			}
			$i \gets i + 1 + \psi_{2}(\min_{p'\in P'}\{z[p']\})$\\
			$P, P' \gets P \cup P', \es$
		}
		\Return $\mathrm{arg\,min}_{p\in P}\{z[p]\}$
	}
	\caption{\textsc{stage2}$(z \in \Sigma_b^{\integers_n^2}, d' \in \pintegers, i_0 \in \integers_{n}, j_0 \in \integers_{n})$}
	\label{alg:stage2}
\end{algorithm}
	
\begin{algorithm}
	\Begin{
		Let $\psi \colon \Sigma_{b} \to \{-d'^{3/8}, \ldots, d'^{3/8}\}$ be $\psi(t) = (t\ \mathrm{mod}\, (2d'^{3/8}+1))-d'^{3/8}$\\
		$P \gets \es$\\
		$(i,j) \gets (i_0, j_0)$\\
		\For{$s=1$ to $d'$}{
			$P \gets P \cup \{(i,j)\}$\\
			$(i,j) \gets (i+1, j + \psi(z[i,j]))$
		}
		\Return $\mathrm{arg\,min}_{p\in P}\{z[p]\}$
	}
	\caption{\textsc{stage3}$(z \in \Sigma_b^{\integers_n^2}, d' \in \pintegers, i_0 \in \integers_{n}, j_0 \in \integers_{n})$}
\end{algorithm}

%


\begin{lemma}\label{lem:2D78-err}
Let $2^b\geq d^4$, $n \geq \Omega(d)$, and $r_1,r_2 \in \{0,1\}$. Then,
	\begin{equation}\label{eq:2D78}
	\pr\lbs \textsc{3-Stage-Hash}(x, d)-\textsc{3-Stage-Hash}(y, d) \neq (r_1, r_2) \rbs < O(d^{-7/8}),
	\end{equation}

\end{lemma}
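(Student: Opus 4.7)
The plan is to decompose the error into contributions from the three stages via nested events. For $k\in\{1,2,3\}$, let $E_k$ denote the event that the stage-$k$ outputs are synchronized, i.e., differ in common coordinates by exactly $(r_1,r_2)$. The key structural observation is that both \textsc{stage2} and \textsc{stage3} are deterministic functions of $z$ and their (common-coordinate) starting position, so synchronization is preserved: $E_{k}$ implies $E_{k+1}$ automatically. Therefore
\[
\Pr[\neg E_3]\ \leq\ \Pr[\neg E_1]\cdot \Pr[\neg E_2\mid \neg E_1]\cdot \Pr[\neg E_3\mid \neg E_1\wedge\neg E_2],
\]
and the goal reduces to bounding the two conditional factors.

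First I would handle Stage 1 directly via Lemma~\ref{lem:min-hash-pr} (with parameter $d/3$), giving $\Pr[\neg E_1]=O(d^{-1/2})$. A second equally important observation is that both Min-Hash outputs necessarily lie in the $\sqrt{d/3}\times \sqrt{d/3}$ box, so regardless of success the common-coordinate gap entering Stage 2 is at most $(\sqrt{d},\sqrt{d})$. The offset $2\sqrt{d}$ between Stages 1 and 2 (and $2d^{3/4}$ between Stages 2 and 3) ensures that the regions queried by successive stages are disjoint, so later queries see fresh independent uniform values up to collision bounds of order $O(d^2/2^b)=O(d^{-2})$ absorbed via $2^b\geq d^4$.

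Next, for Stage 2 conditional on $\neg E_1$ (starting gap $\leq \sqrt{d}$ per axis), I would analyze the $\sqrt{d/3}$ outer iterations. The mechanism is that iteration $s$'s horizontal step is $1+\psi_2(\min_{p\in P'}z[p])$, a pseudo-random value distilled from the preceding vertical walk. Once both parties' vertical walks attain their minimum at the same common-coordinate point, they perform identical horizontal steps and remain aligned; subsequent vertical walks then coincide pointwise and $E_2$ follows. I would track the alignment gap as a supermartingale across outer iterations, using the vertical walk structure (range $\sim d^{3/4}$, step $\sim d^{1/4}$, length $\sqrt{d/3}$) to lower-bound the per-iteration alignment probability via birthday-type overlap estimates. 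The target is $\Pr[\neg E_2\mid\neg E_1]=O(d^{-1/8})$; moreover, even on failure the Stage 2 outputs remain within an $O(d^{3/4})\times O(d^{3/4})$ window, bounding the Stage 3 starting gap.

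Finally, for Stage 3 conditional on $\neg E_2$, I would parameterize each walk by its $i$-coordinate as $j^x(i), j^y(i)$, so that the $j$-difference $\delta(i)=j^x(i)-j^y(i)$ evolves as a symmetric random walk with step size $\Theta(d^{3/8})$ absorbing at $0$. Its initial magnitude is $O(d^{3/4})$ (combining the direct $\beta$-gap with up to $|\alpha|$ pseudo-random $\psi$-accumulations of the lagging party), so a standard hitting-time estimate gives meeting time $\tau=O(d^{3/4})$ except with negligible probability. After meeting, the walks share a common suffix of length $d/3-\tau\gtrsim d$, and the output $\arg\min$'s agree iff the global minimum over the visited-point union lies in this shared suffix; uniformity of $z$ makes this happen with failure probability $O(\tau/d)=O(d^{-1/4})$. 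Chaining, $\Pr[\neg E_3]\leq O(d^{-1/2})\cdot O(d^{-1/8})\cdot O(d^{-1/4})=O(d^{-7/8})$. The main obstacle will be the Stage 2 supermartingale analysis---identifying the right potential for the alignment gap across outer iterations and verifying the $O(d^{-1/8})$ bound, which is what dictates the particular parameter choices (step size $d^{1/4}$, walk length $\sqrt{d/3}$) in \textsc{stage2}. The Stage 3 argument reduces to familiar 1D random-walk hitting-time and order-statistics estimates.
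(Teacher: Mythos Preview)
Your decomposition into three nested conditional factors is correct and matches the paper, as does the observation that the stage offsets ensure fresh randomness. However, you have the Stage~2 and Stage~3 conditional bounds swapped: the paper establishes $\Pr[\neg E_2\mid\neg E_1]=O(d^{-1/4})$ and $\Pr[\neg E_3\mid\neg E_2]=O(d^{-1/8})$. The product is the same, but both of your stage-level arguments contain genuine errors, so neither bound would actually go through as you state it.

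For Stage~3, the claim that $\tau=O(d^{3/4})$ ``except with negligible probability'' is false. After the $i$-coordinates align, the $j$-difference performs a random walk with step size $m=\Theta(d^{3/8})$ from initial magnitude $J$ with $\be[J^2]=O(d^{3/2})$; hitting times of such walks have heavy polynomial tails ($\Pr[\tau>t]$ decays only like $t^{-1/2}$ after rescaling), so at $t\sim d^{3/4}$ the non-hitting probability is still $\Theta(1)$, not negligible. The correct estimate (the paper's Lemma~\ref{lem:sparse_walks}, proved via an optional-stopping argument) is $\be[\min(\tau,d)]=O\bigl((m+|J|/m)\sqrt{d}\bigr)=O(d^{7/8})$, which via Markov yields only $O(d^{-1/8})$. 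For Stage~2, your mechanism description is off: $j$ is reset to $j_0$ at the start of \emph{every} outer iteration, so even once the $i$-coordinates align, the next vertical walks start at $j_0^x$ and $j_0^y$ with $j_0^x-j_0^y\neq r_2$ (we are conditioning on $\neg E_1$) and do \emph{not} coincide pointwise; $i$-synchronization is therefore not absorbing, and the paper must explicitly track alternating sync/desync phases. The resulting count of non-shared queries is $\Theta(d^{3/4})$ (the $i$-walks with step $\sim d^{1/4}$ need $\Theta(d^{1/4})$ steps to meet from distance $\sqrt{d}$, each carrying $\sqrt{d'}$ vertical queries), giving $O(d^{-1/4})$; your target $O(d^{-1/8})$ is not attainable for this stage.
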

In order to prove Lemma~\ref{lem:2D78-err}, we need the following facts,
which we prove later.
\begin{lemma}\label{lem:geoms}
	Let $S_1, S_2, \ldots$ be a sequence of i.i.d. geometric random variables: $S_i \sim \mrm{Geom}(p)$. If $K$ is the minimal integer with $\sum_{k=1}^{K} S_k \geq r$,	then $\be[K] \leq rp+1$.
\end{lemma}

\begin{lemma}\label{lem:sparse_walks}
	Let $I_0, I_1, I_2, \ldots$ be a random walk with $I_{k+1}-I_{k}$ being i.i.d. variables distributed as the difference of two independent random variables uniformly distributed in $\{0,1,\ldots, m\}$, with $m \in \pintegers$. If $T$ is the minimal time with $I_T = 0$, then $\be[\min(T, r)] \leq O\left( (m + |I_0|/m) \sqrt{r}\right)$.
\end{lemma}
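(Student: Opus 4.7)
The plan is to decompose the truncated hitting time into two phases via the strong Markov property, writing $\sigma^2 = m(m+2)/6 = \Theta(m^2)$ for the per-step variance of the walk.

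\textbf{Phase 1 (coarse hitting of $[-m,m]$).} Let $\tau := \min\{k : |I_k| \leq m\}$. Since each step has magnitude at most $m$, the walk cannot skip across $[-m,m]$ in a single jump, so $\tau \leq T$ and $|I_\tau| \leq m$. The classical one-sided first-passage tail estimate for mean-zero integer random walks with finite variance (derivable from Donsker's invariance principle and continuity of first-passage functionals for Brownian motion, or directly via a reflection-type argument) gives $\Pr[\tau > k \mid I_0 = x_0] = O(|x_0|/(\sigma\sqrt{k}))$ for every $k \geq 1$. Summing, $\be[\tau \wedge r \mid I_0] = \sum_{k=0}^{r-1}\Pr[\tau > k] = O(|I_0|\sqrt{r}/m)$.

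\textbf{Phase 2 (exact hitting of $0$ from $|y_0| \leq m$).} I claim $\be[T \wedge r \mid I_0 = y_0] = O(m\sqrt{r})$ for every $|y_0| \leq m$. The step distribution is the symmetric triangular convolution of two uniforms on $\{0,\ldots,m\}$; a standard local central limit computation (Fourier inversion against the Fej\'er-type characteristic function, which satisfies $\phi(\theta)^k \lesssim e^{-ckm^2\theta^2}$ near $0$) yields $\Pr[I_k = 0 \mid I_0 = y_0] \leq C/(m\sqrt{k})$, uniformly in $y_0$, for $k \geq 1$. Summing gives $\be[N_r \mid I_0 = y_0] = O(\sqrt{r}/m)$, where $N_r$ counts visits to $0$ up to time $r$. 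Combining Feller's generating-function identity $\sum_k \Pr[T > k \mid I_0 = 0]\, z^k = ((1-z)U(z))^{-1}$ with $U(z) = \sum_k \Pr[I_k = 0 \mid I_0 = 0]\, z^k \sim C'/(m\sqrt{1-z})$ as $z \uparrow 1$ (via a Tauberian theorem applied to the visit-count estimate) yields $\sum_{k=0}^{r-1}\Pr[T > k \mid I_0 = 0] = O(m\sqrt{r})$, and the bound from general $|y_0| \leq m$ follows by a strong-Markov comparison after the first visit to $0$.

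\textbf{Combining and main obstacle.} By the strong Markov property at $\tau$, $\be[T \wedge r \mid I_0] \leq \be[\tau \wedge r \mid I_0] + \sup_{|y_0|\leq m}\be[T \wedge r \mid I_0 = y_0] = O((|I_0|/m + m)\sqrt{r})$ when $|I_0| \leq m\sqrt{r}$; for $|I_0| > m\sqrt{r}$ the trivial bound $\min(T,r) \leq r \leq (|I_0|/m)\sqrt{r}$ already yields the claim. The main obstacle is obtaining the correct $m$-dependence in Phase 2: the $1/m$ factor in the local CLT reflects that the step is smeared over $\Theta(m)$ integers, and the $m$ in $\Pr[T > k \mid I_0 = 0] = O(m/\sqrt{k})$ arises via the Tauberian/renewal duality converting $\be[N_r] = \Theta(\sqrt{r}/m)$ into a mean truncated return time of $\Theta(m\sqrt{r})$. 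A more elementary alternative that avoids Tauberian machinery is to upper bound $\Pr[T > k]$ by conditioning on the last visit to $0$ in $[0,k]$ and applying the local CLT estimate twice, which gives the same $O(m/\sqrt{k})$ tail.
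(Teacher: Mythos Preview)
Your two-phase decomposition (reach a neighborhood of $0$, then hit $0$ exactly) matches the paper's skeleton, but the techniques inside each phase are genuinely different. The paper handles Phase~1 by an elementary optional-stopping argument: it introduces an auxiliary barrier at $L=m\sqrt{r}$, applies optional stopping to both the martingale $I_k$ and to $I_k^2 - sk$ (with $s=\var(I_1-I_0)$), and reads off $\be[T']\le O(|I_0|\sqrt{r}/m)$ directly. For Phase~2 the paper avoids analytic machinery entirely: it observes that whenever $|I_t|\le (m+1)/2$ the next step lands on $0$ with probability $\ge 1/(2m+2)$, so the number of visits to the strip $[-(m+1)/2,(m+1)/2]$ before hitting $0$ is stochastically dominated by a geometric random variable with mean $O(m)$; each inter-visit time is then bounded by the Phase-1 estimate with $|I_0|=O(m)$, giving $O(\sqrt{r})$ per excursion and $O(m\sqrt{r})$ in total. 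Your route through first-passage tails, the local CLT, Feller's renewal identity and a Tauberian theorem is heavier but perfectly standard; the paper's argument is more self-contained and arguably more robust.

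There is, however, a genuine gap in your Phase~2. You establish $\sum_{k<r}\Pr[T>k\mid I_0=0]=O(m\sqrt{r})$ via $U(z)\sim C'/(m\sqrt{1-z})$ and the identity $\sum_k \Pr[T>k]z^k = 1/((1-z)U(z))$, but then assert that the bound for general $|y_0|\le m$ ``follows by a strong-Markov comparison after the first visit to $0$.'' This is circular: the first visit to $0$ from $y_0$ is exactly the time $T$ you are trying to bound, so applying the strong Markov property there yields no information about $T$ itself. What you actually need is the analogue of Feller's identity from a nonzero starting point, namely $\sum_k \Pr[T>k\mid I_0=y_0]z^k = (U(z)-U_{y_0}(z))/((1-z)U(z))$, together with a bound showing $U(z)-U_{y_0}(z)=1+O(1)$ uniformly in $|y_0|\le m$; the latter requires a refined local CLT estimate of the form $|u_n(0)-u_n(y_0)|=O(y_0^2/(\sigma^3 n^{3/2}))$, which you have not stated. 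Your ``more elementary alternative'' via the last-exit decomposition runs into the same issue: the decomposition yields $\Pr[T>k\mid y_0]=q_k(0)+\sum_{j\ge 1}(u_j(0)-u_j(y_0))q_{k-j}(0)$, which again needs control on $u_j(0)-u_j(y_0)$. A quicker fix is to borrow the paper's idea: redefine your Phase-1 target as $|I_k|\le m/2$, note that from there a single step hits $0$ with probability $\Theta(1/m)$, and iterate Phase~1 a geometric-in-$1/m$ number of times.
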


Given these facts, we turn to Lemma~\ref{lem:2D78-err}.

\begin{proof}[of Lemma~\ref{lem:2D78-err}, sketch]
	Let $(i_k^z, j_k^z)$ be the values computed as $(i_k, j_k)$ at $\textsc{3-Stage-Hash}(z, d)$, for $k \in \{0,1,2\}$ and $z \in \{x,y\}$. We say that $(i_k^x, j_k^x)$ and $(i_k^y, j_k^y)$ are synchronized if $(i_k^x, j_k^x) - (i_k^y, j_k^y) = (r_1, r_2)$. Observe that if $(i_k^x, j_k^x)$ and $(i_k^y, j_k^y)$ are synchronized, then so are $(i_{k+1}^x, j_{k+1}^x)$ and $(i_{k+1}^y, j_{k+1}^y)$. This is because each stage $k+1$ of $\textsc{3-Stage-Hash}(z,d)$ deterministically depends on values queried from $z$ with offset $(i_k^z, j_k^z)$, so that evaluations keep being aligned. Overall,
	\begin{align}
	\begin{split}
		\delta \ddd &\pr_{x,y,r_1,r_2}[\textsc{3-Stage-Hash}(x, d) - \textsc{3-Stage-Hash}(y, d) \neq (r_1,r_2)]
		\\
		&\leq
		\pr[(i_0^x, j_0^x) - (i_0^y, j_0^y) \neq (r_1, r_2)] \cdot
		\\
		&\cdot \prod_{k=1}^{2}\pr\left[\given{(i_k^x, j_k^x) - (i_k^y, j_k^y) \neq (r_1, r_2)}{(i_{k-1}^x, j_{k-1}^x) - (i_{k-1}^y, j_{k-1}^y) \neq (r_1,r_2)}\right].
	\end{split}
	\end{align}
	Hence, to bound $\delta$ it is sufficient to verify the following three claims:
	\begin{enumerate}
	\item $\pr[(i_0^x, j_0^x) - (i_0^y, j_0^y) \neq (r_1, r_2)]\leq O(1/\sqrt{d})$.
	\item $\pr\left[\given{(i_1^x, j_1^x) - (i_1^y, j_1^y) \neq (r_1, r_2)}{(i_0^x, j_0^x) - (i_0^y, j_0^y) \neq (r_1, r_2)}\right] \leq O(1/\sqrt[4]{d})$.
	\item $\pr\left[\given{(i_2^x, j_2^x) - (i_2^y, j_2^y) \neq (r_1, r_2)}{(i_1^x, j_1^x) - (i_1^y, j_1^y) \neq (r_1, r_2)}\right] \leq O(1/\sqrt[8]{d})$.
	\end{enumerate}
	\medskip\noindent \textbf{\textit{Claim 1)}} Follows from Lemma~\ref{lem:min-hash-pr}.
	
	\medskip\noindent \textbf{\textit{Claim 2)}}	
	Since $\textsc{Min-Hash}(z,d)$ scans a $\sqrt{d} \times \sqrt{d}$ area and outputs a point inside it, we are guaranteed that $|i_0^x-i_0^y| \leq \sqrt{d}+1$ and $|j_0^x-j_0^y| \leq \sqrt{d} + 1$. Because $\textsc{stage2}$ is fed with the output point of $\textsc{Min-Hash}$ shifted by $2\sqrt{d}$ in each axis, its queries do not overlap these of $\textsc{Min-Hash}$, and its performance is independent of the $\textsc{Min-Hash}$ phase. Moreover, $\textsc{stage2}$ can be modeled as a random walk on the $i$ axis, whose steps are integers uniformly distributed in $[1, \sqrt{d'}]$ (with $d'=d/3$, as in $\textsc{stage2}$), which are determined by some random walk on the $j$ axis. Denote by $I_1^x, \ldots, I_{\sqrt{d'}}^x$ and $I_1^y, \ldots, I_{\sqrt{d'}}^y$ the sequences of $i$'s observed by $\textsc{stage2}$ applied on $x$ and on $y$.
	
	In order to compute the probability that the outputs of $\textsc{stage2}(x, d')$ and $\textsc{stage2}(y, d')$ are not synchronized, it is sufficient (following the proof of Lemma~\ref{lem:min-hash-pr}) to count the number of queries that the two processes make, which are not shared. These queries can be classified into two categories: queries with non-shared $i$, and queries with shared $i$ and non-shared $j$. Our goal is to show each class contains on average $O(d'^{3/4})$ such queries, implying that the probability of the outputs not being synchronized is $O(d'^{3/4}/d')$ (similarly to Lemma~\ref{lem:min-hash-pr}).
	We start by reasoning about the first class of queries, and then proceed to the second.
	
	Let $U_1$ denote the total number of $i$-steps until $i^x$ and $i^y$ are synchronized (i.e. $U_1 = k+k'$ when $k,k'$ are minimal with $I_k^x-I_{k'}^y=r_1$). Up to this point, the two $\textsc{stage2}$ applications act independently, as their queries do not overlap. Using~\cite[Lemmas 3,5]{DKK18} with $b \leq \sqrt{d}+1$ and $L=\sqrt[4]{d'}$ we see $\be[U_1] \leq O(d^{1/4})$. Next, we note that once $I_k^x-I_{k'}^y=r_1$, it is likely that $I_{k+1}^x-I_{k'+1}^y=r_1$. Specifically, we will show
	\[
		\pr \left[ \given{I_{k+1}^x-I_{k'+1}^y \neq r_1}{I_{k}^x-I_{k'}^y=r_1} \right] \leq O(1/\sqrt{d}).
	\]
	Assuming this, the two walks make $U_1$ unsynchronized steps, then $S_1$ synchronized steps with $S_1 \sim \mrm{Geom}(O(1/\sqrt{d}))$ distributed geometrically. The walks then make another $U_2$ unsynchronized steps, with~\cite[Lemma 5]{DKK18} yielding $\be[U_2] \leq O(d^{1/4})$, followed by $S_2$ synchronized steps with $S_2 \sim \mrm{Geom}(O(1/\sqrt{d}))$. The process continues this way until one of the walks has completed its $\sqrt{d'}$ steps. Using Lemma~\ref{lem:geoms}, the expected number of such phases of synchronization-unsychronization is $\leq O(\sqrt{d'} \cdot \sqrt{1/d} + 1) = O(1)$. Combining this with the fact that $\be[U_k] = O(d^{1/4})$, we deduce that the expected number of unsynchronized $i$-steps is $O(d^{1/4})$. Each such step involves $\sqrt{d'}$ $j$-steps, so the total number of queries with non-shared $i$ is $O(d^{3/4})$.
	
	Regarding the steps with shared $i$ and non-shared $j$, random-walk arguments similar to the above argument imply that since on each shared $i$, the two $j$-walks start with distance $O(\sqrt{d})$, and have steps of size $\Theta(d^{1/4})$, they are expected to meet after $O(d^{1/4})$ queries (follows from~\cite[Lemmas 3,5]{DKK18}). Since there are $O(\sqrt{d})$ $i$-steps, the total number of non-shared such queries is $O(d^{3/4})$ as well.
	
	\medskip\noindent \textbf{\textit{Claim 3)}}
	Similarly to the previous claim, the queries made by stages before $\textsc{stage3}$ are confined to a square area of size $(d^{3/4}+3\sqrt{d})\times (d^{3/4}+3\sqrt{d})$, and since the input is shifted by $2d^{3/4}$, the queries of $\textsc{stage3}$ do not overlap previous stages. Note that the queries made by $\textsc{stage3}$ are confined to a $2d \times 2d$ square except for a negligibly small error prbobability ($\exp(-\Omega(d^{1/4}))$) due to Hoeffding's inequality, and since $n \geq \Omega(d)$ (recall $x,y \in \Sigma_b^{\integers_{n}^2}$), the queries of the different stages do not overlap (with high probability) even though the index space of $x,y$ is cyclic.

	It is clear that once the two walks of $\textsc{stage3}$ on $x$ and on $y$ are synchronized, they remain synchronized. Let $T$ be the total number of steps until the two walks share a point. There are at most $\min(2T, d')$ steps which are not shared, and the failure probability is $\leq \min(2T, d') / d'$, similarly to the proof of Lemma~\ref{lem:min-hash-pr}. Thus, it is sufficient to verify $\be[\min(T, d)] \leq d^{7/8}$. Clearly, after $|i_1^x-i_1^y|$ steps, the walks are being synchronized with respect to the $i$-axis. Let $J$ denote the random variable measuring their distance on the $j$-axis, once the walks first share this same $i$. Since each of the advances of $j$ are independent of the other steps,
	\[
		\be[J^2] = |j_1^x - j_1^y|^2 + \sum_{t=0}^{|i_1^x-i_1^y|} \be[S_t^2],
	\]
	where $S_t$ is the jump on the $j$-axis on the $t$-step of the runner-up walk. In particular $\be[S_t^2] \leq ((d/3)^{3/8})^2 \leq d^{3/4}$. Since $|i_1^x-i_1^y| \leq 2d^{3/4}$, we overall deduce $\be[J^2] \leq O(d^{3/2})$.
	
	From this point on, the walks of $\textsc{stage3}(x, d')$ and $\textsc{stage3}(y, d')$ keep being aligned with respect to the $i$-axis. Once they meet on the $j$-axis, they will remain synchronized. The distance on the $j$-axis between the walks can be modeled as a one dimensional random walk, starting at $J$, and having independent steps whose lengths are a difference of two independent variables uniformly distributed in $\{0, 1, \ldots, 2d'^{3/8}\}$. Once this difference walk hits $0$, the walks keep being synchronized. Lemma~\ref{lem:sparse_walks} then immediately yields
	\[
		\be[\min(T, d')] \leq |i_1^x-i_1^y| + O\left( (d'^{3/8}+\be|J|/d'^{3/8})\sqrt{d'} \right).
	\]
	Substituting $\be[J]^2 \leq \be[J^2] = O(d^{3/2})$, we obtain $\be[\min(T, d')] \leq d^{7/8}$, as required.
\end{proof}

We now fill in the proofs of the above-stated facts.

\begin{proof}[of Lemma~\ref{lem:geoms}]
	Since each $S_i$ counts the number of Bernoulli-$p$ variables until success, $K$ distributes as $1+$ the number of successful Bernoulli-$p$ variables, out of $r$. This interpretation immediately gives $\be[K] = p(r-1)+1$.
\end{proof}

\begin{proof}[of Lemma~\ref{lem:sparse_walks}]
	Let $T_0, T_1, T_2, \ldots$ be the sequence of times $t\geq 0$ with $|I_t| \leq \frac{m+1}{2}$ (in increasing order).
	Observe that for all $i$, the event $I_{T_i+1}=0$ happens with probability $\geq \frac{1}{2(m+1)}$, even when conditioning on the trajectory $\{I_t\}_{t \leq T_i}$ up to time $T_i$.
	Let $K$ be minimal with $I_{T_K + 1} = 0$. Using the (probabilistic) chain rule this observation means that
	\[
		\pr [K \geq k ] \leq (1 - 1/(2m+2))^k.
	\]
	When combined with
	\[
		\be[\min(T_K, r)] \leq \be[\min(T_0, r)] + \sum_{k=1}^{\infty} \be\left[\one_{\{K \geq k-1\}}\cdot \min(T_k-T_{k-1}, r)\right],
	\]
	we deduce
	\begin{equation}\label{eq:good}
		\be[\min(T_K, r)] \leq \be[\min(T_0, r)] + \sum_{k=1}^{\infty} \left(\frac{2m+1}{2m+2}\right)^{k-1} \be\left[\given{\min(T_k - T_{k-1}, r)}{K \geq k-1}\right].
	\end{equation}
	Clearly, upper bounding $\be[\min(T_K, r)]$ is relevant, as if $T$ is the minimal time with $I_T = 0$, then $T \leq T_K+1$, and in particular, $\min(T, r) \leq \min(T_K, r) + 1$.
	We claim the following:
	\begin{enumerate}
	\item $\be[\min(T_0, r)] \leq O(1 + |I_0|\sqrt{r}/m )$.
	\item For all $i$, and all values of $\{I_t\}_{t \leq T_i}$, $\be\left[\given{\min(T_{i+1} - T_{i}, r)}{I_0, I_1,\ldots, I_{T_i}}\right] \leq O(\sqrt{r})$.
	\end{enumerate}
	These claims together with~\eqref{eq:good} and $\min(T, r) \leq \min(T_K, r) + 1$ give
	\[
		\be[\min(T, r)] \leq O(1 + |I_0|\sqrt{r}/m) + \sum_{k=1}^{\infty} \left(\frac{2m+1}{2m+2}\right)^{k-1} O(\sqrt{r}) \leq O(|I_0|\sqrt{r}/m + m\sqrt{r}),
	\]
	as required.
	Let us verify the above claims.
	
	\medskip\noindent \textbf{\textit{Claim 2)}} This is a specialization of Claim~1 to the walk $I_{T_i+1}, I_{T_i+2}, \ldots$, satisfying $|I_{T_i+1}| \leq \frac{3m+1}{2}$.
	
	\medskip\noindent \textbf{\textit{Claim 1)}}	
	Without loss of generality assume $I_0 \geq 0$ (due to symmetry).
	Write $L = m\sqrt{r}$.
	Instead of the stopping time $\min(T_0, r)$, consider the stopping time $T'$ which is the minimal (time) $t\geq 0$ with $I_{t} \leq (m+1)/2$ or $I_{t} > L$. It is standard to show that $\pr[T' > k]$ decreases exponentially fast with $k$ (albeit with deficient constants), and so all quantities presented in the proof will turn out to be finite (in particular, $\be[T']$).
	
	Since the definition of $T_0$ is similar to that of $T'$, except that the latter allows to stop also when $I_{t} > L$, we may upper bound $\be[\min(T_0, r)]$ by $\be[T'] + r\pr[I_{T'} > L]$, i.e., we compensate by $r$ in all cases when $T'$ is not identical to $T_0$.
	
	Since $I_{k+1}-I_{k}$ is a symmetric random variable, and is independent of $I_0, \ldots, I_k$, the sequence $I_0, I_1, \ldots$ constitutes a martingale. In particular, by the \emph{Optional stopping theorem}, $\be[I_{T'}] = I_0$. This implies, together with that $|I_{k+1}-I_{k}| \leq m$
	\[
		I_0 = \be[I_{T'}] \geq -(m+1)/2 \cdot \pr[I_{T'} \leq (m+1)/2] + L \pr[I_{T'} > L] \geq L \pr[I_{T'} > L] - m,
	\]
	and in particular, $\pr[I_{T'} > L] \leq (m+I_0)/L$. Recall that our goal of proving \emph{claim 1} is non-trivial only when $I_0 > (m+1)/2$, and so we may assume $\pr[I_{T'} > L] \leq O(I_0/L)$. In particular,
	\[
		\be[\min(T_0, r)] \leq \be[T'] + r\cdot O(I_0/L) \leq \be[T'] + O(I_0\sqrt{r}/m).
	\]
	Thus the claim is implied from $\be[T'] \leq O(I_0 \sqrt{r}/m)$ which we now prove.
	
	Write $s = \be[(I_{k+1}-I_k)^2] = (m^2+2m)/6$, and consider the sequence of random variables
	\[
		\left( I_k^2 - s k \right)_{k=0}^{\infty}.
	\]
	We claim it is a martingale. Indeed:
	\begin{equation*}
	\begin{aligned}
		\be\left[ \given{I_{k+1}^2 - s (k+1)}{I_0, \ldots, I_k} \right]
		& = \be\left[ \given{I_{k}^2 + 2I_k (I_{k+1}-I_k) + ((I_{k+1}-I_k)^2-s) - s k}{I_0, \ldots, I_k} \right]	\\
		& = I_{k}^2 - s k \\
	\end{aligned}
	\end{equation*}
	The first equality uses $(a+b)^2 = a^2+2ab+b^2$, and the second uses the fact that $I_{k+1}-I_k$ is a symmetric random variable independent of $I_0, \ldots, I_k$, having variance $s$.
	The \emph{Optional stopping theorem} thus implies $I_0^2 = \be[I_{T'}^2 - s T']$, yielding $\be[T'] \leq \be[I_{T'}^2] / s$. To bound $\be[I_{T'}^2]$, we recall that $I_{T'}$ has absolute value $\leq (m+1)/2$ with probability $\leq 1$ (trivially), and is between $L$ and $L+m$ with probability $\leq O(I_0/L)$. Hence $\be[I_{T'}^2] \leq m^2 + (L+m)^2 O(I_0 / L)$. Since $L \geq m$, we deduce $\be[I_{T'}^2] \leq O(s + I_0 L)$. Overall,
	\[
	\be[T'] \leq O(1 + I_0 L / s) \leq O(1 + I_0\sqrt{r}/m),
	\]
	and the proof is complete.
\end{proof}


\subsubsection{Conjectured optimal algorithm}\label{app:2Dopt}
The 2D-LPHS algorithms presented in the previous sections have the property of not treating both axes symmetrically. For example, $\textsc{Recursive-Hash}$ iterates over several $i_{0}$'s, and for each of them it makes many queries of the form $x[i_0, j]$ for different $j$'s. Except for not being aesthetic, this asymmetry has other disadvantages. For example, it is not obvious how to generalize these algorithms to higher dimensions. More importantly, these algorithms (that we considered) do not have optimal dependence of $\delta$ on $d$.

We conjecture that the following symmetric algorithm (\textsc{Random-Walk-Hash}) has the optimal performance of $\delta = \widetilde{O}(1/d)$. However, we were not able to rigorously analyze it.


\begin{algorithm}
Let $\psi_{1}, \psi_{2}\colon \Sigma_b \to \{-L, \ldots, L\}$ be independent random functions \\
	\Begin{
		$P \gets \mathrm{list}()$ \\
		\For{$s \gets 0\ldots d-1$}{
			$P[s] \gets (i,j)$\\
			$v \gets z[i,j]$\\
			$(i,j) \gets (i + \psi_{1}(v), j + \psi_{2}(v))$\\
			\If{$(i,j) \in P$} {
				Let $t$ be the only index satisfying $P[t]=(i,j)$\\
				$k \gets \mathrm{arg\,min}_{u\in [t,s]} \{z[P[u]]\}$\\
				$(i,j)\gets P[k]$\\
				\While{$(i,j) \in P$} {
					$j \gets j+1$
				}
			}
		}
		\Return $\mathrm{arg\,min}_{(i',j')\in P} \{z[i',j']\}$
	}
	\caption{\textsc{rw-stage}$(z \in \Sigma_b ^{\integers_n^2}, d \in \pintegers, L \in \pintegers, i \in \integers_{n}, j \in \integers_{n})$}
\end{algorithm}
\begin{algorithm}\label{alg:2Doptimal}
	\Begin{
		$I \gets \lg \lg (d)$\\
		$d' \gets d / I$\\
		$(i_0, j_0) \gets \textsc{Min-Hash}(z,\ d')$\\
		$(i_1, j_1) \gets \textsc{rw-stage}(z,\ d',\ d'^{1/4},\ i_0,\ j_0)$\\
		$(i_2, j_2) \gets \textsc{rw-stage}(z,\ d',\ d'^{3/8},\ i_1,\ j_1)$\\
		$(i_3, j_3) \gets \textsc{rw-stage}(z,\ d',\ d'^{7/16},\ i_2,\ j_2)$\\
		$\vdots \qquad \vdots\qquad \vdots\qquad \vdots\qquad \vdots\qquad \vdots\qquad \vdots\qquad \vdots\qquad \vdots$\\
		$(i_I, j_I) \gets \textsc{rw-stage}(z,\ d',\ \sqrt{d'/2},\ i_{I-1},\ j_{I-1})$\\
		\Return $(i_I, j_I)$
	}
	\caption{\textsc{Random-Walk-Hash}$(z \in \Sigma_b ^{\integers_n^2},d \in \pintegers)$}
\end{algorithm}

\paragraph{Heuristic performance.}
Here we heuristically describe why we expect the algorithm \textsc{Random-Walk-Hash} to achieve $\delta = \widetilde{O}(1/d)$. 

The main heuristic assumption we make is that each $\textsc{rw-stage}(x, d, L, i, j)$ can be modeled by a random walk on $\integers^{2}$, starting at $(i,j)$ and having independent steps which are uniformly distributed on each axis as $\sim U(-L, L)$. We further assume that once the two walks of $\textsc{rw-stage}(x)$ and $\textsc{rw-stage}(y)$ are synchronized (collided), they remain synchronized. Moreover, we recall that the output location of a $\textsc{rw-stage}(x)$ is the point visited in this walk having the minimal $x$-value.

\begin{remark} These assumptions are not precise mainly because we need the steps to be both deterministic and independent (with respect to the input's randomness) of the previous steps. In practice we cannot guarantee independence, since the random walk occasionally runs into loops. We try to break these in a canonical way, which complicates the analysis. If the algorithm would make monotone queries along (at least) one axis (as the one-dimensional algorithm), then it would avoid loops and its analysis would be much simpler. Unfortunately, we do not know how to design such an algorithm with similar performance.
\end{remark}

Based on the heuristic assumptions above, an analysis of \textsc{Random-Walk-Hash} would follow from the following two claims:
\begin{itemize}
	\item Let $T$ be (a random variable measuring) the meeting time of two random walks on $\integers^{2}$, starting at distance $D$ (in $L_{1}$ norm), and making steps uniformly distributed in $\{-L, -L+1, \ldots , L\}$ on both axis. Then $\be[\min(d',T)] \leq O(\sqrt{d'}(L + D/L))$.\footnote{The parameters are chosen so that the parties meet within an expected number of $O(\sqrt{d'}(L + D/L))$ steps, while they do not meet within $d'$ steps with probability $O((L + D/L)/\sqrt{d'})$.}
	We say that walks $A,B$ `meet' in time $t$ if $t$ is minimal so that $\exists i,j \leq t$ with $\mathrm{location}_{i}(A)=\mathrm{location}_{j}(B)$.
	\item The expected (Manhattan) distance between the start and final point of a 2D-walk with $d'$ steps uniformly distributed in $\{-L, -L+1, \ldots , L\}$ on both axis, is $O(L\cdot \sqrt{d'})$.
\end{itemize}
Let us analyze the first few stages of \textsc{Random-Walk-Hash} using these claims (which we do not prove here).

Just after the first stage, which is \textsc{Min-Hash}, the walks of $\textsc{Random-Walk-Hash}(x)$ and
$\textsc{Random-}$ $\textsc{Walk-Hash}(y)$ are synchronized except for probability $O(1/\sqrt{d'})$ (Lemma~\ref{lem:min-hash-pr}), and in case of this failure event, the distance of the two walks has expected value $O(\sqrt{d'})$.

At the second stage, $\textsc{rw-stage}(x,d',d'^{1/4})$, the initial distance between the walks is $D=O(d'^{1/2})$ and $L=d'^{1/4}$. Using the above claims, and a Markov inequality, the random walks would synchronize except for probability $O(D/L+L)\sqrt{d'}/d' = O(d'^{-1/4})$, and in case of failure, the expected distance is $O(d'^{3/4})$.

At the third stage, $\textsc{rw-stage}(x,d',d'^{3/8})$, $D = O(d'^{3/4}), L=d'^{3/8}$ and the failure probability becomes $O(d'^{3/8})/\sqrt{d'}=O(d'^{-1/8})$, and the distance upon failure is $=O(d'^{7/8})$.

Continuing this heuristic to later stages we get that the total failure probability, which is the product of failure probabilities of all the stages, is $2^{O(I)}/d'=\widetilde{O}(d^{-1})$.

\subsection{Lower bounds on 2D-LPHS algorithms}
\label{sec:sub:lower}


In this section we prove Theorem~\ref{thm:kdlower} for $k=2$ (i.e., 2D-LPHS). The proof for any other value of $k > 1 $ is similar. In particular, we show that any 2D-LPHS algorithm satisfies $\delta \geq \Omega(1/d)$, given $n \geq 2d$.

\begin{lemma}[Bigger shifts]\label{lem:2Dextension}
	Let $h$ be a 2D $(n,b,d,\delta)$-LPHS. Let $r_{1}', r_{2}' \in \pintegers$, and $y'=x' \ll (r_1', r_2')$ where $x' \in \Sigma_b^{\integers_{n}^2}$ is a uniformly random string. Then,
	\[
	\pr\lbs h(x', d) - h(y', d)\neq (r_1', r_2') \rbs \leq \max\{r_1', r_2'\}\delta.
	\]
\end{lemma}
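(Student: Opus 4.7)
I plan to prove the bound by chaining unit shifts along both axes and applying a union bound, specializing the argument of Lemma~\ref{lem:shift} to two dimensions. Without loss of generality assume $r_1' \geq r_2'$.

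The first step would be to write the telescoping identity
\begin{align*}
h(x') - h(x' \ll (r_1', r_2')) = \bigl[h(x') - h(x' \ll (r_1', 0))\bigr] + \bigl[h(x' \ll (r_1', 0)) - h(x' \ll (r_1', r_2'))\bigr]
\end{align*}
and bound each bracket separately. For the first bracket, I would iterate the axis-$1$ LPHS guarantee along the chain $x', x'\ll e_1, x'\ll 2e_1,\ldots,x'\ll r_1' e_1$; each intermediate string is uniform (cyclic shifts preserve the uniform distribution), so a union bound yields failure probability at most $r_1'\delta$. For the second bracket, $x'\ll(r_1',0)$ is again uniform, so iterating the axis-$2$ guarantee analogously yields failure probability at most $r_2'\delta$. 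A final union bound combining the two brackets yields total failure probability at most $(r_1'+r_2')\delta$, which is in particular at most $2\max\{r_1',r_2'\}\delta$.

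The main obstacle I anticipate is closing the constant-$2$ gap and recovering the sharper claimed bound $\max\{r_1',r_2'\}\delta$. Any monotone unit-step path from $(0,0)$ to $(r_1',r_2')$ has length $r_1'+r_2'$, and a diagonal $(1,1)$-shift costs $2\delta$ by the very same triangle inequality, so no path-based union bound of the above flavor can by itself do better than $(r_1'+r_2')\delta$. I would therefore devote the main effort to a coupling-style refinement: share the internal randomness between the invocations $h(x',d)$ and $h(y',d)$, and, using the smoothness of the output distribution (Lemma~\ref{lem-smoothness}), re-express the failure event so that the axis-$2$ shift errors along the second bracket can be charged against axis-$1$ errors already accounted for, rather than added to them. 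The resulting amortization is what I expect to eliminate the factor-of-two loss and yield the stated $\max\{r_1',r_2'\}\delta$ bound.
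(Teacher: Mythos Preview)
Your axis-by-axis union bound giving $(r_1'+r_2')\delta$ is correct, and your diagnosis that no path of \emph{unit}-vector steps can beat this is also correct. The gap to $\max\{r_1',r_2'\}\delta$ is not closed by any coupling or smoothness argument; your proposed refinement via Lemma~\ref{lem-smoothness} would not work and is not what the paper does.

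The paper's proof instead takes a \emph{diagonal} path: it sets $r_1'^{(i)}=\min(r_1',i)$, $r_2'^{(i)}=\min(r_2',i)$ for $i=0,\dots,I=\max\{r_1',r_2'\}$, so that consecutive points differ by some $(a,b)\in\{0,1\}^2$ (equal to $(1,1)$ for the first $\min\{r_1',r_2'\}$ steps, then a unit vector thereafter). This path has exactly $\max\{r_1',r_2'\}$ steps, and the union bound over those steps gives the claimed bound. The crucial input is that a single $(1,1)$ shift is assumed to fail with probability at most $\delta$, not $2\delta$: the paper invokes~\eqref{eq:2Dreq}, which bounds the error for every shift $(r_1,r_2)\in\{0,1\}\times\{0,1\}$ by $\delta$. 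That is a slightly stronger hypothesis than the bare $k$-dimensional LPHS condition of Definition~\ref{def-lphs} (which speaks only of unit vectors $e_i$); the paper is implicitly working under this stronger $\{0,1\}^2$ guarantee throughout Section~\ref{sec:2dim}. Under the bare definition alone, your $(r_1'+r_2')\delta$ bound is the correct one (and indeed matches Lemma~\ref{lem:shift}); for the sole downstream use in Lemma~\ref{lem:2DLB}, either bound yields the same $\Omega(1/d)$ conclusion up to constants.
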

\begin{proof}
	Write $I = \max\{r_1', r_2'\}$ and set $r_{1}'^{(i)} = \min(r_{1}', i), r_{2}'^{(i)} = \min(r_{2}', i)$.
	Note $\forall i \colon (r_{1}'^{(i+1)}-r_{1}'^{(i)}), (r_{2}'^{(i+1)}-r_{2}'^{(i)}) \in \{0, 1\}$. Define the strings $x'_{i} \in \Sigma_b^{\integers_{n}^2}$ by $x'_{i}=x' \ll (r_1'^{(i)}, r_2'^{(i)})$ for $i=0,\ldots, I$. Since each $x'_{i}$ is a uniformly random function (because $x'$ is), we may use~\eqref{eq:2Dreq} to deduce
	\[
		\delta_i \doteqdot \pr \lbr A(x'_{i}, d) - A(x'_{i+1}, d) \neq (r_1'^{(i+1)}-r_1^{(i)}, r_{2}'^{(i+1)}-r_{2}'^{(i)}) \rbr \leq \delta.
	\]
	Notice $x'=x'_{0}$ and $y'=x'_{I}$. Using a union-bound argument, we conclude
	\[
	\pr\lbs A(x', d) - A(y', d) \neq (r_{1}', r_{2}')\rbs \leq \sum_{i=0}^{I-1} \delta_i \leq I \delta
	\]
	as required.
\end{proof}

The following lemma implies Theorem~\ref{thm:kdlower} for $k=2$.
\begin{lemma}\label{lem:2DLB}
For $n > 2d$ and any $b > 0$, every 2D (cyclic or non-cyclic) $(n,b,d,\delta)$-LPHS satisfies $\delta \geq 1/(3d)$.
\end{lemma}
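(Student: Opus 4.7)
The plan is to argue by contradiction: suppose $\delta < 1/(3d)$ and derive a contradiction via a birthday-style coupling between the real shift experiment and an auxiliary ``independent'' experiment. I would handle the trivial case $d=1$ directly (a one-query hash produces at most two distinct outputs, and the difference $h(x)-h(x \ll (0,1))$ depends on a uniform pair of bits; only one of the four equally likely patterns can equal the target $(0,1)$, forcing $\delta \geq 3/4$), and focus on $d \geq 2$. Set $L = 2d$ and sample $x \in \Sigma_b^{\mathbb{Z}_n^2}$ uniform together with an independent uniform shift $(r_1,r_2) \in R := \{0,\ldots,L-1\}^2$; since $n > 2d$, the $L^2$ values $\{h(x)-(r_1,r_2) : (r_1,r_2)\in R\}$ are distinct in $\mathbb{Z}_n^2$.

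The first ingredient is a lower bound on the success probability $p_A := \Pr[h(x \ll (r_1,r_2)) = h(x) - (r_1,r_2)]$: by Lemma~\ref{lem:2Dextension}, the error for each individual shift is at most $\max(r_1,r_2)\cdot\delta \leq L\delta$, so $p_A \geq 1 - L\delta > 1/3$ under the standing assumption. The second ingredient is a matching upper bound, which I would obtain by comparing to the auxiliary experiment in which $y = x \ll (r_1,r_2)$ is replaced by an independent uniform string $\tilde y$. Writing $p_B := \Pr[h(\tilde y) = h(x) - (r_1,r_2)]$, the independence of $h(x),h(\tilde y)$ together with the uniformity of $(r_1,r_2)$ on $R$ give $p_B = \Pr[h(x)-h(\tilde y) \in R]/L^2 \leq 1/L^2$. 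I would then couple the two experiments through a shared pool of fresh random bits used to populate both $\tilde y$ and the positions of $x$ outside the query set $Q_1$ of $h(x)$: on the ``no-overlap'' event $\bar O := \{Q_1 \cap (Q_2^A + (r_1,r_2)) = \emptyset\}$, every bit party~2 reads in the real experiment is a fresh uniform bit from the pool, so its adaptive trajectory coincides exactly with its trajectory on $\tilde y$, forcing $h(y)=h(\tilde y)$. This yields $p_A \leq p_B + \Pr[O]$.

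To bound $\Pr[O]$, I would observe that the two adaptive trajectories of party~2 agree up to and including the first overlap step, so $O$ is equivalent to the event $(r_1,r_2)\in Q_1 - Q_2^B$ defined via the $B$-side query set $Q_2^B$, which depends only on $\tilde y$ and is therefore independent of $(r_1,r_2, Q_1)$. A Minkowski-difference count gives $|Q_1 - Q_2^B| \leq d^2$, so $\Pr[O] \leq d^2/L^2$. Chaining the three bounds, $1 - L\delta \leq (d^2+1)/L^2$, and plugging $L=2d$ rearranges to $\delta \geq (3d^2-1)/(8d^3)$, which is $\geq 1/(3d)$ whenever $d \geq 2$ (since the inequality is equivalent to $d^2 \geq 3$), yielding the contradiction. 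The non-cyclic case is essentially identical: party~2's queries that fall into the ``newly added'' random strip of $y = x \lll (r_1,r_2)$ automatically read fresh uniform bits and can never trigger overlap with $Q_1$, so both the coupling and the $d^2/L^2$ bound remain valid. The principal technical subtlety, and the step I expect to require the most care, is verifying the equivalence between $O$ in the real experiment and the Minkowski-difference event on $Q_2^B$ despite the fact that the two adaptive trajectories of party~2 may diverge after the first overlap; the resolution is to track both trajectories up to the overlap step and note that the position queried at that very step is the same in both runs, which is all that is needed to equate the two events.
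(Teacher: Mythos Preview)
Your proposal follows essentially the same route as the paper: bound the success probability for a uniformly random shift in a box of side $\Theta(d)$ from above via a birthday/Minkowski-difference argument and from below via Lemma~\ref{lem:2Dextension}, then compare. Your coupling formulation (introducing an independent $\tilde y$ and showing that on the no-overlap event the trajectories coincide) is a clean repackaging of the paper's direct computation~\eqref{eq:lbdis}; the paper instead conditions on query sets $P_1,P_2$ and observes that when $P_1$ and $P_2+(r_1',r_2')$ are disjoint, the two executions are independent. Both arguments land on the same inequality $L\delta + (d^2+1)/L^2 \ge 1$.

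One genuine gap: your $d=1$ argument is only valid for $b=1$. The claim ``a one-query hash produces at most two distinct outputs'' fails when $|\Sigma_b|=2^b>2$, and with larger alphabets the four-pattern count breaks down. You can repair this cheaply: for $d=1$, the hash is $h(x)=f(x[i_0,j_0])$, so both $\Pr[f(A)-f(B)=(0,1)]$ and $\Pr[f(A)-f(B)=(1,0)]$ (with $A,B$ i.i.d.\ uniform in $\Sigma_b$) must be $\ge 1-\delta$; since these are disjoint events, $2(1-\delta)\le 1$ and $\delta\ge 1/2$. Alternatively, just take $L=2d+1$ instead of $L=2d$ (as the paper does); then the final inequality yields $\delta\ge 3/(8d)$ for all $d\ge 1$, and no separate base case is needed.
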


\begin{proof}
	Consider the set of queries $P$ made to a uniformly random string $x \in \Sigma_b^{\integers_{n}^2}$ by an $(n,b,d,\delta)$-LPHS $h(x, d)$. That is, $P$ is a random variable whose values are sets of sizes $\leq d$ of $(i,j)$ pairs.
	
	Let $x', y' \in \Sigma_b^{\integers_n^2}$ be two uniformly random strings related by $y' = x' \ll (r_1', r_2')$ for independent uniform variables $r_1', r_2' \sim \{0,1,\ldots, 2d\}$.
	Using Lemma~\ref{lem:2Dextension},
	\begin{equation}\label{eq:aaa}
	\begin{split}
	\pr[h(x',d)-h(y',d) \neq (r_1', r_2')] & = \be_{r_1', r_2'} \lbs \pr\lbs \given{h(y',d)-h(x',d) \neq (r_1', r_2')}{r_{1}', r_{2}'} \rbs \rbs \\
	& \leq \be_{r_{1}', r_{2}'} \lbs \max(r_{1}', r_{2}')\delta \rbs \leq 2d\delta.
	\end{split}
	\end{equation}
	
	Let $P_{x'}, P_{y'}$ be copies of $P$ which are the sets of queries issued by $h(x', d), h(y', d)$, respectively. Generally, the random variables $P_{x'}, P_{y'}$ are dependent. However, we are going to see they are only slightly dependent. Indeed, suppose $P_{1}, P_{2}$ are two values of $P$. We claim that given specific values of $r_{1}', r_{2}'$ (call these $r_1'', r_2''$) so that $P_{1}$ and (the Minkowski sum) $P_{2} + \{(r_1'', r_2'')\}$ are disjoint, we have
	\begin{equation}\label{eq:lbdis}
	\pr\lbs \given{P_{x'} = P_{1} \wedge P_{y'} = P_{2}}{(r_{1}',r_{2}')=(r_{1}'', r_{2}'')} \rbs = \pr\lbs P = P_{1}\rbs \cdot \pr\lbs P = P_{2}\rbs.
	\end{equation}
	This is because the event $P_{x'} = P_{1}$ depends only on $\restrict{x'}{P_{1}}$ (i.e. lies inside the $\sigma$-algebra generated by $\restrict{x'}{P_{1}}$)\footnote{If the LPHS is probabilistic, then we should add the algorithm's randomness into the $\sigma$-algebra. Since this randomness is independent of all other random variables, the proof applies verbatim.}, which is independent of $\restrict{y'}{P_{2}}$, given that $P_{1}$ and (the Minkowski sum) $P_{2} + \{(r_1'', r_2'')\}$ are disjoint, as different entries of $x'$ are independent.
	Consider the `conditional' random variable
	\[
		X_{r_{1}'', r_{2}'', P_{1}, P_{2}} \ddd \lbs \given{h(x',d)-h(y',d)}{r_1'=r_1'', r_2'=r_2'', P_{x'}=P_1,P_{y'}=P_2} \rbs,
	\]
	which is defined on the part of the probability space in which $r_1'=r_1'', r_2'=r_2'', P_{x'}=P_1,P_{y'}=P_2$.
	We use the law of total probability to compute
	\begin{equation*}
	\begin{split}
	\pr[h(x',d)&-h(y',d) = (r_1', r_2')]\\
	&= \be_{r_1', r_2', P_{x'}, P_{y'}} \lbs \pr\lbs \given{h(x',d)-h(y',d) = (r_1', r_2')}{r_1', r_2',P_{x'}, P_{y'}} \rbs \rbs \\
	&= \be_{r_1', r_2'} \lbs \sum_{P_1, P_2} \pr\lbs X_{r_1',r_2',P_1,P_2} = (r_1',r_2')\rbs \cdot \pr\lbs \given{P_{x'}=P_1,P_{y'}=P_2}{r_1', r_2'}\rbs \rbs \\
	& \stackrel{(a)}{\leq}
	\underbrace{\be_{r_1', r_2'} \lbs \sum_{\substack{P_1, P_2 \colon\\ ((r_1', r_2') \in P_{1} - P_{2})}} \pr\lbs \given{P_{x'}=P_1,P_{y'}=P_2}{r_1', r_2'}\rbs \rbs}_{Q_{1}} + \\
	&+\underbrace{\be_{r_1', r_2'} \lbs \sum_{\substack{P_1, P_2 \colon\\ ((r_1', r_2') \notin P_{1} - P_{2})}} \pr\lbs X_{r_1',r_2',P_1,P_2} = (r_1',r_2')\rbs \cdot \pr [P=P_1] \pr[P=P_2] \rbs}_{Q_{2}} \\
	& \stackrel{(b)}{\leq} \frac{d^{2}}{(2d+1)^2} + \frac{1}{(2d+1)^2} = \frac{d^2+1}{(2d+1)^2}
	,
	\end{split}
	\end{equation*}
	where $P_{1} - P_{2}$ is a Minkowski difference.
	Inequality $(a)$ follows from~\eqref{eq:lbdis} and the fact that probabilities are upper bounded by $1$. Inequality $(b)$ is the key argument. To bound $Q_{1}$ (by $d^2/(2d+1)^2$) we use
	\[
	Q_{1} + \underbrace{\be_{r_1', r_2'} \lbs \sum_{\substack{P_1, P_2 \colon\\ ((r_1', r_2') \notin P_{1} - P_{2})}} \pr [P=P_1] \pr[P=P_2] \rbs}_{Q_{3}} = 1.
	\]
	Exchanging summation order and using $\lba P_{1} - P_{2} \rba \leq |P_{1}|\cdot |P_{2}| \leq d^2$, which holds since $h(x,d)$ makes at most $d$ queries, we see that $Q_{3} \geq 1-d^2/(2d+1)^2$. Notice we use here $n > 2d$. This proves $Q_{1} \leq d^2 / (2d+1)^2$.
	
	To bound $Q_{2}$, we note that every $(P_1, P_2)$ contributes at most
	\[
	\pr[P=P_1]\pr[P=P_2]/(2d+1)^2
	\]
	to $Q_{2}$. To see this, observe that the distribution of the random variable $X_{r_1'',r_2'',P_1,P_2}$ does not depend on the particular value of $r_1'', r_2''$ (given that $P_1 + (r_1'', r_2'')$ is disjoint from $P_2$),
	and this random variable always attains a single value (that is, a random variable $X$ and a set $E$ always satisfy $\sum_{e \in E} \pr[X=e]\leq 1$).
	Hence
	\[
		Q_{2} \leq \sum_{P_1, P_2} \frac{\pr[P = P_1] \pr[P = P_2]}{(2d+1)^2} = \frac{1}{(2d+1)^2}.
	\]
	
	Overall, we deduce $2d\delta + (d^2+1)/(2d+1)^2 \geq 1$, implying $\delta \geq 3/(8d)$.
\end{proof}

\begin{remark}[Extending Lemma~\ref{lem:2DLB}  to higher dimensions]
	The proof of Lemma~\ref{lem:2DLB} for $k=2$ readily extends to a lower bound on the error probability of any $k$-dimensional LPHS with $k>2$ (the case $k=1$ follows from the lower bound in~\cite{DKK18} and our generic model equivalence with LPHS). Concretely, for a $k$-dimensional LPHS we have $\delta \geq 1/(3d^{2/k})$ whenever $n>(2d)^{2/k}$, implying Theorem~\ref{thm:kdlower}.

	The extension to a general dimension $k$ requires the following modifications. First, we use a distance-extension lemma, analogous to Lemma~\ref{lem:2Dextension}, in a way similar to~\eqref{eq:aaa}. This step reduces our task to showing that no algorithm can synchronize on random inputs $x,y \in \Sigma_b^{\integers_n^k}$ with probability higher than, say $1/2$, where $y$ is a random $k$-dimensional shift of $x$ by about $(2 d)^{2/k}$ in every axis.

	Then, we observe that in the event that the LPHS applied on $x$ and $y$ queries disjoint input cells (we think of $x$ and $y$ as inlaid in a common landscape), synchronization is unlikely, as expressed by the bound on $Q_2$ in the proof of Lemma~\ref{lem:2DLB}. Hence, the synchronization probability is dominated by the probability that the LPHS queries a shared input cell. To bound this latter probability we use a birthday-paradox argument similar to the bound on $Q_1$ in the proof of Lemma~\ref{lem:2DLB}: there are at least $((2d)^{2/k})^k = 4d^2$ possible shifts, while there are only $d \times d$ pairs of queries that may collide -- any of the $d$ queries made to $x$ may collide with any of the $d$ queries made to $y$. It follows that there is a probability of at most $1/4$ to have a shared query, concluding the argument.
\end{remark}



\section{LPHS for Worst-Case Inputs} 	
\label{sec-worstcase}


The basic definition of LPHS provides guarantees for {\em random} input strings. This directly aligns with some applications, where the strings to synchronize can be chosen in such manner (e.g., when broadcasting a random synchronization string). In many other application settings, however, we may wish to perform locality-sensitive hashing on inputs whose structure is not uniform, e.g.\ location coordination given substrings of an existing DNA string or image. Here, it is not sufficient to provide small error probability over a random input; rather, the input will be fixed, and we will wish to obtain small error over a random choice of hash function.  In this section, we address this notion of locality-preserving hashing for shifts (LPHS) for {\em worst-case inputs}.

The main result in this section is showing how to use an underlying LPHS for average-case inputs to obtain LPHS for worst-case inputs that do not exhibit too much regularity (in which case one cannot hope to synchronize shifts effectively).
Our approach first embeds the input space endowed with shift metric into an intermediate space over a larger alphabet also endowed with shift metric, with the promise of \emph{all symbols being distinct}. This can be viewed as an analog of approaches taken, e.g.\ by~\cite{CharikarK06}, for the case of edit distance and so-called \emph{Ulam distance} over permutations; however, for our relaxation to shift metric as opposed to edit distance, we can accommodate weaker restrictions on input strings.

In the following subsections, we address LPHS for worst-case inputs in the settings of cyclic and non-cyclic shifts, respectively. For simplicity, we restrict attention to the case of alphabet $\zo$; that is, inputs $x \in \zo^n$.

\subsection{Cyclic LPHS for Worst-Case Inputs}
\label{sec:worst}

We begin with the somewhat simpler setting of cyclic shifts. To ensure that our worst-case definition extends from a single shift to multiple shifts, it will be convenient to define the following closure operator.
\begin{definition}
\label{def:clos}[Closure under cyclic shifts]
For a set of inputs $X \subseteq \zo^n$, we let
	\[ {\hat X}=\{ x\ll i\,:\, x\in X,\, 0\le i<n \}.\]	
\end{definition}

\begin{definition}
\label{def:worst}[Worst-case cyclic LPHS]
Let $X \subseteq \zo^n$ be a set of inputs. A family $\calH$ of hash functions $h: \zo^n \to \Z_n$ is an {\em $(n,d,\delta)$-cyclic LPHS for worst-case inputs in $X$} if each $h$ makes $d$ (adaptive) queries (of the form $x[i]$),
for every $x \in \hat X$, it holds that
	\[ \Pr_{h \in_R \calH} [ h(x) \neq h( x \ll 1) + 1 ] \le \delta. \]	
\end{definition}

Unlike the case of LPHS for random inputs, in the worst-case setting the choice of hash function must necessarily be randomized in order to achieve low per-input error.

A first observation is that one cannot hope to synchronize shifts of inputs $x$ that exhibit too much regularity: for example, the all-0 string. We formalize this requirement as {\em \good ness}, such that any cyclic shift of $x$ differs in at least an $\alpha$-fraction of positions from $x$. In what follows, $\Delta(x,y)$ denotes the Hamming distance of strings $x,y \in \zo^n$.


\begin{definition}[\good~inputs]  \label{def:good-input}
An input $x \in \zo^n$ is said to be {\em \good} if for every $i \in [n]$
it holds that $\Delta(x \ll i, x) \ge \alpha n$. We denote the set of all \good~inputs in $\zo^n$ by $\sfGood$.
\end{definition}

Note that $\sfGood$ is closed under shifts.  The above goodness requirement is necessary, in the sense that even if it is only violated by a single shift $0<i<n$, it requires a notable deterioration of the LPHS parameters. For example, if $x$ is a random string with period $n/2$ (i.e., $x = (x \ll n/2)$), then for any choice of hash function $h$ necessarily $h(x) = h(x \ll n/2)$, meaning one cannot achieve $\delta$ better than $2/n$.

\begin{remark}[Biased random inputs]
As an example application, inputs that occur as the result of \emph{biased} random sampling satisfy the above \good ness with high probability, for $\alpha$ that is a function of the bias.

Namely, consider the distribution $\cD_\beta$ over $\zo^n$ where each bit $x_i$ is selected as biased i.i.d.\ bits, equalling 1 with (constant) probability $0 < \beta < 1$. Then if $n$ is prime (or, more generally, without too many factors), then for every constant $\alpha < \min\{\beta,1-\beta\}$, it holds that
	\[ \Pr_{x \in_R \cD_\beta} [x \in \sfGood ] \ge 1 - n e^{-\Omega(n)}. \]
To see this, observe that since $n$ is prime, then for any given nonzero shift $0 < i < n$, it holds that $(1,1+i,1+2i,\dots,1+ni)$ forms a permutation of $[n]$. Consider the string $y := (x_1,x_{1+i},x_{1+2i},\dots,x_{1+ni})$, formed by appropriately permuting $x \in \zo^n$. Then $\Delta(x, x \ll i)$ is equal to the number of positions $j \in [n]$ for which $y_j \neq y_{j+1}$ (taking $y_{n+1} := y_1$). For each $j \in [n]$, conditioned on the values of $y_1,\dots,y_j$, the value of $y_{j+1}$ is randomly sampled with bias $\beta$. In particular, for any such prefix, the probability that $y_{j+1} \neq y_j$ is at least $\min\{\beta,1-\beta\}$. The claim thus follows by a Chernoff bound, together with a union bound over choices of $i \in [n]$.
\end{remark}

\medskip
Our main result is a black-box construction of worst-case LPHS for the set of inputs $\sfGood$, from any LPHS for random inputs, with small overhead (poly-logarithmic in the query complexity).

At a high level, the construction reduces to the case of random inputs, by (1) considering an intermediate input value $x' \in (\zo^\len)^n$ over a {\em larger} alphabet $\zo^\len$, whose symbols are formed from a random tiling of $\len \in \omega(\log n)$ bits of $x$, and then (2) converting $x'$ to a new input $y \in \zo^n$ using an $n$-wise independent hash $\gamma: \zo^\len \to \zo$. This approach is reminiscent of the generic alphabet-enlarging procedure used in Property~\ref{lem:smaller-alphabet}
 (and other prior works) via shingling, except with randomly chosen tile windows to accommodate worst-case inputs. As we will show, the \good ness of the original input $x$ will ensure with high probability over the choice of tiling that all tile-symbols of $x'$ are distinct elements of $\zo^\len$. Then by $n$-wise independence of $\gamma$, the resulting input $y \in \zo^n$ will be {\em uniform}. Note that if we begin with an LPHS which makes only $d$ queries to the (random) input, then we require only $d$-wise independence of $\gamma$.

\begin{proposition} [Worst-case cyclic LPHS for \good~inputs]  \label{prop:worst-case}
Assume there exists an $(n,d,\delta)$-cyclic LPHS for random inputs. Then, for every $0 < \alpha(n) \le 1$ and $\len \ge \omega(\log n)$, there exists a
$(n,d',\delta')$-cyclic LPHS for worst-case inputs in $\sfGood$, with $d' = O(d \cdot \len)$ and $\delta' = \delta+ n^2 (1-\alpha)^{\len}$.
\end{proposition}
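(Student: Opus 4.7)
The plan is to reduce the worst-case guarantee to the given average-case LPHS via a two-stage randomized embedding: first expand the alphabet by a randomized ``shingling'' of $\len$ coordinates, then compress back with an $n$-wise independent hash, arguing that $\alpha$-goodness forces the intermediate shingled string to have pairwise distinct symbols with high probability, whereby the compressed string is distributed exactly as a uniformly random input.

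Concretely, each $h'$ in the new family $\calH'$ is parameterized by three independent ingredients: a tuple $S=(i_1,\dots,i_\len) \in [n]^\len$ sampled i.i.d.\ uniformly, an $n$-wise independent hash $\gamma\colon\{0,1\}^\len\to\{0,1\}$, and an average-case LPHS $h$ drawn from the hypothesized family. On input $x\in\{0,1\}^n$, define the auxiliary string $y\in\{0,1\}^n$ by
\[
y[i] \;=\; \gamma\bigl(x[i+i_1],\,x[i+i_2],\,\ldots,\,x[i+i_\len]\bigr),
\]
with indices modulo $n$, and let $h'(x):=h(y)$. Each query $y[i]$ is answered by $\len$ queries to $x$, giving query complexity $d\cdot\len$. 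The crucial shift-equivariance is immediate: writing $x'[i]:=(x[i+i_1],\dots,x[i+i_\len])$, we have $(x\ll 1)'[i]=x'[i+1]$, hence the auxiliary string produced from $x\ll 1$ with the same $(S,\gamma)$ equals $y\ll 1$. Therefore the failure event $h'(x)\ne h'(x\ll 1)+1$ coincides with the failure event of $h$ on input $y$ versus $y\ll 1$.

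It remains to show that for $x\in\sfGood$ the joint law of $(y,y\ll 1)$ (over $S,\gamma$) is uniform up to a small bad event. Let $E$ be the event that $x'[i]=x'[j]$ for some $i\ne j$, and fix $i\ne j$ with $r=j-i\not\equiv 0 \pmod n$. By $\alpha$-goodness, $|\{k:x[k]=x[k+r]\}|\le(1-\alpha)n$, and since each $i_t$ is uniform in $[n]$ and the $i_t$ are independent,
\[
\Pr_S[x'[i]=x'[j]] \;=\; \Pr_S\bigl[\forall t:\,x[i+i_t]=x[i+i_t+r]\bigr] \;\le\; (1-\alpha)^\len .
\]
A union bound over the at most $n^2$ pairs yields $\Pr[E]\le n^2(1-\alpha)^\len$. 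Conditioned on $\neg E$, the $n$ elements $\{x'[i]\}_{i\in[n]}$ are pairwise distinct, so $n$-wise independence of $\gamma$ implies that $y$ is uniformly distributed in $\{0,1\}^n$; applying the random-input guarantee of $h$ gives $\Pr[h(y)\ne h(y\ll 1)+1\mid\neg E]\le\delta$. Combining the two contributions yields the claimed error $\delta+n^2(1-\alpha)^\len$.

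The main conceptual point to get right is the shift-equivariance, which hinges on using the \emph{same} shingling pattern $S$ and hash $\gamma$ on both $x$ and $x\ll 1$; the main quantitative step is the per-pair collision bound, where the $\alpha$-goodness is invoked exactly once to turn a worst-case statement about $x$ into a uniform bound $1-\alpha$ per coordinate, which then amplifies to $(1-\alpha)^\len$ by independence of the $i_t$. Everything else---the query count, the shift from $y$ back to $x$, and the conditional-uniformity argument for $y$---is routine; choosing $\len=\omega(\log n)$ with $\alpha$ bounded away from $0$ makes $n^2(1-\alpha)^\len$ negligible, completing the construction.
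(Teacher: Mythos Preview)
Your proposal is correct and follows essentially the same approach as the paper: randomize via a shingling indexed by a random $S$ plus an $n$-wise independent hash $\gamma$, argue that $\alpha$-goodness forces all shingles to be distinct with probability $1-n^2(1-\alpha)^{\len}$ (via the per-pair bound and a union bound), and then invoke the average-case LPHS on the resulting uniform string. The only cosmetic difference is that you sample $S\in[n]^{\len}$ i.i.d.\ with replacement while the paper samples a random $\len$-element subset of $[n]$; both choices yield the same $(1-\alpha)^{\len}$ per-pair collision bound.
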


Plugging in the LPHS construction from Theorem~\ref{thm:1dupper} (quoted from~\cite{DKK18}), together with $\len \in \omega(\log n) \cap \log^{O(1)}(n)$ yields the following corollary.
\begin{corollary}
\label{cor:worst}
For every constant $0 < \alpha \le 1$, there exists a
worst-case \emph{cyclic} $(n,d,\delta)$-LPHS for $\sfGood$ with $d = \tilde O(\sqrt n)$ and $\delta = O(1/n)$.
\end{corollary}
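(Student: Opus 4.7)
The proof will be a direct parameter-plug-in of Theorem~\ref{thm:1dupper} into Proposition~\ref{prop:worst-case}, so the plan is essentially to track how the parameters evolve through the reduction and verify that the claimed bounds are obtained for any constant $\alpha$.

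First, I would invoke the binary-alphabet ($b=1$) form of Theorem~\ref{thm:1dupper}: for $n = \Omega(d_0^2)$ there is an $(n, 1, d_0, \delta_0)$-LPHS for random inputs with $\delta_0 = \tilde{O}(1/d_0^2)$. To absorb the hidden polylogarithmic factor, I choose $d_0 = \sqrt{n}\cdot \mathrm{polylog}(n)$ large enough so that $\delta_0 = O(1/n)$; this still satisfies $d_0 = \tilde{O}(\sqrt{n})$. This is the ``random-input'' LPHS that will be fed into the reduction.

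Next, I would apply Proposition~\ref{prop:worst-case} with the above LPHS and with the block length $\len$ chosen in $\omega(\log n) \cap \log^{O(1)}(n)$ (for concreteness, $\len = \log^2 n$). The proposition produces a worst-case cyclic LPHS for $\sfGood$ with parameters $d' = O(d_0 \cdot \len)$ and $\delta' = \delta_0 + n^2(1-\alpha)^{\len}$. Since $d_0 = \tilde{O}(\sqrt{n})$ and $\len$ is polylogarithmic, we get $d' = \tilde{O}(\sqrt{n})$, giving the claimed query complexity.

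For the error bound, I need $n^2 (1-\alpha)^{\len} = O(1/n)$. Because $\alpha$ is a positive constant, $(1-\alpha)^{\len} = e^{-\Theta(\len)}$, and $\len = \omega(\log n)$ forces $(1-\alpha)^{\len} = n^{-\omega(1)}$, so $n^2 (1-\alpha)^{\len} = n^{-\omega(1)} = o(1/n)$. Combined with $\delta_0 = O(1/n)$ this yields $\delta' = O(1/n)$, matching the statement. No step here is conceptually hard; the only place to be slightly careful is choosing the exact polylogarithmic factor in $d_0$ so that the $\tilde{O}$ in Theorem~\ref{thm:1dupper} collapses to $O(1/n)$ before adding the (much smaller) second error term from Proposition~\ref{prop:worst-case}.
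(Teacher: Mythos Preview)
Your proposal is correct and follows exactly the approach the paper takes: plug the random-input LPHS of Theorem~\ref{thm:1dupper} into Proposition~\ref{prop:worst-case} with $\len \in \omega(\log n)\cap \log^{O(1)}(n)$, and track the resulting parameters. The only addition you make beyond the paper's one-line derivation is the explicit choice of $d_0 = \sqrt{n}\cdot\mathrm{polylog}(n)$ to convert $\tilde O(1/d_0^2)$ into $O(1/n)$, which is fine and makes the bookkeeping cleaner.
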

%

We now proceed to prove Proposition~\ref{prop:worst-case}.
\medskip

\begin{proof}
Let $\len = \len(n)$.
Consider the following hash function family $\calH = \{h_{S,\gamma}\}$, indexed by a subset $S \subset [n]$ of size $\len$, and a hash function $\gamma : \zo^\len \to \zo$ from an $n$-wise independent hash family. Sampling a hash function $h_{S,\gamma}$ from $\calH$ will consist of randomly selecting $S \subset [n]$ and sampling $\gamma$ from the $n$-wise independent hash family.

The worst-case LPHS will use as a black box an underlying $(n,d,\delta)$-LPHS $h^* : \zo^n \to \Z$ on {\em random} inputs.

\medskip \noindent
Evaluation of $h_{S,\gamma}(x)$, for $x \in \zo$:
\begin{enumerate}
\item For $i \in [n]$, denote $i+S = \{ i+s \mod n ~|~ s \in S \}$ and $x_{i+S} = (x_j)_{j \in i+S} \in \zo^\len$.
\item Execute the algorithm for average-case LPHS $h^*$. For each index $i \in [n]$ that $h^*$ wishes to query, perform the following:
  \begin{enumerate}
  \item Query $\len$ indices of $x$, corresponding to the set $(i+S) \subset [n]$. Denote the corresponding bit string by $x_{i+S} \in \zo^\len$.
  \item Let $y_i := \gamma(x_{i+S}) \in \zo$. Submit $y_i$ to $h^*$, as the answer to query $i \in [n]$.
  \end{enumerate}
  \item Let $z \in \Z$ denote the output of $h^*$. Output $z$.
\end{enumerate}

Note that the query complexity of $h_{S,\gamma}$ is precisely $\len$ times the query complexity of $h^*$. We now analyze the correctness of the resulting worst-case LPHS.

\begin{claim} \label{claim:S-tiles}
Let $x \in \sfGood$. With overwhelming probability over a random choice of subset $S \subset [n]$ of size $\len(n) \le n$, all $S$-tiles $x_{i+S}$ of $x$ are distinct.
Namely, 
	\[ \Pr_{h_{S,\gamma} \in_R \calH} [ \exists i\neq i' \in [n], x_{i+S} = x_{i'+S} ]  \le n^2 \cdot (1-\alpha)^{\len}. \]    
\end{claim}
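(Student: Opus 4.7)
The plan is to reduce the collision event to a containment event, and then bound that by a union bound using \good ness of $x$. Note that $\gamma$ plays no role in Claim~\ref{claim:S-tiles}; the event only depends on $S$, so I may work with the marginal distribution of $S$, which is uniform on subsets of $[n]$ of size $\len$.

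For a fixed pair $i \neq i' \in [n]$, let $r = i' - i \bmod n$ (so $r \neq 0$). I observe that
\[
x_{i+S} = x_{i'+S} \iff \forall s \in S\colon x_{i+s} = x_{i+s+r} \iff (i+S) \subseteq T_r,
\]
where $T_r := \{ j \in [n] : x_j = x_{j+r \bmod n}\}$ is the ``agreement set'' of $x$ at cyclic shift $r$. Since $x \in \sfGood$, Definition~\ref{def:good-input} gives $\Delta(x \ll r, x) \ge \alpha n$, hence $|T_r| = n - \Delta(x \ll r, x) \le (1-\alpha)n$.

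Because $S$ is a uniform random $\len$-subset of $[n]$ and translation by $i$ in $\Z_n$ is a bijection that preserves the uniform distribution on $\binom{[n]}{\len}$, the shifted set $i+S$ is also uniform in $\binom{[n]}{\len}$. Therefore
\[
\Pr_S[(i+S) \subseteq T_r] \;=\; \frac{\binom{|T_r|}{\len}}{\binom{n}{\len}} \;\le\; \prod_{k=0}^{\len-1}\frac{(1-\alpha)n - k}{n-k} \;\le\; (1-\alpha)^\len,
\]
where the last inequality follows term-by-term from $\frac{(1-\alpha)n - k}{n-k} \le 1-\alpha$, which rearranges to $(1-\alpha)k \le k$. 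A union bound over the at most $n(n-1) \le n^2$ ordered pairs $(i,i')$ with $i \neq i'$ yields $\Pr[\exists i \neq i' : x_{i+S} = x_{i'+S}] \le n^2 (1-\alpha)^\len$, as required.

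There is essentially no obstacle here: once the collision is rewritten as containment in $T_r$ and \good ness is invoked, the estimate follows from a standard hypergeometric bound. The only minor care needed is to confirm translation-invariance of the uniform distribution on fixed-size subsets of $\Z_n$, and to note that $\gamma$ may be ignored since the event lies in the $\sigma$-algebra generated by $S$ alone.
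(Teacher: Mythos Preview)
Your proof is correct and follows essentially the same approach as the paper: fix a pair $i\neq i'$, use \good ness to bound the size of the agreement set (equivalently, the paper phrases it as $S$ avoiding the disagreement set $T$ of size $\ge \alpha n$, which is the complementary viewpoint to your $(i+S)\subseteq T_r$), apply the hypergeometric bound $\binom{(1-\alpha)n}{\len}/\binom{n}{\len}\le(1-\alpha)^\len$, and union bound over pairs. Your version spells out the hypergeometric estimate and the irrelevance of $\gamma$ more explicitly, but the argument is the same.
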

\begin{proof}[of Claim~\ref{claim:S-tiles}]
Fix $i < i' \in [n]$. By \good ness of the input $x$, it holds that $\Delta(x \ll  i, x \ll i') = \Delta(x, x \ll (i'-i)) \ge \alpha n$; that is, there exists a subset $T \subseteq [n]$ of size $|T| \ge \alpha n$ for which $x_{i+j} \neq x_{i'+j}$ for every $j \in T$. Over the choice of $S$, $\Pr_S [x_{i+S} = x_{i'+S}] \le \Pr_S[S \cap T = \emptyset] \le (1-\alpha)^\len$.
 The claim thus holds by a union bound over pairs $i,i' \in [n]$.
\end{proof}

Conditioned on distinctness of all $S$-tiles $x_{i+S}$ queried by the algorithm $h^*$, then by $n$-wise independence of the hash family $\gamma$, it holds that the computed output bits $y_i := \gamma(x_{i+S})$ will be independently random. That is, conditioned on the above event, the hash function $h_{S,\gamma}$ will err with probability identical to that of $h^*$ on a random input. The proposition follows.

\end{proof}

\subsection{Non-Cyclic LPHS for Worst-Case Inputs}
We next present and achieve a notion of \emph{non-cyclic} LPHS for worst-case inputs, up to a maximum shift $R$. In this case, the alphabet-tiling procedure must be adjusted, as symbols outside the shift window will be lost. Instead, we consider a modified approach, in which the tiles for each index $i$ are chosen as a random subset from within a $W$-size window beginning at index $i$, for $W<n$. This makes for a slightly more complex ``goodness'' condition for worst-case input strings.


\begin{definition}[Worst-Case LPHS]
Let $X \subseteq \zo^n$ be a set of inputs. A family $\calH$ of hash functions $h: \zo^n \to \Z_n$ is a (non-cyclic) {\em $(n,d,\delta)$-LPHS for worst-case inputs in $X$}, up to shift bound $R$, if each $h$ makes $d$ (adaptive) queries (of the form $x[i]$), and for every $x \in X$, and every $1 \le r \le R$ it holds that
	\[ \Pr_{h \in_R \calH} [ h(x) \neq h( x \lll r) + r ] \le \delta. \]	
\end{definition}



We formalize the desired input non-regularity requirement via {\em \goodw ness}, parameterized by a ``window size'' $W \in [n]$ and difference parameter $0 <\alpha \le 1$, such that
each of the length-$W$ substrings $x$ differ pairwise in at least $\alpha$ fraction of their symbols.
In what follows, we denote by $x_{i+[W]}$ the $W$-substring of $x$ beginning at index $i$ (which may have length less than $W$ if $(i+W) < n$), and by $\Delta(x',y')$ the Hamming distance of strings $x',y' \in \zo^\ell$ (where $1 \le \ell \le W$).

\begin{definition}[\goodw~inputs]  \label{def:good-input}
An input $x \in \zo^n$ is said to be {\em \goodw} if for
every $i < i' \in \{1,\dots,n-W/2\}$ it holds that $\Delta(x_{i+[W]},x_{i'+[W]}) \ge \alpha |x_{i'+[W]}|$.\footnote{Note for $i \le n-W$, the length $|x_{i+[W]}| = W$, but for $(n-W) < i' \le (n-W/2)$, then $W/2 \le |x_{i'+[W]}| = (n-i') < W$. That is, we consider also substring windows that ``hang over'' the edge of the string $x$ up to $W/2$.}
We denote the set of all \goodw~inputs in $\zo^n$ by $\sfGoodw$.
\end{definition}

\begin{remark}[Biased random inputs]
As with the cyclic case, we similarly have that biased per-bit random inputs satisfy \goodw ness with high probability for sufficiently large $W$ and $\alpha$ that is a function of the bias.

Namely, consider the same distribution $\cD_\beta$ over $\zo^n$ where each bit $x_i$ is selected as biased i.i.d.\ bits, equalling 1 with (constant) probability $0 < \beta < 1$. Then for every constant $\alpha < \min\{\beta,1-\beta\}$, it holds that
	\[ \Pr_{x \in_R \cD_\beta} [x \in \sfGoodw ] \ge 1 - n^2e^{-\Omega(W)}. \]
(Note here that $n$ need not be prime.)
The argument here follows similarly to the cyclic case, except without the complication of cyclic wraparound.

Consider a fixed pair $i < j \in [n-W/2]$. We analyze $\Delta(x_{i+[W]},x_{j+[W]}) = | \{\ell \in [W] : x_{i+\ell} \neq x_{j+\ell} \}|$.
We may again reorder the elements of $x$, this time into a collection of $i' :=j-i$ sequences, the corresponding shifted cosets of $(x_i,x_{i+i'}x_{i+2i'},\cdots)$ starting with the corresponding $(j-i)$th value $x_\ell$, $\ell \in \{i,\dots,j-1\}$, and containing all $i'$-multiple instances, up to $j+W$.
For any such sequence, and any fixed choice of values in the prefix of the sequence, the probability that the following term is equal to the previous is at least $\min\{\beta,1-\beta\}$. Altogether, all indices in the range $\{i,\dots,i+W\} \cup \{j,\dots,j+W\}$ thus contribute a fresh term, aside from the initial $(j-i)$ that served as start points of the sequences. The claim thus follows by a Chernoff bound, together with union bound over pairs $i,j$. Note that for $W \in \omega(\log n)$, the resulting error probability is negligible.
\end{remark}




As with the cyclic case, we demonstrate an analogous black-box construction of worst-case \emph{non-cyclic} LPHS for the set of inputs $\sfGoodw$, from any \emph{non-cyclic} LPHS for random inputs, with small overhead. We state the following proposition in terms of an arbitrary window-size parameter $W$, and then discuss relevant settings afterward.



\begin{proposition} [Worst-Case LPHS For \Goodw~Inputs]  \label{prop:worst-case2}
Assume there exists an $(n,d,\delta)$-non-cyclic LPHS with shift bound $R$ for \emph{random} inputs. Then, for every $0 < \alpha(n) \le 1$, $1 <W(n) \le n$, and $\len \ge \omega(\log n)$, there exists an
$(n',d',\delta')$-non-cyclic LPHS with shift bound $R$ for \emph{worst-case} inputs in $\sfGoodwp$, with $n'=n+W$, $d' = O(d \cdot \len)$ and $\delta' = \delta+ e^{-\Omega(b)}$. 
\end{proposition}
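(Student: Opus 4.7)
The plan is to follow the template of Proposition~\ref{prop:worst-case}, replacing the global random subset $S \subset [n]$ with a window-local one $S \subseteq [W]$ reused at every position. Given the underlying non-cyclic $(n,d,\delta)$-LPHS $h^*$ for random inputs, I define a family $\calH = \{h_{S,\gamma}\}$ indexed by a uniform size-$\len$ subset $S \subseteq [W]$ and a hash $\gamma : \zo^\len \to \zo$ drawn from an $n$-wise independent family. To evaluate $h_{S,\gamma}(x)$ for $x \in \zo^{n'}$ with $n' = n + W$, I simulate $h^*$ on the virtual length-$n$ string $y$ defined by $y_i := \gamma(x_{i+S})$; each query of $h^*$ is answered by $\len$ queries to $x$, yielding $d' = d \cdot \len$. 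The padding $W$ ensures every tile fits inside $x$, as the largest index accessed is at most $(n-1) + (W-1) < n'$.

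Next, I would verify translation-compatibility in a distributional sense. Writing $y'_i := \gamma((x \lll r)_{i+S})$, the definition of $\lll r$ gives $y'_i = y_{i+r}$ for every $i$ such that $i + (W-1) + r < n'$, while the remaining entries of $y'$ depend only on the fresh random bits appended by $\lll r$ and are independent of $y$. Thus, once one argues that $y$ (together with the ``tail'' of $y'$) is uniform on its support, the pair $(y,y')$ is distributed exactly as $(u, u \lll r)$ for uniform $u \in \zo^n$, so the non-cyclic LPHS guarantee of $h^*$ at shift $r \le R$ applies and contributes at most $\delta$ error.

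The crux is an analog of Claim~\ref{claim:S-tiles}: with probability $1 - e^{-\Omega(\len)}$ over $S$, the at most $2d$ tiles $x_{i+S}$ queried by $h^*$ during the two runs are pairwise distinct as bit-strings. For every pair $i < i'$ in the ``safe'' range $[0,n-1]$, \goodw ness of $x \in \sfGoodwp$ guarantees a disagreement set $T_{i,i'} \subseteq [W]$ of size at least $\alpha W$, so a random size-$\len$ subset $S$ misses $T_{i,i'}$ with probability at most $(1-\alpha)^{\len}$. A union bound over the $O(n^2)$ pairs of safe positions, combined with $\len \in \omega(\log n)$, yields total failure probability $e^{-\Omega(\len)}$. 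Tiles that straddle the random-tail region of $x \lll r$ are handled separately: their content is already uniform by construction, so distinctness against the previously queried tiles reduces to a birthday-type collision bound of $O(d^2)/2^{\len} = e^{-\Omega(\len)}$.

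Conditioned on the distinctness event, the $n$-wise independence of $\gamma$ (which applies because the total number of distinct queried tiles is at most $2d \le n$) makes the queried entries of $y$ and the independently-sourced entries of $y'$ jointly uniform, so the view of $h^*$ coincides exactly with its view on a truly uniform length-$n$ input. Summing the two sources of error gives
\[
\Pr_{h_{S,\gamma} \in_R \calH}\!\left[\, h_{S,\gamma}(x) \neq h_{S,\gamma}(x \lll r) + r \,\right] \;\le\; \delta + e^{-\Omega(\len)},
\]
which is the claimed bound. The main obstacle I anticipate is the careful bookkeeping of edge effects -- handling tiles that straddle the random-tail region of $x \lll r$ and consistently accounting for overlapping tiles across the two runs -- but these are qualitatively easier than in the cyclic case and contribute only lower-order terms.
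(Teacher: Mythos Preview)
Your construction and overall strategy are exactly the paper's: sample $S \subseteq [W]$ and an $n$-wise independent $\gamma$, simulate $h^*$ on the virtual string $y_i = \gamma(x_{i+S})$, and reduce correctness to distinctness of the relevant $S$-tiles. The treatment of the safe range and the translation identity $y'_i = y_{i+r}$ is also the same.

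The gap is in your handling of tiles that straddle the random tail. You assert that ``their content is already uniform by construction'' and invoke a birthday bound $O(d^2)/2^{\len}$. That is not correct: a tile at position $i = n - r + k$ of $x \lll r$ has only those bits $s \in S$ with $s \ge W - k$ falling in the fresh random suffix; for small $k$ this may be a handful of bits, or even zero if $S$ happens to miss $[W-k,W)$. Such a tile is therefore far from uniform in $\{0,1\}^{\len}$, and the $2^{-\len}$ per-pair bound is unjustified. Nor can you simply fall back on \goodw{}ness of $x$ for these positions, since the relevant comparison index $i+r$ may lie above $n' - W/2$, outside the range where the goodness hypothesis applies. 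So this is not mere bookkeeping --- the straddling tiles are covered by neither your ``safe'' argument nor your ``tail'' argument.

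The paper closes this gap with one extra observation: with probability $1 - e^{-\Omega(W)}$ over the fresh random bits $y \in \{0,1\}^r$, the concatenated string $z := x \,\|\, y$ lies in ${\sf Good}^{\alpha/2,W}_{n'+r}$. The point is that any length-$W$ window starting past index $n' - W/2$ contains at least $W/2$ fresh uniform bits, so a Chernoff bound gives the required $\alpha/2$ fractional Hamming distance against every other window. Once $z$ is itself good, your distinctness argument applies uniformly to \emph{all} tile positions, including the straddling ones, and the rest of your proof goes through.
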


\begin{proof} 
Let $\len = \len(n)$, $W=W(n)$.
Consider the hash function family $\calH = \{h_{S,\gamma}\}$ as defined within the proof of Proposition~\ref{prop:worst-case}, indexed by a subset $S \subset [W]$ of size $\len$, and a hash function $\gamma : \zo^\len \to \zo$ from an $n$-wise independent hash family. Here, we have the following two differences from the cyclic case:
  \begin{itemize}
  \item Sampling a hash function $h_{S,\gamma}$ from $\calH$ will consist of randomly sampling $\gamma$ from the $n$-wise independent hash family (as before), but now sampling the subset $S \subset [W]$ instead of $[n]$.
  \item The worst-case LPHS will use as a black box an underlying $(n,d,\delta)$-\emph{non-cyclic} LPHS $h^* : \zo^n \to \Z$ on random inputs (as opposed to cyclic LPHS).
  \end{itemize}


\noindent
Note that (as before) the query complexity of $h_{S,\gamma}$ is precisely $\len$ times the query complexity of $h^*$. We now analyze the correctness of the resulting worst-case LPHS.

First, note that for any $x \in \sfGoodwp$ and $1 \le r \le R$, then with high probability over $y \in_R \zo^r$, it holds that the concatenated string $z := x || y \in \sfGoodwpp$. Indeed, consider $\Delta(z_{i+[W]},z_{j+[W]})$ for various choices of $i,j \in [n'+r-W/2]$. From the definition of $\sfGoodwp$, the required relative distance $\alpha/2$ holds for pairs with $i,j \in [n'-W/2]$. For any $i \in [n'+r-W/2]$ and $j \in \{(n'-W/2+1),\dots,(n'+r-W/2)\}$, then the string $z_j$ is composed of at least $W/2$ \emph{uniform} bits (coming from $y$); the required relative distance $\alpha/2$ thus holds except with probability upper bounded by $e^{-\Omega(W)}$, by the (biased) random inputs Remark above.

The proposition then follows from the following claim, which collectively implies that our transformation reduces (with small error) directly to the underlying average-case LPHS: (1) First, that aside from probability $n^2(1-\alpha)^b$ over $h_{S,\gamma}$, queries into a ``good'' input in $\zo^{n+W}$ get mapped to queries into a \emph{uniform} input in $\zo^n$; (2) Second, that the transformation preserves the (non-cyclic) shift metric.

\begin{claim} For given $S,\gamma$, define  $f_{S,\gamma} : \zo^{n+W} \to \zo^n$ via $f_{S,\gamma}(x) = (\gamma(x_{i+S}))_{i \in [n]}$. For every $x \in \sfGoodwp$, the following hold:
  \begin{enumerate}
  \item With high probability over a random choice of subset $S \subset [W]$ of size $\len(n) \le n$, all ``$S$-tiles'' $x_{i+S}$ of $x$ are distinct. Namely,
	\[ \Pr_{S \in_R {W \choose b}} [ \exists i\neq i' \in [n], x_{i+S} = x_{i'+S} ]  \le n^2 \cdot (1-\alpha)^{\len}. \] 	
As a consequence, then with probability $1-n^2(1-\alpha)^{\len}$ over the choice of $S$, it holds that
	$\left \{ f_{S,\gamma}(x) \right \}_\gamma \equiv U_{\zo^n}$,
where the distribution $\left \{ f_{S,\gamma}(x) \right \}_\gamma$ is over the choice of $\gamma$, and $U_{\zo^n}$ denotes the uniform distribution over $\zo^n$.
	
  \item The map $f_{S,\gamma}$ preserves the (non-cyclic) shift metric. That is, for every $1 \le r \le R$,
    \begin{itemize}
    \item Agreement window: For every $S,\gamma$, and every index $(1+r) \le i \le (n-r)$, it holds that $f_{S,\gamma}(x)_i = f_{S,\gamma}(x \lll r)_{i+r}$ with probability 1 (over randomness of $\lll$).
    \item Outside agreement window:
  	\[ \left \{  f_{S,\gamma}(x \lll r) \right \}_\gamma \equiv  \left \{ \Big(f_{S,\gamma}(x)_{[1+r,\dots,n]}, U_{\zo^r} \Big)\right \}_\gamma. \]
    \end{itemize}

  \end{enumerate}
\end{claim}

\begin{proof}
We address each part of the claim.
\begin{enumerate}
\item Fix $i < i' \in [n]$. By \goodw ness of the input $x$, it holds that $\Delta(x_{i+[W]},x_{i'+[W]}) \ge \alpha n$; that is, there exists a subset $T \subseteq [W]$ of size $|T| \ge \alpha n$ for which $x_{i+j} \neq x_{i'+j}$ for every $j \in T$. Over the choice of $S$, $\Pr_S [x_{i+S} = x_{i'+S}] \le \Pr_S[S \cap T = \emptyset] \le (1-\alpha)^\len$.
 The claim thus holds by a union bound over pairs $i,i' \in [n]$.

\item Note that $x \in \zo^{n+W}$ and $x \lll r$ satisfy the property $x_i = (x \lll r)_{i+r}$ for all $i \in [n+W-r]$. Since $S \subset [W]$, then in particular $x_{i+S} = (x \lll r)_{i+r+S}$ for every $i \in [n-r]$. This implies the desired equality within the agreement window.

Outside the agreement window, the required property holds directly by applying the Claim part (1) to the appended string $z := x || y$ for $y \in_R \zo^r$, which can be done since (as argued above), this string satisfies $z \in \sfGoodwpp$.


\end{enumerate}
\end{proof}




\end{proof}

Plugging in the non-cyclic LPHS construction from Theorem~\ref{thm:1dupper}, together with $\len \in \omega(\log n) \cap \log^{O(1)}(n)$ yields the following corollary.
\begin{corollary}
For every constant $0 < \alpha \le 1$ and $W \in \omega(\log n)$, there exists a
worst-case \emph{non-cyclic} $(n,d,\delta)$-LPHS for $\sfGoodw$ with $d = \tilde O(\sqrt n)$ and $\delta = O(1/n)$.
\end{corollary}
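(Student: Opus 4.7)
The proof is a direct composition: apply Proposition~\ref{prop:worst-case2} to the non-cyclic LPHS guaranteed by Theorem~\ref{thm:1dupper}, with appropriately chosen parameters for the tile length $b$ and window size $W$.

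Concretely, the plan is to first instantiate Theorem~\ref{thm:1dupper} at $d_0 = \sqrt{n_0}$, which (for $n_0$ of the form $n - W$ so that the final input length comes out to $n$) produces a non-cyclic $(n_0, d_0, \delta_0)$-LPHS on uniformly random inputs with $\delta_0 = \tilde{O}(1/d_0^2) = \tilde{O}(1/n)$. Then I would feed this into Proposition~\ref{prop:worst-case2} with the choice $b, W \in \omega(\log n) \cap \log^{O(1)}(n)$ (e.g.\ $b = W = \log^2 n$). Reading off the parameters promised by the proposition:
\begin{itemize}
\item the resulting input length is $n' = n_0 + W = n$;
\item the query complexity becomes $d' = O(d_0 \cdot b) = O(\sqrt{n} \cdot \mathrm{polylog}(n)) = \tilde{O}(\sqrt{n})$;
\item the error is $\delta' = \delta_0 + e^{-\Omega(b)} = \tilde{O}(1/n) + n^{-\omega(1)} = \tilde{O}(1/n)$, which is $O(1/n)$ up to the polylog factors absorbed by $\tilde{O}(\cdot)$.
\end{itemize}
The guarantee of Proposition~\ref{prop:worst-case2} is that the resulting family is an LPHS for worst-case inputs in $\sfGoodw$ (on strings of length $n$), which is exactly the conclusion of the corollary.

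There is no real obstacle to overcome here; the only thing to double-check is the bookkeeping for the shift bound and the hidden constants. Specifically, one should verify that the shift bound $R$ inherited from Theorem~\ref{thm:1dupper} suffices for the desired LPHS (since by definition an LPHS only needs the $r=1$ guarantee, this is automatic), and that the lower-order additive term $e^{-\Omega(b)}$ is dominated by $\delta_0$ for our polylogarithmic choice of $b$ --- both are immediate. Hence the entire proof is essentially one line of parameter substitution once the two prior results are in hand.
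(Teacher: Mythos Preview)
Your proposal is correct and follows exactly the paper's approach: the corollary is obtained by plugging the non-cyclic LPHS of Theorem~\ref{thm:1dupper} into Proposition~\ref{prop:worst-case2} with a tile length $b \in \omega(\log n) \cap \log^{O(1)}(n)$. One small slip: $W$ is a \emph{given} parameter in the corollary (any $W \in \omega(\log n)$), not something you get to choose, so you should only select $b$ and set $n_0 = n - W$ for whatever $W$ is handed to you; the arithmetic you wrote still goes through verbatim.
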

%

Let $d_S(x,y)$ denote non-cyclic shift distance, defined as the distance between $x$ and $y$ on the De Bruijn Graph. Then the above results imply the following (probabilistic, bounded-distance) isometric embedding of $d_S$ to the line.

\begin{corollary}
Let $0 < \alpha \le 1$, $R \in [n]$ be a shift distance bound, $W \in \omega(\log n)$, and $d<n^{1/2}$. Then, there exists a family of hash functions $\calH$ such that
  \begin{enumerate}
  \item Each $h \in \calH$ makes $d$ queries to the input;
  \item For all $x,y \in \sfGoodw$ with $d_S(x,y)=r \le R$, we have $\Pr[|h(x)-h(y)| \neq r] \le \tilde O(r/d^2)$.
  \end{enumerate}
\end{corollary}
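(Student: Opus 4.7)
The plan is to instantiate Proposition~\ref{prop:worst-case2} with the average-case non-cyclic LPHS of Theorem~\ref{thm:1dupper} as the underlying building block. The base LPHS has query complexity $d^{*}=\tilde{O}(d)$ and single-shift error $\delta^{*}=\tilde{O}(1/d^{2})$; Lemma~\ref{lem:shift} upgrades it to a shift-bounded LPHS with shift-$r$ error at most $r\cdot \delta^{*}=\tilde{O}(r/d^{2})$ for all $1\le r\le R$. Plugging this in with tile width $b\in\omega(\log n)\cap\log^{O(1)}(n)$, and rescaling $d^{*}$ downward to absorb the $O(b)$ query blow-up of the construction, yields a family $\calH=\{h_{S,\gamma}\}$ in which every hash function makes at most $d$ queries.

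For the error analysis I would fix arbitrary $x,y\in\sfGoodw$ with $d_S(x,y)=r$ and, WLOG by symmetry, view $y$ as a \emph{specific} realization of $x\lll r$ with particular trailing bits $(y_{n-r+1},\ldots,y_n)$. The core claim is a distributional lemma: over the random choice of $(S,\gamma)$, the joint law of $(f_{S,\gamma}(x),f_{S,\gamma}(y))$ is identical to the joint law of $(u,u\lll r)$ for a uniform $u\in\zo^{n}$ and fresh uniform trailing bits, except on a bad event of probability $O(n^{2})(1-\alpha)^{b}=e^{-\Omega(b)}$. To establish this, I would extend Claim~\ref{claim:S-tiles} from a single input to the union of $S$-tiles of $x$ and $y$, using the \goodw ness of both strings and a union bound over the $O(n^{2})$ pairs. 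Conditioned on tile-distinctness, the $n$-wise independence of $\gamma$ forces the agreement-window coordinates of $f_{S,\gamma}(y)$ to equal the correspondingly shifted coordinates of $f_{S,\gamma}(x)$ (those coordinates use tiles common to $x$ and $y$), while the remaining coordinates become fresh independent uniform bits, exactly matching the structure of $u\lll r$.

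Given this distributional equivalence, the error on the specific pair $(x,y)$ reduces directly to the shift-bounded error of the underlying $h^{*}$ on random inputs, yielding
\[
\Pr_{h\in_R\calH}\bigl[h(x)-h(y)\neq r\bigr]\le r\cdot\delta^{*}+e^{-\Omega(b)}=\tilde{O}(r/d^{2}).
\]
The analogous bound holds with the roles of $x$ and $y$ swapped when $x$ is instead a shift of $y$, so combining the two orientations we obtain the symmetric statement $\Pr[|h(x)-h(y)|\neq r]\le\tilde{O}(r/d^{2})$.

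The main obstacle is the distributional lemma above: Proposition~\ref{prop:worst-case2} is phrased in terms of a random trailing-bit pattern, whereas the corollary demands the bound for the specific trailing bits dictated by $y$. This is precisely where \goodw ness of $y$ (and not only of $x$) is essential---it guarantees that the $S$-tiles touching $y$'s trailing bits collide with no other tile of $x$ or $y$ with high probability, so that $n$-wise independence of $\gamma$ re-randomizes their hash outputs regardless of the actual bit values. Without invoking \goodw ness on $y$, one could at best obtain a bound averaged over realizations of the trailing bits, which is insufficient when both endpoints of the De Bruijn-graph path are adversarially chosen.
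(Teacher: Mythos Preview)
Your approach is exactly what the paper intends: instantiate Proposition~\ref{prop:worst-case2} with the average-case non-cyclic LPHS of Theorem~\ref{thm:1dupper} and scale the single-shift error to shift-$r$ error via Lemma~\ref{lem:shift}. The paper gives no separate proof beyond the sentence ``the above results imply the following,'' so your proposal in fact supplies more detail than the paper does---in particular, you correctly flag and resolve the gap between Proposition~\ref{prop:worst-case2}'s random-trailing-bits formulation ($x\lll r$) and the corollary's requirement of a \emph{specific} $y\in\sfGoodw$, a point the paper leaves implicit.
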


\section{Applications}
\label{sec:applications}

\newcommand{\xb}{x^{(b)}}
\newcommand{\xbk}{x^{(k,b)}}
\newcommand{\Xb}{X^{(b)}}
\newcommand{\Gw}{{\mathrm{G,w}}}

\newcommand{\cP}{\mathcal{P}}
\newcommand{\cI}{\mathcal{I}}
\newcommand{\HSS}{{\sf HSS}}
\newcommand{\pk}{{\sf pk}}
\newcommand{\ek}{{\sf ek}}
\newcommand{\ct}{{\sf ct}}
\newcommand{\fst}{^{(1)}}
\newcommand{\rhoth}{^{(\rho)}}
\newcommand{\jth}{^{(j)}}
\newcommand{\modout}{m}
\newcommand{\secpar}{\lambda}
\newcommand{\poly}{{\sf poly}}
\newcommand{\state}{{\sf state}}
\newcommand{\advA}{\mathcal{A}}
\newcommand{\share}{{\sf share}}

\newcommand{\ElGamal}{{\sf ElGamal}}
\newcommand{\Vector}{{\sf Vector}}
\newcommand{\Scalar}{{\sf Scalar}}
\newcommand{\bbH}{\mathbb{H}}
\newcommand{\bbG}{\mathbb{G}}
\newcommand{\Pair}{{\sf Pair}}
\newcommand{\kDDL}{{k\text{-}{\sf DDL}}}
\newcommand{\PRF}{{\sf PRF}}
\newcommand{\Convert}{{\sf Convert}}
\newcommand{\blink}{{{\sf blin}\text{-}k}}	
\newcommand{\blintwo}{{{\sf blin}\text{-}2}}


In this section we present several cryptographic and algorithmic applications that motivate the different LPHS flavors studied in this work. In all applications, there are two or more parties that have partially overlapping views of a large object, and the goal is to measure in {\em sublinear time} the relative misalignment between the views, with low failure probability and without direct interaction between the parties.
Before discussing the applications in detail, we give a taxonomy of the kinds of LPHS instances on which they depend.
\begin{itemize}
\item {\bf 1-dimensional vs.\ $k$-dimensional.} The first application, to packed homomorphic secret sharing, requires $k$-dimensional LPHS for $k\ge 2$. The other applications can apply in any dimension, but the 2-dimensional variant seems most useful in the context of natural use cases that involve 2-dimensional objects (such as digital images).
\item {\bf Cyclic vs.\ non-cyclic.} While some of the applications can also meaningful in the cyclic case, the non-cyclic one is needed to capture the ``partially overlapping views'' scenario.
\item {\bf Random vs.\ worst-case.} In the application to packed homomorphic secret sharing from a generic group, the object is a huge and locally random mathematical object. Hence the default random-input variant of LPHS suffices. In other applications, which may involve physical objects or digital documents, the worst-case variant is needed. While the latter must inevitably exclude objects that are close to being highly periodic, this is not an issue for most natural use cases.
\item {\bf Small shift vs.\ big shift.} When the misalignment is small, one could apply a MinHash-based $(d,\tilde O(1/d))$-LPHS and reduce its failure probability via repetition. This solution is not suitable for applications that depend on the simple metric structure of the LPHS output, and in any case leads to inferior communication rate (for the same failure probability and running time) compared to applying a single instance of a $(d,\tilde O(1/d^2))$-LPHS. In applications that require detecting an arbitrary misalignment in sublinear time, a $(d,\tilde O(1/d))$-LPHS  does not suffice at all since the failure probability is multiplied by the shift amount.
\end{itemize}

We proceed with the details of the applications, starting with cryptographic applications.

\subsection{Packed Homomorphic Secret Sharing}
\label{sec-applicationhss}

In this section, we present a cryptographic application of \emph{$k$-dimensional} LPHS for constructing a ``packed'' version of  homomorphic secret sharing (HSS) from cryptographically hard groups. Roughly speaking, $k$-packed HSS can share a vector of $k$ values at the same communication cost as sharing a single value. This improved communication comes at the expense of a computational overhead that we optimize using two distinct LPHS-based approaches.
Our main approach relies on $k$-dimensional LPHS. Since even for $k=2$ we do not have a provable optimal construction of $k$-dimensional LPHS, we also present an alternative approach that relies on 1-dimensional LPHS and performs better in some parameter regimes.




\medskip

The remainder of this section is composed as follows. In Section~\ref{sec:kDDL} we introduce a $k$-dimensional variant of the DDL problem and in Section~\ref{sec:klphs2kddl} we show how to realize it from $k$-dimensional non-cyclic LPHS. Then, in Section~\ref{sec:kHSS}, we present the application of $k$-dimensional DDL to reducing the communication complexity of group-based HSS, as well as an alternative construction based on 1-dimensional embedding.

%

\subsubsection{Multidimensional DDL}   \label{sec:kDDL}


In this section we describe a $k$-dimensional generalization of DDL ($k$D-DDL for short) that we will use as an intermediate step towards constructing packed HSS. Since DDL has already found applications beyond the context of HSS~\cite{DottlingGIMMO19,trapdoor2,trapdoor3}, one may expect the same for the $k$-dimensional variant.

While perhaps the most direct extrapolation of DDL to $k$ dimensions would correspond to a DDL challenge within $k$ \emph{independent} copies of the group (i.e., $\Z_n^k \hookrightarrow \bigoplus_{i=1}^k \G$), for our purposes, we will consider a notion where the $k$ dimensions are embedded within a {\em single} group $\G$ of size $N\gg n$. More explicitly, we will embed $\Z_n^k \hookrightarrow \G$ via  $(j_1,\dots,j_{k}) \mapsto \prod_{i=1}^{k} g_i^{j_i}$ for $k$ randomly selected generators $g_i$ of $\G$.
As this embedding is not injective, its use within a construction of $k$D-DDL from  $k$D-LPHS will introduce an extra collision error probability; however, when $N$ is much bigger than the number of queries $d$ made by the $k$D-LPHS algorithm (as will be the case for the HSS application), this collision error probability will be negligible.

As in the case of (1-dimemsional) DDL algorithms defined in Section~\ref{sec:DDL}, we consider generic algorithms, i.e.\ operating within the generic group model.
We present a version of this model that captures the $k$-dimensional case below.

\medskip

Recall that in the 1-dimensional case, the LPHS parameter $n$ was the same as the group size and the LPHS string $x^{(b)}$ represented the sequence of all group elements. Since this will no longer be the case in the $k$-dimensional variant, from here on we denote the group size by $N$, and the sequence of labels of group elements by $\Xb$. In the 1-dimensional case, we considered an experiment where a string $\Xb \in \Sigma_b^N$ (representing the group) and a value $v \in \Z_N$ are sampled uniformly; the generic DDL algorithm makes adaptive queries of the form $(\alpha,\beta) \in \Z_N \times \Z_N$, which are answered by $\Xb[\ell_v(\alpha,\beta)] \in \Sigma_b$, where $\ell_v(\alpha,\beta) := \alpha \cdot v + \beta \in \Z_N$, and outputs a value $\gamma \in \Z_N$. The goal of the 1D-DDL algorithm was that the outputs $\gamma,\gamma'$ on implicit inputs $v$ and $v'=v+1$ satisfied $\gamma - \gamma' = 1$.


In the $k$-dimensional experiment that we define, again, a string $\Xb \in \Sigma_b^{\Z_N}$, representing a single group of order $N$ with generator $g$, and a value $v \in \Z_N$.
are sampled uniformly. In addition, $k$ public ``basis elements'' $w_1,\dots,w_{k} \gets \Z_N$ are randomly selected, representing the (discrete logarithm of) $k$ group generators $g_i=g^{w_i}$. We assume $N$ to be prime, in which case the $g_i$ are distinct group generators with high probability.
%
Given this setup, the generic algorithm $A^{\mathrm{G}}$ makes adaptive queries of the form $(\alpha,(\beta_1,\ldots,\beta_k)) \in \Z_N \times \Z_N^k$, which are answered by $\Xb[ \alpha v + \sum_{i = 1}^{k} \beta_i w_i ]$, namely the string handle of the group element $(g^v)^\alpha\cdot \prod_{i=1}^{k} g_i^{\beta_i}$.

The output of the algorithm is an integer vector $\gamma \in \Z^k$.
Here we use the notation $A^\Gw(\Xb,v)$ to indicate that $A$ is a generic algorithm that does not have direct access to $\Xb$ and $v$, but can have full knowledge of the public basis $w=(w_1,\ldots,w_k)$.

The goal of the $k$D-DDL algorithm is to similarly detect a shift of 1 on the input in any one of the $k$ dimensions. In this case, such a shift in dimension $i\in[k]$ corresponds to an additive offset of $w_i$ in the discrete logarithm. Concretely, we would like that the outputs $\gamma,\gamma'$ on implicit inputs $v$ and $v'=v+w_i$ to satisfy $\gamma - \gamma' = e_i$, the $i$th unit vector in $\Z^k$.

Note that the error probability of a $k$D-DDL algorithm can come from multiple sources: its internal coins, the choice of the implicit input $v$, and the ``setup'' process of choosing group labels $\Xb$ and the basis $w$. While for the purpose of driving down the error via repetition it is  useful to separate the first source of error from the others (see more below), we will consider the probability space that combines all these random choices for simplicity.

\begin{definition}[$k$D-DDL]  \label{def:kDDL}
A generic $k$D-DDL algorithm $A$ is an $(N,b,d,\delta)$ $k$D-DDLA if
$A$ makes $d$ (adaptive) queries and for every $i \in [k]$,
	\[ \pr 
[A^{\Gw}(\Xb, v) - A^{{\Gw}}(\Xb, v + w_i) \neq e_i] \leq \delta, \]
where $A^\mathrm{G}$ denotes generic access as described above and the probability is over the choice of $\Xb$ from $\Sigma_b^{\Z_N}$ and $v, w_1,\ldots,w_k$ from $\Z_N$.
\end{definition}

\begin{remark}[Bigger $k$-dimensional shifts] Analogously to Lemma~\ref{lem:shift}, we can use a union bound to deduce
a bound on the error probability for bigger $k$-dimensional shifts. Concretely, for any $(N,b,d,\delta)$ $k$D-DDLA and $k$-dimensional shift $u=(u_1,\ldots,u_k)\in\Z^k$ we have
	\[ \pr
	\left[ A^{\Gw}(\Xb, v) -
		 A^{\Gw}\Big(\Xb, v + \sum_{i=1}^{k} u_iw_i \Big)
		 \neq u \right] \leq \delta\cdot  |u|_1, \]
where $|u|_1$ denotes the $\ell_1$-norm of $u$ and the probability space is as in Definition~\ref{def:kDDL}. In the context of the packed HSS application, this will imply an error that scales with the $\ell_1$-norm of the output vector.
\end{remark}
As DDL algorithms have a small but non-negligible error, it may be useful to drive the error probability down using independent repetitions, where the internal coins of the algorithm are picked independently but all other sources of randomness are fixed.  (This repetition comes at a price of additional communication, since each instance produces different outputs.) For this purpose, one can use a more refined version of Definition~\ref{def:kDDL} in which the probability space is only over the internal randomness of $A^{\Gw}$ and the probability bound $\delta$ should hold except with negligible probability over all other random choices. The construction we present next indeed satisfies this property.

\subsubsection{$k$D-DDL from $k$D-LPHS} \label{sec:klphs2kddl}

Our main solution to the $k$D-DDL problem is based on the {\em non-cyclic} variant of $k$-dimensional LPHS from Definition~\ref{def-lphs} ($k$D-LPHS for short) that we recall below.

Let $e_i$ denote the $i$th unit vector of length $k$. For a $k$-dimensional string $x=x^{(k,b)}$, we denote by $x \lll e_i$ a non-cyclic shift of $x$ by 1 in the $i$th dimension (e.g., for $k=2$ and $i=1$, this corresponds to chopping the top row and adding a random row on the bottom). In the following we will use $\Z_n$ to denote the set of integers $\{0,1,\ldots,n-1\}$ and will not use its group structure.

\begin{definition}[Non-cyclic $k$D-LPHS]
Let $h:\Sigma_b^{\Z_n^k} \to \Z^k$ be a function. We say that $h$ is a non-cyclic {\em $(n,b,d,\delta)$ $k$D-LPHS} if $h$ can be computed by making $d$ {\em adaptive} queries (of the form $x[\beta_1,\dots,\beta_k]$ for $\beta_i \in \Z_n$) to a $k$-dimensional input  $x=x^{(k,b)}\in\Sigma_b^{\Z_n^k}$ and for every $i \in [k]$:
	\[ \pr_{x \in_R \Sigma_b^{\Z_n^k}} \lbs h(x) \neq h(x \lll e_i) + e_i \rbs \leq \delta.\]
This naturally extends to a probabilistic $h\in_R\cal H$.
\end{definition}

Given any non-cyclic $k$D-LPHS, we can construct a $k$D-DDL algorithm with similar parameters. The transformation will not be perfect, but will induce collision probability error that for sufficiently large group sizes $N$ will be negligible.

\begin{theorem}[From non-cyclic $k$D-LPHS to $k$D-DDL] \label{thm-kddl}
Let $k,n$ be positive integers and $N$ be a prime such that $N\ge n$. There exists a black-box reduction that converts any $(n,b,d,\delta)$  {\em non-cyclic} $k$D-LPHS to an $(N,b,d,\delta+d^2/N)$ $k$D-DDL algorithm. 
The DDLA is {\em query-restricted} in the sense that the values $\alpha,\beta_1,\ldots,\beta_k$ for each generic query satisfy $\alpha=1$ and $\beta_i\in\Z_n$.
\end{theorem}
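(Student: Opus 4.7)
The plan is a direct simulation. The DDL algorithm $A$ will run the LPHS internally, translating each LPHS coordinate-$\beta$ query into a query to the generic group oracle that encodes the $k$-dimensional offset $\beta$ through the public basis $(g_1,\ldots,g_k)$. Correctness will reduce, up to a small collision error of $O(d^2/N)$, to the LPHS guarantee on a uniformly random input.

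Concretely, given a non-cyclic $k$D-LPHS $h$, the algorithm $A^{\Gw}(\Xb,v)$ runs $h$; whenever $h$ asks for the symbol at coordinate $\beta=(\beta_1,\ldots,\beta_k)\in\Z_n^k$, $A$ issues the generic-group query $(1,\beta_1,\ldots,\beta_k)$, whose answer is $\Xb\!\left[v+\textstyle\sum_{j=1}^{k}\beta_j w_j\bmod N\right]$, and feeds this answer back to $h$ as if it were the LPHS input symbol at position $\beta$. Finally, $A$ outputs whatever $h$ outputs. The reduction is black-box in $h$, makes at most $d$ queries, and is query-restricted by construction ($\alpha=1$ and each $\beta_j\in\Z_n$).

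For the analysis, I would consider the pair of executions $A^{\Gw}(\Xb,v)$ and $A^{\Gw}(\Xb,v+w_i)$ for an arbitrary dimension $i\in[k]$. A query of the second execution at coordinate $\beta$ accesses $\Xb[v+\sum_j\beta_j w_j+w_i]=\Xb[v+\sum_j(\beta+e_i)_j w_j]$, i.e., the same linear-form access that would arise from a ``virtual'' query at the extended coordinate $\beta+e_i\in\{0,\ldots,n\}^k$. Let $S$ denote the multiset of effective coordinates arising in the two executions combined (the $\beta$'s from the first and the $\beta+e_i$'s from the second), so $|S|\le 2d$, and let $\mathcal{E}$ be the event that the affine map $\gamma\mapsto v+\sum_j\gamma_j w_j\bmod N$ fails to be injective on $S$. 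Conditioned on $\neg\mathcal{E}$, all accessed symbols $\Xb[\cdot]$ are mutually independent and uniform in $\Sigma_b$, so one can couple the simulation with a uniformly random $x\in\Sigma_b^{\Z_n^k}$ and a uniformly random ``fresh row'' (the slice $\gamma_i=n$) so that the first execution equals $h(x)$ and the second execution equals $h(x\lll e_i)$ in distribution. Invoking the LPHS guarantee, the two outputs then differ by $e_i$ except with probability $\delta$.

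It remains to bound $\Pr[\mathcal{E}]$. For any fixed pair $\gamma\neq\gamma'$ in $\{0,\ldots,n\}^k$, the difference $\gamma-\gamma'$ has some coordinate that is nonzero in $\Z$ and of absolute value at most $n$; under the theorem's hypothesis on $N$, this coordinate is nonzero modulo the prime $N$. Averaging over the uniform choice of the corresponding $w_j\in\Z_N$, the linear form $\sum_j(\gamma_j-\gamma'_j)w_j$ equals $0$ in $\Z_N$ with probability exactly $1/N$. A union bound over the at most $\binom{2d}{2}$ unordered pairs of distinct coordinates in $S$ yields $\Pr[\mathcal{E}]=O(d^2/N)$, giving total error $\delta+O(d^2/N)$. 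The main delicacy -- and the part worth writing out carefully -- is the coupling step, namely verifying that the ``fresh row'' positions (those with $\gamma_i=n$, which correspond to the random padding introduced by $\lll e_i$) are mapped by the affine embedding to $\Z_N$-locations disjoint from, and independent of, those used by the first execution; this is exactly what $\neg\mathcal{E}$ ensures.
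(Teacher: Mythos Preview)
Your proposal is correct and follows essentially the same approach as the paper's proof: simulate $h$ inside the DDL algorithm, translate each LPHS coordinate query $\beta$ into the generic-group query $(1,\beta)$, and argue that absent collisions of the affine embedding $\gamma\mapsto v+\langle w,\gamma\rangle$ the two executions are distributed exactly as $(h(x),h(x\lll e_i))$, then union-bound the collision probability by $O(d^2/N)$. You are in fact slightly more careful than the paper in that your bad event $\mathcal{E}$ explicitly covers spurious \emph{cross-execution} coincidences (a first-execution $\beta$ colliding with a second-execution effective coordinate $\beta'+e_i$), whereas the paper's stated ``good event'' names only within-execution collisions.
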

\begin{proof}
Given black-box access to an $(n,b,d,\delta)$ $k$-dimensional non-cyclic LPHS denoted by $h$, we construct the $(N,b,d,\delta+ d^2/N)$ $k$-dimensional DDLA, denoted by $A$, as follows.
The algorithm $A$ simulates $h$ by responding to each query $\beta=(\beta_1,\dots,\beta_k) \in \Z_n^k$ that $h$ makes into $x=x^{(k,b)}$ with the group symbol read by $A^\Gw[\Xb,v]$ on the corresponding generic group query $(\alpha=1,\beta)$. Once the $d$ queries are completed, $A$ outputs the output of $h$.

We now bound the error probability of $A$ over the choice of the generators basis $w=(w_1,\dots,w_{k}) \in_R \Z^k_N$, challenge value $v\in_R\Z_n$, and the randomness of $h$ and $\Xb$.
For $\beta\in\Z_n^k$ (viewed as a vector of integers), let $\langle w,\beta\rangle$ denote the mod-$N$ inner product of $w$ and $\beta$. Consider the executions of $A$ on $v$ and $v+w_i$. The key observation is that conditioned on the good event that neither of the two executions contains a colliding set of queries $\beta,\beta'$ such that $\langle w,\beta\rangle=\langle w,\beta'\rangle$,  the joint distribution of the symbols read by $A^\Gw[\Xb,v]$ and $A^\Gw[\Xb,v+w_i]$ is identical to that of the symbols read by $h(x)$ and $h(x \lll e_i)$. It follows that conditioned on this good event, the error probability of $A$ is bounded by $\delta$. It thus suffices to show that the probability of the bad collision event is bounded by $d^2/N$.

For any fixed pair of distinct query vectors $\beta,\beta'\in\Z_n^k$, we have
\[\pr_{w\in\Z_N^k}[\langle w,\beta\rangle=\langle w,\beta'\rangle]=1/N\]
(here we use the assumptions that $N$ is prime and $N\ge n$).
For each non-colliding query made by $A$, the value it receives back is a freshly sampled uniform value $\Xb[\ell] \in\Sigma_b$ (for some non-previously-queried index $\ell \in \Z_N$); in particular, this value is independent of the choice of $w$. Thus, taking a union a bound over the ${d \choose 2}$ pairs of distinct queries made in an execution of $A$, we get a collision probability bound of ${d \choose 2}/N$ in a single execution, and at most $2\cdot{d \choose 2}/N< d^2/N$ collision probability on either the execution on $v$ or on $v+w_i$.
\end{proof}

\medskip
For example, plugging in the results from Section~\ref{sec:2dim}, we get a provable 2D-DDL algorithm that makes $d$ queries (that can be implemented using $\tilde O(d)$ group multiplications) and has  $\delta = \tilde O(d^{-7/8} + d^2/N)$ error probability. Using the conjectured optimal algorithm, the $d^{-7/8}$ term can be replaced by $d^{-1}$.

The additive error term of $d^2/N$ can be further reduced by applying the worst-case notion of LPHS from Appendix~\ref{sec-worstcase} to get robustness against collisions. However, this term is already negligible for a typical choice of parameters. For instance, practical cryptographically hard groups have order $N>2^{256}$, and so for this term to be significant the running time $d$ should be close to $2^{128}$.


%

\subsubsection{From  $k$D-DDL to Packed HSS}  \label{sec:kHSS}

We now demonstrate applications of $k$D-DDL to packed HSS. More formally, we consider homomorphic secret sharing (HSS) with non-negligible error $\delta$, as introduced in~\cite{BGI16}:

\begin{definition}[$\delta$-Homomorphic Secret Sharing]
\label{def:HSS}
A (2-party) {\em $\delta$-Homomorphic Secret Sharing ($\delta$-HSS)} scheme for a class of programs $\cP$ over $\Z$ with input space $\cI\subseteq \Z$ consists of PPT algorithms $(\HSS.\HShare,$ $\HSS.\HEval)$ with the following syntax:
  \begin{itemize}
  \item $\HSS.\HShare(1^\secpar,x)$:   Given security parameter $1^\lambda$ and secret input value $\alpha\in\cI$, the sharing algorithm outputs secret shares $(\share_0,\share_1)$.
   \item $\HSS.\HEval(i, (\share\fst_i,\dots,\share\rhoth_i), P, \modout)$: Given party index $i \in \{0,1\}$, the $i$th secret share for $\rho$ inputs, program $P \in \cP$ with $\rho$ input values, and integer $\modout\geq 2$, homomorphic evaluation outputs $y_i \in \Z_\modout$, constituting party $i$'s share of an output $y\in \Z_\modout$.
  \end{itemize}
\end{definition}
\noindent
The algorithms $(\HSS.\HShare,\HSS.\HEval)$ must satisfy the expected homomorphic evaluation correctness $y_0+y_1 = P(\alpha\fst,\dots,\alpha\rhoth) \in \Z_\modout$ for any set of inputs and program $P \in \cP$, with probability at least $(1-\delta)$ over the execution of $\HSS.\HShare$.
In addition, $\HSS.\HShare$ should satisfy semantic security (i.e., given only the $i$th share of a sequence of inputs, a polynomial-time bounded adversary cannot distinguish which of two input sequences they were derived from).

\medskip

We will be interested in $\delta$-HSS for vector-scalar multiplications. That is, let $\cP_{\blink}$ denote the class of programs which perform bilinear operations over two variable types: \emph{vectors} over $[M]^k$, and \emph{scalars} over $[M]$. We will maintain notation of scalars denoted by Greek letters and vectors as lowercase roman letters: e.g., $\alpha \in [M], u = (u_1,\dots,u_k) \in [M]^k$.

More formally, we consider $\cP_{\blink}$ the class of programs which act on inputs each of type vector or scalar, and which makes polynomially many of the following operations. Note that all data types are additionally either ``Input'' or ``Non-Input''; to capture the bilinear limitation, multiplication is only allowed between Input scalars and vectors.
 \begin{itemize}
 \item {\sf Add Input (or Non-Input) Vectors}: $u'' \gets u + u'$ (or $z'' \gets z + z'$).
  \item {\sf Add (Input or Non-Input) Scalars}: $\alpha'' \gets \alpha + \alpha'$.
 \item {\sf Multiply Input Scalar and Input Vector}: $z \gets \alpha \cdot u$.
 \item {\sf Parse Non-Input Vector as Non-Input Scalars}: $(z_1,\dots,z_k) \gets z$
 \end{itemize}
For example, this class of programs includes many useful statistical computations, such as correlations.

\paragraph{Overview of existing group-based HSS.}
At a very high level, the group-based HSS constructions of~\cite{BGI16} and successors work as follows. Let $\G$ be a (cyclic) cryptographically hard group\footnote{For which finding discrete logarithms is computationally hard. More specifically, we require groups for which the ``Decisional Diffie-Hellman'' (DDH) assumption~\cite{DiffieH76} holds: i.e., $(g,g^a,g^b,g^{ab})$ is computationally indistinguishable from $(g,g^a,g^b,g^c)$ for random generator $g$ and random exponents $a,b,c$.} of prime order $N$, and randomly selected generator $g$. Let $c \in_R \Z_N$ be a random secret key for ElGamal encryption (as described below). Consider a message space $\cI \subset \Z^+$ (where correctness error will scale with the magnitude of the inputs and partial computation values). We will maintain these notations for the remainder of the section.


Secret data is encoded by the HSS in one of two types:
  \begin{itemize}
  \item Encryptions: $\alpha \in \Z$ encoded by an ElGamal ciphertext $[\alpha] := (g^r, g^{cr+\alpha}) \in \G^2$, for random $r \in_R \Z_N$. Both parties receive the ciphertext $[\alpha]$.
  \item Additive Shares: $\alpha' \in \Z$ encoded as two sets of additive secret shares $\langle \alpha' \rangle$ and $\langle c \alpha' \rangle$ over $\Z_N$, where $c \in \Z_N$ is the ElGamal secret key. (Notationally, $\langle \alpha \rangle$ denotes that each party receives a random share in $\Z_N$ subject to sum (over $\Z_N$) equaling $\alpha$.)
  \end{itemize}
Homomorphic evaluation takes place via a sequence of \emph{addition} steps, performable directly on data items encoded in the same type, and \emph{restricted multiplications}, in which a value in Encrypted type can be multiplied by a value in Additive Share type, as follows:
  \begin{itemize}
  \item Pairing: Via linear operation in the exponent using $\langle \alpha' \rangle$ and $\langle c\alpha' \rangle$ together with $[\alpha]$, the parties locally compute $g^{\langle \alpha\alpha' \rangle}$: that is, the parties hold group elements $g^\beta$ and $g^{\beta+\alpha\alpha'}$, for some exponent $\beta \in \Z_N$.

  This can be viewed as a form of ``distributed decryption,'' leveraging that decryption of a ciphertext $(g^\gamma,g^\zeta)$ via $(g^\zeta) \cdot (g^\gamma)^{-c}$ induces an operation $\zeta-c\gamma$ in the exponent space that is linear in $c$. Performing the operation on an identical ciphertext (i.e., fixed $\gamma, \zeta$) and additive shares of $c$ thus yields the desired result.

  \item Share Conversion: Each party executes the (1D) DDL on his resulting group element.
  \end{itemize}
If the DDL algorithm succeeds, then the parties result in additive shares $\langle \alpha\alpha' \rangle$ of the difference in the discrete logarithms of the two input elements $g^\beta, g^{\beta+\alpha\alpha'}$. In this case, namely, additive shares of the product $\alpha\alpha'$ over $\Z_N$.

Ultimately, the existing 1-dimensional HSS scheme performs multiplication of scalars $\zeta=\alpha\alpha' \in [M]$ and, given runtime $T$, succeeds except with error probability $\zeta/T^2$ (inherited from the 1D-DDL~\cite{DKK18}).


\medskip

We next proceed to describe three solution approaches for supporting HSS for bilinear functions $\cP_\blink$: (1) A non-packed baseline application of the HSS described above, where a $k$-dimensional plaintext vector $u$ is simply encoded as $k$ scalars; (2) Our new packed HSS solution \emph{from $k$D-DDL}, which packs a vector $u$ into a single ciphertext via a product of randomly selected generators, and leverages $k$-dimensional DDL; and (3) An alternative \emph{1-dimensional embedding} solution approach, which encodes $u$ into a scalar value and executes standard 1D-DDL.

Given an optimal $k$D-DDL (with error $d^{-2/k}$) then the $k$D-DDL packed HSS solution would dominate the 1-dimensional embedding approach; however, given the current gap, the two solutions are presently incomparable. In particular, the $k$D-DDL solution wins out when the payload magnitude and desired error probability are not fully known \emph{in advance}.

\paragraph{Baseline solution: Non-packed.}
For baseline comparison, we consider the existing solution for obtaining group-based HSS for $\cP_{\blink}$: \emph{Direct (non-packed) application} of~\cite{BGI16,BGI17,DKK18}, which encodes each component of a vector $u \in [M]^k$ as a separate element

In the direct application, each vector $u = (u_1,\dots,u_k) \in [M]^k$ is simply encoded as a collection of $k$ ElGamal ciphertexts, and scalar-vector multiplications $\alpha \cdot u$ is homomorphically evaluated via $k$ independent multiplications $\alpha \cdot v_i$, for $i =[k]$. The share size to encode a vector thus increases to $k$ ElGamal ciphertexts. Each vector-scalar multiplication corresponds to $k$ independent instances of a standard scalar-scalar multiplication; in particular, for output value $z = \alpha \cdot u$ and runtime $T$ for the share conversion procedure (namely, 1D-DDL), the error grows as $\sum_{i=1}^{k} (z_i/T^2)$.

\paragraph{New solution: From $k$D-DDL.}
In our new solution \emph{from $k$D-DDL}, we embed the vector $u$ into a single ElGamal ciphertext as a corresponding \emph{product of generators}, and run the $k$D-DDL algorithm to extract the corresponding additive shares of the product.

More concretely, let $\G$ be a cyclic DDH-hard group of prime order $N$, and let $g,g_i = g^{w_i}$, be $k+1$ randomly selected group generators. (Recall we assume $N$ to be prime, in which case $g^{w_i}$ for randomly chosen $w_i$ will be a generator with high probability.) Given a secret vector $u = (u_1,\dots,u_k) \in [M]^k$, we embed $u$ into a single ciphertext as follows. Recall a standard ElGamal ciphertext with secret key $c \in \Z_N$ encodes scalar plaintext $\alpha \in [M]$ as a pair $(g^r,g^{rc}\cdot g^\alpha) \in \G^2$ (where $g \in \G$ is a generator and $r \in_R \Z_N$ is encryption randomness). We now encode a \emph{vector} $u \in [M]^k$ into a pair of group elements using the generators $g_i$ by sampling random $r \in_R \Z_N$ and outputting:
	\[ \HSS.\HShare\Vector( (u_1,\dots,u_k); r) = \left( g^r, g^{rc} \cdot\prod_{i=1}^{k} g_i^{u_i} \right). \]
Scalar values $\alpha \in [M]$ will be encoded as Additive Secret Shares, as before.

Consider now homomorphic multiplication between an encoded \emph{vector} and scalar.
  \begin{itemize}
  \item Pairing: Perform the standard HSS pairing procedure between the additively shared $\langle \alpha \rangle$ and $\langle c\alpha \rangle$ together with the above ciphertext. This enables the parties to obtain group elements $g^\beta \in \G$ and $(g^\beta\cdot \prod_{i=1}^{k} g_i^{\alpha u_i})\in \G$ for some $\beta \in \Z_N$.

  \item Share Rerandomization: In order to rerandomize the exponent $\beta$, both parties multiply their local share by the \emph{same} random group element $g'$, computed as e.g.\ pseudorandom function of the unique instruction identifier. Note that this can be achieved with minimal additional share size (a single key to a pseudorandom function, included in each party's share) and computation. (Further, note that existence of pseudorandom functions is already implied by the DDH computational assumption.) We will thus roughly ignore this step in terms of analysis, and assume the value $\beta$ is distributed (pseudo-)uniformly, conditioned on the entire execution up to this point.

  \item Share Conversion: At this point, the parties will now attempt to extract shares of the exponent vector $(u_1,\dots,u_k)$ via execution of the $k$D-DDL algorithm.
  Let $A$ be a generic $(N,b,d,\delta)$-$k$D-DDLA, as per Definition~\ref{def:kDDL}. Recall for $x \in  \Sigma_b^N$ and $\beta \in \Z_N$ we denote by $A^\G(x, \beta)$ the execution of the algorithm $A$ with oracle access to the generic group represented by $x$ on the input challenge string $x[\beta] \in \Sigma_b$ representing the generic group string handle for the element $g^\beta$. (Note that the role of $\beta$ was notated by $v$ in the previous sections.) Then directly applying the $k$D-DDL property together with a union bound (see Remark below Definition~\ref{def:kDDL}), we directly have that for $\alpha \cdot u \in \Z_n^k$,
	\[ \pr_{ x \in_R\Sigma_b^N, \beta \in_R \Z_N} \left[ A^{\mathrm{G}}(x, \beta) -
		 A^{\mathrm{G}}\Big(x, \beta + \sum_{i=1}^{k} \alpha u_i \Big)
		 \neq \alpha u \right] \leq \delta  \alpha |u|_1, \]
where $|u|_1$ denotes the $\ell_1$ norm of $u$.

That is, aside from error probability bounded by $\delta \alpha |u|_1$, executing the algorithm $A$ with respect to the two values $g^\beta$ and $(g^\beta\cdot \prod_{i=1}^{k} g_i^{\alpha u_i})\in \G$ will result in precisely the desired additive output shares of the multiplied vector $\alpha u$.

  \end{itemize}
%


We now present a more detailed description of the Packed HSS construction from $k$D-DDL, based on the Decisional Diffie-Hellman assumption that underlies the construction.

\newcommand{\IG}{\mathcal{IG}}

\begin{definition}[DDH]
Let $\G=\{\mathbb{G}_\rho\}$ be a set of finite cyclic groups, where $|\mathbb{G}_\rho|=q$ and $\rho$ ranges over an infinite index set. We use multiplicative notation for the group operation and use $g \in \mathbb{G}_\rho$ to denote a generator of $\mathbb{G}_\rho$. Assume that there exists an algorithm running in polynomial time in $\log q$ that computes the group operation of $\mathbb{G}_\rho$. Assume further that there exists a PPT instance generator algorithm $\IG$ that on input $1^\lambda$ outputs an index $\rho$ which determines the group $\mathbb{G}_\rho$ and a generator $g \in \mathbb{G}_\rho$. We say that the Decisional Diffie-Hellman assumption (DDH) is satisfied on $\G$ if $\IG(1^\lambda)=(\rho,g)$ and for every non-uniform PPT algorithm ${\cal A}$ and every three random $a, b, c \in \{0,\ldots,q-1\}$ we have
$$|\mbox{Pr}[{\cal A}(\rho,g^a,g^b,g^{ab})=1]-\mbox{Pr}[{\cal A}(\rho,g^a,g^b,g^c)=1]| < \varepsilon(\lambda),$$
for a negligible function $\varepsilon$.
We will sometimes write $(\bbG,g,q) \gets \IG(1^\lambda)$.
\end{definition}



\begin{notation}  In this section: For $a \in \Z_q$, we denote by $\langle a \rangle$ additive secret shares $(a_0,a_1) \in \Z_q$. For selected generator $g_i$, we denote by $[a]_i$ the group element $(g_i)^a$. To maintain closer consistency to the notation of prior work, we will denote the (prime) group order by $q$ (as opposed to $N$).
\end{notation}

\begin{construction}[$k$-Packed HSS for Vector-Scalar Mult] \label{const:HSS}
Let $(\G,g,q) \in_R \IG(1^\lambda)$.
Let $g_1,\dots,g_{k}$ denote additional random generators of $\bbG$, selected as $g_i := g^{w_i}$ for random $w_i \in_R \Z_q$.

Let $A_\kDDL$ be a query-restricted $(N,b,d,\delta)$-$k$D-DDL algorithm, and $\PRF: \bbG \to \zo^b$ a pseudorandom function.
  \begin{itemize}
  \item $\HSS.\HShare(1^\secpar)$: Sample a random ElGamal secret key $c \in_R \Z_q$. To share each secret vector/scalar input, execute the corresponding algorithm:
    \begin{itemize}
    \item $\HSS.\HShare\Vector(\vec x)$: (``Encryption'') Given input vector $\vec x \in \zo^k$, output a {\em packed} ElGamal ciphertext,
  	$\ct = ([r]_1, [rc]_1 \cdot\prod_{i=1}^k [x_i]_i) \in \bbG \times \bbG$.
    \item $\HSS.\HShare\Scalar(y)$: (``Additive shares'') Given input scalar $y \in \zo$, output additive secret shares $\langle y \rangle,\langle cy \rangle$ over $\Z_q$. (Denote party $i$'s part of these values by $\share_i$.)
    \end{itemize}
  \item $\HSS.\HEval(i, ((\ct\fst,\dots,\ct\rhoth),(\share_i\fst,\dots,\share_i^{(\rho')})),P,r)$: Let $P$ denote a bilinear function
  	\[ P\left( (\vec x\fst,\dots,\vec x\rhoth), (y\fst,\dots,y^{(\rho')})\right) = \sum_{j \in [\rho],\ell \in [\rho']} \alpha_{j,\ell} \cdot y^{(\ell)} \cdot \vec x\jth \in \Z_\modout^k,\]
	 where $\alpha_{j,\ell} \in \Z_\modout$. Then homomorphic evaluation of $P$ takes place as in Algorithm~\ref{alg:HSS}.
	
	 \begin{algorithm}
	 \Begin{
	   Let bilinear function $P$ be defined by coefficients $\alpha_{j,\ell} \in \Z_\modout$
	
	   Initialize $z \in_R 1 \in \bbG$\;

  	   \For{$j=1$ to $\rho$, $\ell=1$ to $\rho'$}{
	
	     \If{$\alpha_{j,\ell} \neq 0$} {
	
			     $z \in_R z \cdot \Pair( \ct\jth, \share_i^{(\ell)})$\;
	     }
	   }
	
	   Output $A_\kDDL(z)$\;
	   }
	   \hrule
	   {\bf Subroutine} $\Pair(\ct,\share_i)$:
	
	   \Begin{
	
	   Parse $\ct = ([a], [b]) \in \bbG \times \bbG$ and $\share_i = (y_i, (cy)_i) \in \Z_q \times \Z_q$ \;
	
	   Output $[b]^{y_i} \cdot [a]^{-(cy)_i} \in \bbG$\;
	 }
	
	 \hrule
	 {\bf Subroutine} $\Convert(z)$:
	
	 \Begin{
	
	 Begin executing the DDL algorithm $A_\kDDL$
	
  	   \While{$A_\kDDL$ makes query $(1,\vec j) \in \Z_n \times \Z_n^k$} {
	
	   Compute $z_{\vec j} = z \cdot \prod_{i=1}^k [j_i]_i \in \bbG$\;
	
	   Respond to $A_\kDDL$ with value $\PRF(z_{\vec j}) \in \zo^b$\;
	
	   }
	
	   Output the value output by $A_\kDDL$\;
	 }
	 \caption{$\HSS.\HEval(i,(\ct\jth,\share_i^{(\ell)}),P,\modout)$ for bilinear functions, given $A_\kDDL$}
	 \label{alg:HSS}
	 \end{algorithm}
  \end{itemize}
\end{construction}


\paragraph{New solution: 1-dimensional embedding.}
Our second solution performs a \emph{1-dimensional embedding}, encoding a vector $u=(u_1,\dots,u_k)$ into a single scalar element $\sum_{i=1}^{k} u_i M^{i-1}$ in an enlarged payload space $[M^k]$ for chosen parameter $M$ (see below).


More concretely, in the 1-dimensional embedding approach, the existing (1-dimensional) HSS scheme is used in a black-box manner, embedding a vector value $u \in [M]^k$ as a single \emph{integer} $\tilde u := \sum_{i=1}^{k} M^{i-1} u_i \in [M^k]$ within a larger input space, where $M$ is a parameter chosen at the time of HSS encoding. (A larger choice of $M$ will ultimately result in smaller multiplication error, but will require greater runtime.) The corresponding share size of the vector $u$ is thus a \emph{single} ElGamal ciphertext. Homomorphic scalar-vector multiplication of $z = \alpha \cdot u$ will take place via two phases:
  \begin{itemize}
  \item Applying the 1D HSS multiplication procedure on the encoded secret \emph{integer} $\tilde u \in [M^k]$ (encoded via Encryption) together with secret scalar $\alpha \in [M]$ (encoded via Additive Shares), then with error probability $\tilde z/T^2$, one can homomorphically obtain additive secret shares of the \emph{integer} product $\tilde z := \alpha \tilde u = \sum_{i=1}^{k} M^{i-1} \alpha u_i$ over $\Z_N$.

  However, this is not the required output: HSS demands additive shares of the target vector $z = \alpha u$, over the corresponding \emph{vector space} $\Z_N^k$. This distinction is a crucial requirement for many HSS application settings, where additive shares are later combined or manipulated over the respective output space.\footnote{For example, HSS for program class $\cP$ yields succinct 2-server Private Information Retrieval (PIR)~\cite{ChorKGS98} for private database queries of related class $\cP'$~\cite{GilboaI14,BoyleGI15}, crucially depending on the \emph{additive} reconstruction of the HSS scheme over the output space of $\cP$.}
  \item To satisfy this requirement, the 1-dimensional embedding approach must thus add a step to revert shares of $\tilde z = \sum_{i=1}^{k} M^{i-1} z_i$ over $\Z_N$ to shares of $z = (z_1,\dots,z_{k})$ over $\Z_N^k$.

  Denote the original shares by $a,a' \in \Z_N$, and express as integers in base $M$: i.e., $a_0,a_1,\dots, a_m$ and $a'_0,a'_1,\dots,a'_m$ where $a_i,a'_i \in \{0,\dots,M-1\}$ and $m= \lceil \log_M(N) \rceil$. Conditioned on correct shares of $z$ over $\Z_N$, we have $a' = a + \sum_{i=1}^{k} M^{i-1} z_i$, which in turn implies each coordinate $a'_i = a_i + z_i$ as long as for each $i$ it holds that  $(a_i+z_i) < M$ (i.e., as long as there is no ``carry'' to the next power of $M$). Since $z_i \in [M]$, then by rerandomizing shares (i.e., adjusting both shares $a,a'$ by the same random additive offset in $\Z_N$), this bad event $(a_i +z_i) \ge M$ occurs only if $a_i \ge M-1-z_i $, i.e.\ with probability $z_i/M$.
  \end{itemize}
Combining the error of the HSS together with this error from share conversion (union bounding over the $k$ dimensions), the overall error of the 1-dimensional embedding approach becomes $(\sum M^{i-1} z_i/T^2) + (\sum z_i/M)$, where $z = \alpha u \in [M]^k$ and summations are over $i=1$ to $k$. 

\medskip

\begin{remark}[1D embedding parameter $M$]
Observe that the additive term $(\sum z_i/M)$ in the 1D embedding multiplication error expression does not decrease with runtime $T$. The choice of the parameter $M$ must thus be set sufficiently large to enable a desired error $(\sum z_i/M) < \epsilon$. However, $M$ must be selected at the time of HSS \emph{encoding}---to translate the plaintext vector $u$ to an integer $\sum M^{i-1} u_i$---at which point the eventual payload magnitude $\sum z_i$ and target error $\epsilon$ may not yet be known.

The 1D embedding solution thus requires one to \emph{predict} at encode time what the final payload magnitude and target error will eventually be. If the prediction mis-estimates the final expression $(\sum z_i/\epsilon)$ by a multiplicative factor $\xi$, then one of two things will take place. If $\xi<1$, i.e.\ $M$ was chosen too small, then the execution will have failed: error $\epsilon$ will be unachievable.  If $\xi>1$, i.e.\ $M$ was overestimated, then this will inflict a runtime overhead to appropriately shrink the second error term $(\sum M^{i-1}z_i/T^2)$. Specifically, to obtain a given error with this inflated $M$, one will now need to increase $T$ by a factor of $\xi^{(k-1)/2}$ to account for the extra leading $M^{k-1}$ term in the numerator.
\end{remark}




\paragraph{Comparison of approaches.}
Ultimately, the resulting parameters of the three approaches are summarized in the following theorem statement.

\begin{theorem}[Packed HSS from $k$D-DDL] 
Suppose there exists a query-restricted $(N,b,d,\delta)$-$k$-dimensional-DDL algorithm.
Then, based on the DDH assumption, there exists HSS for the class $\cP_{\blink}$ for vector-scalar operations with the following parameters. Multiplication error is given for vector-scalar multiplication $z = \alpha u$; summations are $\sum_{i=1}^{k}$.
  \begin{center}
  \begin{tabular}{l|c|cc|}
  			  				&\underline{Non-packed}	&\underline{1D Embedding $(M)$} 	&\underline{From $k$D-DDL}\\
  Share size, vector $u \in [M]^k$: 	& $k$ ElGamal CT		& 1 ElGamal CT 		& 1 ElGamal CT	\\
  Share size, scalar $\alpha \in [M]$:		& 2 $\Z_N$-elmts		& 2 $\Z_N$-elmts		& 2 $\Z_N$-elmts	\\
  Mult error, $z=\alpha u$, time $T$:		& $\sum z_i/T^2$ 		& $\sum (M^{i-1} z_i)/T^2 + (\sum z_i)/M$	& $\sum z_i \cdot \delta(T)$	\\
  \end{tabular}
  \end{center}
\end{theorem}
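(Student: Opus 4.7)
The plan is to prove the three rows of the table separately. The non-packed baseline follows directly from running the 1D group-based HSS of~\cite{BGI16,DKK18} independently on each coordinate of the vector, with the stated error coming from a union bound over the $k$ coordinates. The 1D-embedding row requires combining the 1D HSS error with a share-conversion analysis. The main novel content is the $k$D-DDL row, corresponding to Construction~\ref{const:HSS}; I sketch its proof strategy below.

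\textbf{Security.} First I would establish semantic security of the packed ciphertext $\ct = ([r]_1, [rc]_1 \cdot \prod_{i=1}^k [x_i]_i)$ under DDH via a standard hybrid argument. Since each $g_i = g^{w_i}$ for independent random $w_i \in \Z_q$, DDH implies that $([r]_1, [rc]_1)$ is pseudorandom in $\bbG^2$, so the entire ciphertext is pseudorandom and in particular hides the vector $\vec{x}$. Additive scalar shares are perfectly hiding. Hence the shares produced by $\HSS.\HShare$ reveal nothing about the plaintext beyond what a polynomial-time adversary can distinguish under DDH.

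\textbf{Correctness.} The core claim is that Algorithm~\ref{alg:HSS} outputs additive shares of $P(\vec{x}^{(1)},\dots,\vec{x}^{(\rho)},y^{(1)},\dots,y^{(\rho')}) \in \Z_m^k$ except with probability at most $\delta \cdot |z|_1 = \delta \sum_i z_i$. I would argue this in two steps. Step one is to show by linearity of the $\Pair$ subroutine over the exponents that after the loop, the two parties hold group elements $g^{\beta_0}$ and $g^{\beta_1}$ satisfying $\beta_1 - \beta_0 = \sum_{j,\ell} \alpha_{j,\ell} \cdot y^{(\ell)} \cdot \langle \vec{x}^{(j)}, \vec{w}\rangle = \langle z, \vec{w}\rangle \pmod{q}$, where $\vec{w} = (w_1,\dots,w_k)$ is the generator basis; this is a direct computation from the definition of $\Pair$ applied to a packed ciphertext and scalar shares, analogous to the 1D case in~\cite{BGI16}. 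Step two is to invoke the $k$D-DDL guarantee: after a PRF-based rerandomization of the absolute position (which under DDH makes $\beta_0$ computationally indistinguishable from uniform in $\Z_q$), the two inputs to $A_\kDDL$ differ by $\langle z, \vec{w}\rangle$, so by the $\ell_1$ union bound for $k$D-DDL (the remark following Definition~\ref{def:kDDL}), $\Convert$ returns additive shares of $z \in \Z^k$ except with probability $\delta \cdot |z|_1$. Reducing modulo $m$ then gives the required shares in $\Z_m^k$.

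\textbf{1D embedding row.} This requires a small additional argument on top of 1D HSS: encode $\vec{u} \in [M]^k$ as the integer $\tilde u = \sum_{i=1}^k M^{i-1} u_i \in [M^k]$ and run the 1D HSS, which produces additive shares of $\tilde z = \sum M^{i-1} z_i$ over $\Z_N$ except with probability $\tilde z / T^2$ by~\cite{DKK18}. To turn these into vector shares, decompose each share into base-$M$ digits; conditioned on the base-$M$ addition of the two shares producing no carries, the per-digit differences are exactly the coordinate shares of $z$. After rerandomizing shares by a common uniform additive offset in $\Z_N$ (using a shared PRF), the carry-out from coordinate $i$ occurs with probability at most $z_i/M$, yielding the claimed $\sum z_i/M$ additive error term.

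\textbf{Main obstacle.} The principal subtlety is justifying that the $k$D-DDL error guarantee, which is stated for truly uniform group inputs $v \in \Z_N$ and truly random basis $\vec w$, applies to the group element produced inside $\HEval$, whose ``uniformity'' is only computational (coming from DDH-based rerandomization) and whose basis $\vec w$ is re-used across many homomorphic operations. This is resolved by a hybrid: replace the PRF output by truly uniform group elements (undetectable under DDH), after which the input to $A_\kDDL$ is statistically close to the $k$D-DDL experiment, and hence the error bound applies; the re-use of $\vec w$ is handled by the fact that $k$D-DDL's error bound holds with high probability over $\vec w$ (as noted in the remark following Definition~\ref{def:kDDL}), so by a standard averaging argument it can be fixed as part of the public parameters.
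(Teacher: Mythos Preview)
Your proposal is correct and follows essentially the same approach as the paper. The paper does not give a separate formal proof of this theorem; instead the theorem summarizes the three construction descriptions given in the preceding paragraphs, and your sketch recapitulates exactly those arguments: the direct per-coordinate baseline, the $\Pair$ linearity plus $\ell_1$ union bound for $k$D-DDL (the remark after Definition~\ref{def:kDDL}), the PRF-based share rerandomization to obtain (pseudo-)uniform $\beta$, and the base-$M$ carry analysis for the 1D embedding giving the $\sum z_i/M$ term.
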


Note that the \emph{non-packed} application of~\cite{BGI16,BGI17,DKK18} has large vector share size.
As discussed above, if the desired final error probability and the magnitude of the final payload $z$ are known, then the error expression for the \emph{1D embedding solution} can be minimized to $\sim \sum z_i/T^{2/k}$, by setting the encoding parameter $M$ to $M=T^{2/k}$.

The \emph{solution from $k$D-DDL} does not require this a priori knowledge.
Plugging in the results from Section~\ref{sec:2dim}, for example, we have a provable 2D-DDL algorithm that makes $T$ queries (that can be implemented using $\tilde O(T)$ group multiplications) and results in 2-dimensional packed HSS with vector-scalar multiplication error $\sum z_i \delta(T) \sim \sum z_i T^{-7/8}$. Using the conjectured optimal algorithm, the $T^{-7/8}$ term can be replaced by $T^{-1}$.

\subsection{Location-Sensitive Encryption}
\label{sec-lse}


\newcommand{\Gen}{{\sf Gen}}
\newcommand{\Enc}{{\sf Enc}}
\newcommand{\Dec}{{\sf Dec}}
\newcommand{\Embed}{{\sf Embed}}
\newcommand{\pp}{{\sf pp}}
\newcommand{\calA}{\mathcal{A}}
\newcommand{\calM}{\mathcal{M}}
\newcommand{\dist}{{\sf dist}}
\newcommand{\negl}{{\sf negl}}
\newcommand{\GOOD}{{\sf Good}_n}
\newcommand{\Ext}{{\sf Ext}}
\newcommand{\logsq}{\log^2 n}
\newcommand{\wc}{{\sf wc}}



In this section, we present an application of LPHS techniques to a form of {\em location-sensitive encryption} (LSE). At a high level, an LSE scheme enables a user to encrypt messages with respect to his location (e.g., within a virtual world), as captured by a substring $x \in \zo^n$ representing the user's view within a global environment (modeled as a much larger binary string). The LSE scheme should support two properties: (1) that any other user within close proximity to the location of encryption can decrypt, and (2) that any user who is far from this location does not learn any information about the hidden plaintext. We focus on the case of a 1-dimensional such string for simplicity, although an analogous approach can be take for 2 dimensions. Proximity in this setting corresponds to shifted view strings $x \in \zo^n$, $(x \lll i) \in \zo^n$. 

Importantly, we seek LSE schemes whose efficiency requirements are {\em sublinear} in the (potentially large) view size $n$. In particular, this rules out approaches based on existing constructions of ``fuzzy extractors''~\cite{DORS} (loosely, extractors robust to small input noise) with respect to edit distance. While such object would imply LSE, the constructions operate via a combination of tiling and hashing for set similarity, which require linear time.

However, the constructions in this section take inspiration from~\cite{DORS}, and can informally be viewed as constructing a {\em sublinear-time} fuzzy extractor for shift distance.

We remark that the above-described goal of location-sensitive encryption is inherently different from ``position-based encryption'' as in~\cite{ChandranGMO14}, where parties rely on geographic location at the time of communication and leverage the speed of light to ensure a single recipient.

\paragraph{Conditions on global environment.}
In order to provide the desired LSE guarantees, it must of course hold that the global environment string is neither too regular nor predictable. 
 
Regularity: 
We follow the terminology as introduced in the worst-case LPHS section, in Section~\ref{sec-worstcase}.  
Recall that $\sfGoodw$, parameterized by ``window size'' $W \in [n]$ and difference parameter $0 < \alpha \le 1$, denotes the set of strings $x$ in $\zo^n$ such that each of the length-$W$ substrings of $x$ differ pairwise in at least $\alpha$ fraction of their symbols (see Definition~\ref{def:good-input}). For purposes of concreteness and streamlined notation, we will focus on inputs in $\sfGoodw$ for specific fixed parameters:

\begin{notation}
Within this section, we will denote $\sfGoodw$ for $\alpha = 0.1$ and $W = 0.1 n$ by the abbreviated notation $\GOOD$. 
\end{notation}

Predictability: In order to provide the desired secrecy property for parties whose view strings are not overlapping the encryptor's, it must necessarily be that un-viewed portions of the global environment string are unpredictable. A natural approach would be to place some entropy requirement on the global string; however, leveraging such worst-case entropy while maintaining sublinear complexity would necessitate heavy requirements on the amount of such entropy available. We instead consider the following notion of {\em local unpredictability}, which in particular holds for sources with constant-fraction entropy, but further holds for most distributions whose entropy is only polylogarithmic, including many occurring naturally within applications. Roughly, a source $X$ over $\zo^n$ is locally unpredictable if one cannot predict the symbols $x_{i+S}$ of a sample $x \in_R X$ at any $i$-shift of random challenge index set $S \subset [n]$, except with negligible probability.

\begin{definition}[Local Unpredictability] \label{def:localunpred}
We will say a distribution $X$ on $\zo^n$ is {\em $(m,\epsilon)$-locally unpredictable} with respect to window size $W$ if for every algorithm $\calA$, the probability of $\calA$ winning in the following challenge is $\epsilon(n)$.
  \begin{enumerate}
  \item The challenger samples $x \in_R X$ and a random $m(n)$-size subset of coordinates $S \in_R {[W] \choose m}$. It sends $S$ to $\calA$. 
  \item The algorithm $\calA$ must output a pair $(i,x'_S) \in [n] \times \zo^{m}$. It wins if $x'_{i+S} = x_{i+S}$, where $i+S := \{ i + s \mod n : s \in S \}$.
  \end{enumerate}
For purposes of this section, if we say $X$ is simply ``locally unpredictable,'' this will implicitly refer to a convenient case where $m(n) = \log^4 n$ and $\epsilon(n) = 2^{-\log^{3} n}$, for window size $W=0.1n$.
\end{definition}




For a string $x \in \zo^n$ and $\ell \in [n]$, recall the notation $x \lll \ell$ denotes a randomized process which samples a random suffix $x'_2 \in_R \zo^\ell$ and outputs the $n$-bit string formed by the last $(n-\ell)$ bits of $x$ appended with the $\ell$ bits $x'_2$. We will use the notation $x' \in x \lll \ell$ to denote that $\Pr[ x' = x \lll \ell] > 0$; equivalently, $x' \in {\sf Supp}(x \lll \ell)$.


\medskip

We now define the Locality-Sensitive Encryption notion that is the focus of this section.

\begin{definition}[Location-Sensitive Encryption]
An $\alpha$-{\em location-sensitive encryption} (LSE) scheme 
for message space $\calM$ is a pair of PPT algorithms $(\Enc,\Dec)$ with the following syntax:
  \begin{itemize}
  \item $\Enc(m,x)$ takes as input message $m \in \calM$ and location string $x \in \zo^n$, and outputs a ciphertext $c$.
  \item $\Dec(c,x')$ takes as input ciphertext $c$ and location string $x' \in \zo^n$, and outputs a plaintext value $m' \in \calM$.
  \end{itemize}
The scheme is said to be a {\em sublinear LSE} if $\Gen$ and $\Enc$ make oracle access into the bits of their respective inputs $x,x'$, and the number of such queries made is $o(n)$.

An LSE scheme must satisfy the following correctness and security properties:
  \begin{itemize}
  
  \item {\bf Nearby decryption.}  For every message $m \in \calM$, location string $x \in \GOOD$, and $x' \in x \lll i$ for some $i \le \alpha n$, it holds that
  	\[ \Pr_{c \in_R \Enc(m,x)} \left[  \Dec(c,x') = m \right] = 1 - \negl(n). \]
	

  \item {\bf Far-distance security (from local unpredictability).} 
  Let $X$ be any distribution on $\GOOD$ which is 
  locally unpredictable (Definition~\ref{def:localunpred}). Then it holds for $X$ and for any $m,m' \in \calM$ that encryptions of $m$ and $m'$ are computationally indistinguishable: 
  	\[ \{ \Enc(m,X) \} \overset{c}{\cong} \{ \Enc(m',X) \}. \]

  
  \end{itemize}

\end{definition}

\subsubsection{Building Sublinear LSE}

In what follows, we demonstrate how to achieve a {\em sublinear} location-sensitive encryption scheme, taking inspiration from notions of fuzzy extractors and biometric embeddings of Dodis {\em et al.}~\cite{DORS}.
The high-level idea will be to build a tool comparable to a fuzzy extractor for {\em shifts}. 
This is done by constructing a form of metric embedding from shift into Hamming distance, which then enables us to directly appeal to fuzzy extractor results for Hamming metric.\footnote{We remark that a sublinear method for embedding edit distance into Hamming distance was shown in a recent independent work~\cite{KociumakaS20}, also using a random walk technique.} The construction of such embedding is the focus of this subsection. 

We begin by constructing a weaker tool---a Binary LPHS---and then amplify. Loosely, a Binary LPHS family maps inputs to a single bit, such that close inputs with respect to shift distance are mapped to the same bit with good probability, whereas inputs with sufficient unpredictability hash unpredictably. 

The exposition is organized as follows. We first present the definition of shift-to-Hamming embedding. We present and construct a notion of Binary LPHS. We then provide an amplification procedure which attains a shift-to-Hamming embedding given access to Binary LPHS. Finally, we demonstrate how to pair this embedding together with techniques from~\cite{DORS} to reach the desired LSE primitive.

For $x,x' \in \zo^n$, we denote Hamming distance of $x$ and $x'$ by $\Delta(x,x')$.

\begin{definition}[Shift-to-Hamming Embedding] \label{def:StH}
We define a $(\alpha,\alpha')$-randomized {\em shift-to-Hamming embedding} with output length $\ell=\ell(n)$ as a pair of polynomial-time algorithms $(\Gen,\Embed)$ with the following syntax. 
  \begin{itemize}
  \item $\Gen(1^n)$ is a randomized procedure that takes input length $1^n$ in unary and outputs an index value $v$. 
  \item $\Embed(v,x)$ is a deterministic procedure that takes as input an index $v$ and input string $x \in \zo^n$, and outputs a value $y \in \zo^\ell$. 
  
  We will consider sublinear embeddings, wherein $\Embed$ makes oracle access into the bits of $x$, and the number of such queries made is $o(n)$.
  \end{itemize}
The algorithms satisfy the following properties.
  \begin{itemize}
  \item {\bf Preserves closeness.} For any $x \in \GOOD$ and $x' \in \zo^n$ for which $x' \in x \lll i$ for some $i \le \alpha n$, it holds that 
  	\[ \Pr_{v \in_R \Gen(1^n)} [\Delta\big(\Embed(v,x),\Embed(v,x')\big) \le \alpha' \ell] \ge 1 - \negl(n). \]

  \item {\bf Preserves unpredictability.} Let $X$ be any distribution on $\GOOD$ which is locally unpredictable (Definition~\ref{def:localunpred}). Then for every algorithm $\calA$, the probability of $\calA$ winning in the following challenge is negligible in $n$. 
	\begin{enumerate}
	\item The challenger samples $v \in_R \Gen(1^n)$ and $x \in_R X$. It sends $v$ to $\calA$.
	\item The algorithm $\calA$ must output $y' \in \zo^\ell$. It wins if $y' = \Embed(v,x)$.
	\end{enumerate}
	
  
  \end{itemize}
\end{definition}

Note that the ``preserves unpredictability'' property guarantees that the output string (on a locally unpredictable output) has super-logarithmic minimum entropy. Once such embedding is reached, this can be combined with constructions of (computational) fuzzy extractors for Hamming distance, to yield the desired LSE encryption scheme. 
While this entropy level is not sufficient for the existence of information theoretically secure fuzzy extractors for Hamming distance~\cite{FullerRS20}, computational constructions exist based on plausible computational hardness assumptions (see e.g.~\cite{FullerMR20,HerderRDYD17,CanettiFPRS21,AlamelouBCCFGS18,WenLH18}). The resulting LSE security inherits the corresponding agreement/entropy parameters and computational assumption of the underlying fuzzy extractor.

We will achieve the desired shift-to-Hamming embedding by means of the following tool, a form of Binary LPHS.\footnote{Note that this notion is highly related but not equivalent to the Binary LSH primitive defined and constructed in Theorem~\ref{th:lsh}.} As described earlier, a Binary LPHS family maps inputs to a single bit, such that close inputs map to the same bit with good probability, whereas inputs with sufficient unpredictability hash in an unpredictable manner.

\begin{definition}[Binary LPHS] 		\label{def:BinLPHS}
A family of hash functions $\calH = \{ h: \zo^* \to \zo\}$ is said to be a $(\alpha,\beta)$-{\em binary LPHS} if the following properties hold. 
  \begin{itemize}
  \item {\bf Close inputs agree.} For any $x \in \GOOD$ and $x' \in \zo^n$ for which $x' \in x \lll i$ for some $i \le \alpha n$, it holds 
  	\[ \Pr_{h \in_R \calH} [ h(x) = h(x') ] \ge 1-\beta.  \]
  \item {\bf Output unpredictability for locally unpredictable inputs.} 
  Let $X$ be any distribution on $\GOOD$ which is locally unpredictable (Definition~\ref{def:localunpred}). Then for every algorithm $\calA$, the probability of $\calA$ winning in the following challenge is bounded by $1/2 + \negl(n)$. 
	\begin{enumerate}
	\item The challenger samples $h \in_R \calH$ and $x \in_R X$. It sends $h$ to $\calA$.
	\item The algorithm $\calA$ must output $y \in \zo$. It wins if $y = h(x)$.
	\end{enumerate}  
  \end{itemize}
\end{definition}

We next proceed to build the above notion of Binary LPHS. The approach borrows techniques from worst-case LPHS: first reducing to the random-input case via a ``random subset tiling'' of the input $x$ (to obtain a related string over a larger alphabet with all distinct symbols) followed by a $t$-wise independent hash applied to individual symbols. As a second step (similar to worst-case LPHS) we can now apply an average-case LPHS on the resulting string $y$. However, instead of simply outputting the resulting LPHS-output integer $z$ (or compressed version), which has limited unpredictability, we instead use this value to {\em select a symbol} from the intermediate randomized string $y$, and extract out a single bit output.

Loosely, given inputs $x,x'$ differing by small shift $i$, the resulting strings $y,y'$ will differ by shift of $i$, and the LPHS will (aside from some error) provide outputs $z$ and $z' = z-i$ differing by $i$. Thus the two outputs will select the {\em same} symbol, $y_z = y'_{z'}$. On the other hand, for any $x$ with sufficient unpredictability, then no single symbol of the corresponding string $y$ will be predictable, thus the resulting bit will be close to uniform.

\newcommand{\LSH}{{\sf LSH}}

\begin{proposition}[Constructing Binary LPHS]
There exist $0 < \alpha < \beta < 1$ for which there exists $(\alpha,\beta)$-binary LPHS making $\tilde O(\sqrt{n})$ queries to the input $x$.
\end{proposition}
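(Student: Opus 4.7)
The plan is to realize the Binary LPHS by combining the worst-case LPHS transformation of Proposition~\ref{prop:worst-case2} with a hard-core extraction step. Concretely, I would let the description of a hash $h\in\calH$ consist of: (i) a random subset $S\subseteq[W]$ of size $\len=\log^{4}n$, where $W=0.1n$; (ii) a seed for an $n$-wise independent hash $\gamma\colon \zo^{\len}\to\zo^{\len}$; and (iii) a random string $r\in\zo^{\len}$. Evaluation of $h(x)$ proceeds by running the non-cyclic LPHS $h^\ast$ of Theorem~\ref{thm:1dupper} with $d=\tilde O(\sqrt n)$ queries on the virtual string $y\in(\zo^{\len})^{n}$ defined by $y_i=\gamma(x_{i+S})$, obtaining $z:=h^\ast(y)$, and then outputting the single bit $\langle r,\,x_{z+S}\rangle\bmod 2$. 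Each $y$-query costs $\len$ raw queries to $x$, plus $\len$ additional queries are needed for the final inner product, giving a total of $\tilde O(\sqrt n)$ queries to $x$.

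For the ``close inputs agree'' property, I would mimic the argument of Proposition~\ref{prop:worst-case2}. Since $x\in\GOOD=\sfGoodw$ with $\alpha=0.1$, $W=0.1n$, Claim~\ref{claim:S-tiles} (adapted to the non-cyclic window) gives that with probability $1-n^{2}(1-0.1)^{\len}=1-\negl(n)$ over $S$, all relevant $\len$-tiles $x_{i+S}$ are distinct, so by $n$-wise independence of $\gamma$ the virtual string $y$ is uniform in the relevant window. For $x'\in x\lll i$ with $i\le \alpha n$ the shift structure is preserved within the overlap, so $y'_{j}=y_{j+i}$ wherever both are determined by the common portion of $x$. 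Theorem~\ref{thm:1dupper} then yields $z-z'=i$ except with probability $\tilde O(i/d^{2})=\tilde O(\alpha)$, and when this happens $x'_{z'+S}=x_{(z'+i)+S}=x_{z+S}$, so the two inner products with $r$ agree. Taking $\alpha$ a sufficiently small constant gives agreement probability at least $1-\beta$ for some $\beta<1$.

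For the ``output unpredictability'' property, I would reduce to the local unpredictability assumption via a Goldreich--Levin style argument. Suppose an adversary $\calA$ predicts $h(x)=\langle r,x_{z+S}\rangle\bmod 2$ given $h=(S,\gamma,r)$ with probability $1/2+\epsilon$ for non-negligible $\epsilon$. I would build $\calB$ that, on input $S$ from the local unpredictability challenger, samples $\gamma$ itself and uses the Goldreich--Levin list-decoding algorithm on $\calA(S,\gamma,\cdot)$ (treating $r$ as the GL seed) to extract a short list of candidates $\hat x\in\zo^{\len}$ for $x_{z+S}$, where the list has size $\poly(1/\epsilon)$ and contains $x_{z+S}$ with probability $\poly(\epsilon)$. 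Since $z\in[n]$ is one of polynomially many values, $\calB$ guesses $i^{\ast}\in[n]$ and a candidate from the list, outputting $(i^{\ast},\hat x)$; success probability is at least $\poly(\epsilon)/n$, which is non-negligible and contradicts local unpredictability (which upper-bounds this by $\epsilon_{0}=2^{-\log^{3}n}$). The main technical obstacle is to handle the fact that the ``location'' $z(x)$ where the hard-to-predict window lies is itself a function of $x$ and of $\gamma$; I would address this either by the guess-$i^{\ast}$ step above or, cleaner, by observing that the GL reconstructor does not need to know $z$ explicitly, only to produce a predictor for $\langle r,x_{z+S}\rangle$ as $r$ varies with the remainder of $h$ held fixed, so the extracted string is automatically the correct $x_{z+S}$ whenever extraction succeeds. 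Choosing constants $0<\alpha<\beta<1$ as above, this gives the desired $(\alpha,\beta)$-Binary LPHS with $\tilde O(\sqrt n)$ queries.
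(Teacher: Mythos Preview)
Your construction is essentially the same as the paper's: sample a random window set $S$, an $n$-wise independent symbol hash $\gamma$, and a seed $r$; run the average-case LPHS $h^\ast$ on the virtual string $y_i=\gamma(x_{i+S})$ to get an index $z$; and output one extracted bit from the tile at position $z$. The ``close inputs agree'' argument you sketch matches the paper's.

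The one genuine difference is in the last step and its analysis. The paper outputs $\Ext(r,y_z)$ for a generic seeded extractor and argues unpredictability directly via min-entropy: local unpredictability bounds the guessing probability of $x_{i+S}$ by $2^{-\log^3 n}$ for any adversarially chosen $i$, the leakage $z\in[n]$ costs at most $\log n$ bits, hence $H_\infty(x_{z+S}\mid S,\gamma,r,z)=\Omega(\log^2 n)$ and the extractor finishes. Your route instead uses the Goldreich--Levin inner product and a list-decoding reduction (fix $S,\gamma$, vary $r$, reconstruct a candidate for $x_{z+S}$, then guess $z\in[n]$); this is also correct, and your observation that the reconstructor need not know $z$ because the target is fixed once $(S,\gamma,x)$ are fixed is the right way to handle that point. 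The paper's min-entropy argument is shorter and avoids the averaging-over-good-$(S,\gamma,x)$ and guess-$z$ steps, while your GL route has the virtue of making the reduction to local unpredictability completely explicit. Your choice of $\len=\log^4 n$ (matching the local-unpredictability parameter $m$) is also a bit more careful than the paper's $\log^2 n$.
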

\begin{proof}
We provide the desired binary LPHS construction. Consider the following tools (each with $\tilde O(\sqrt{n})$ query complexity into the input $x$, when relevant): 
  \begin{itemize}
  \item A hash function family $\calH = \{h_{S,\gamma}\}$, indexed by a subset $S \subset [n]$ of size $\logsq$, and a hash function $\gamma : \zo^{\logsq} \to \zo^{\logsq}$ from an $n$-wise independent hash family. Sampling a hash function $h_{S,\gamma}$ from $\calH$ will consist of randomly selecting $S \subset [n]$ and sampling $\gamma$ from the $n$-wise independent hash family. The output of $h_{S,\gamma}(x)$ is the string $(y_i)_{i \in [n]} \in (\zo^{\logsq})^n$ for which $y_i = \gamma(x_{i+S})$.
  \item An $(n,d,\delta)$-non-cyclic {\em average-case} LPHS $h^*$ with shift bound $\alpha n$, for random inputs in $(\zo^{\logsq})^n$, for $d \in \tilde O(\sqrt{n})$ and $\delta \in \tilde\Theta(1/n)$.
  \item A seeded extractor $\Ext: \zo^{\logsq} \times \zo^{\logsq}\to \zo$ for entropy sources $X$ over $\zo^{\logsq}$ with $H_\infty(X) \ge \log^{3/2} n$. Namely, for any such $X$, it holds $(U,\Ext(U,X)) \overset{s}{\cong} (U,U_{\zo})$.
  \end{itemize}
  
  We observe that the following combinations of these tools are proved to satisfy various properties in other sections of this work:
    \begin{itemize} 
    \item $h_{S,\gamma}: \zo^n \to (\zo^{\logsq})^n$ for randomly sampled $S,\gamma$ is proved in Appendix~\ref{sec-worstcase} to convert a worst-case input $x \in \GOOD$ to a {\em random} input $h_{S,\gamma}(x) \in (\zo^{\logsq})^n$, while preserving shift distance between close input pairs.
    \item The composition $h_\wc := h^* \circ h_{S,\gamma}: \zo^n \to \Z$ for randomly sampled $S,\gamma$ is proved in Appendix~\ref{sec-worstcase} to be a {\em worst-case LPHS} for inputs $x \in \GOOD$. 
    \end{itemize}
  
  We construct the desired Binary LPHS hash function family $\calH$ as follows.
    \begin{itemize}
    \item {\bf Sampling.} A hash function $h \in_R \calH$ is sampled from the family by sampling $S,\gamma$ as above, and sampling a random extractor seed $r \in_R \zo^{\logsq}$. The hash function is indexed by $(S,\gamma,r)$
    \item {\bf Evaluating.} To evaluate $h = h_{(S,\gamma,r)}(x)$: 
      \begin{enumerate}
      \item Compute $y = h_{S,\gamma}(x) \in (\zo^{\logsq})^n$.
      \item Compute $z = h^*(y) \in [n]$. (This corresponds to $z=h_\wc(x)$.)
      \item Output $\Ext(r,y_z) \in \zo$.
      \end{enumerate}
    \end{itemize}
    
We now prove that $\calH$ satisfies the necessary properties.

{\em Close inputs agree.} Let $x \in \GOOD$ and $x' \in \zo^n$ for which $x' \in x \lll i$ for $i \le \alpha n$. Consider the probability space defined by sampling $(S,\gamma,r)$ via $h \in_R \calH$. Denote $y=h_{S,\gamma}(x),y'=h_{S,\gamma}(x'), z=h^*(y)$, and $z' = h^*(y')$. Define the event $E$ to occur when the following two conditions hold:
	\begin{enumerate}
	\item $y' \in y \lll i$: that is, the tiling of $x$ and $x'$ properly preserves their shift.
	\item $z = z' + i$: that is, correctness holds for the worst-case LPHS $h_\wc$ applied to $x,x'$.
	\end{enumerate}
It is proved in Section~\ref{sec-worstcase} (as part of worst-case LPHS Proposition~\ref{prop:worst-case}), together with a union bound on $i$, that event $E$ fails to occur with probability no greater than $i\delta$.
Conditioned on event $E$, then (since $z$ is sufficiently small) it holds that $y'_{z'} = (y \lll i)_{z-i} = y_z$; in particular, $\Ext(r,y_z) = \Ext(r,y'_{z'})$.
%

{\em Output unpredictability for locally unpredictable inputs.}  Let $X$ be a locally unpredictable distribution on $\zo^n$. Consider the probability space defined by sampling $(S,\gamma,r)$ via $h \in_R \calH$ and $x \in_R X$. Consider a hash $h(x)$-predictor algorithm $\calA$ who is given both the hash index $(S,\gamma,r)$ (as is the case in the security game), as well as the value $z = h_\wc(x) \in [n]$ as additional leakage on $x$. 
By the local unpredictability of $X$, given just $S$, no algorithm can predict $x_{i+S}$ for any shift $i \in [n]$ with better than negligible probability $\epsilon = 2^{-\log^3 n}$. The values $\gamma,r$ are independent and thus do not affect this probability. The leakage $z=h_\wc(x) \in [n]$ provides at most $\log n$ bits of information and hence cannot improve the prediction ability beyond a factor of $2^{\log n}$. In particular, $\calA$ cannot predict $y_z = x_{z+S}$ with probability better than negligible $2^{\log^2 n}$; that is, $H_\infty(y_z | S,\gamma,r,z) \in \Omega(\log^2 n)$. Thus, it holds that $((S,\gamma,r),z,\Ext(r,y_z)) \overset{s}{\cong} ((S,\gamma,r),z,U_{\{0,1\}})$. The claim follows.

  
\end{proof}


We next demonstrate how to attain a randomized shift-to-Hamming embedding, making use of the Binary LPHS. Simply, the embedding map will be generated by sampling polylogarithmically many independent Binary LPHS hash function descriptions, and the embedding of an input $x \in \GOOD$ is performed by evaluating each hash function on $x$. At a high level, for any two inputs $x,x'$ close in shift distance, a large fraction of the Binary LPHS evaluations will agree, resulting in close Hamming distance of the respective outputs; in contrast, any input distribution $X$ with local unpredictability will introduce entropy into several of the hash output values. 

The formal unpredictability argument is slightly more subtle, as the hash functions (while independent) are each applied to the {\em same} input sample. Thus hash outputs $h_1(x),...,h_j(x)$ can be viewed as side information on $x$ that can aid in predicting $h_{j+1}(x)$. However, as we prove, this leaked information still leaves sufficient local unpredictability in the sample $x$, thus allowing us to appeal to the existing entropy argument.


\begin{proposition}
Let $\calH$ be a $(\alpha,\beta)$-binary LPHS, as per Definition~\ref{def:BinLPHS}. Then the following procedure is a $(\alpha,\beta')$ random shift-to-Hamming embedding, as per Definition~\ref{def:StH}, for any $\beta' > \beta$, with output length $\ell(n) = \log^2 n$.
  \begin{itemize}
  \item $\Gen(1^n)$: Sample $\ell = \log^2 n$ independent hash functions $h_1,\dots,h_\ell \in_R \calH$. \\Output $v = (h_1,\dots,h_\ell)$.
  \item $\Embed(v,x)$: Parse $v = (h_1,\dots,h_\ell)$. Output $(h_1(x),\dots,h_\ell(x)) \in \zo^\ell$.
  \end{itemize}
\end{proposition}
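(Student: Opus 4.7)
The two properties are handled separately. \emph{Preserves closeness} is essentially a Chernoff calculation: for every coordinate $j$, the binary LPHS property guarantees $\Pr_{h_j \in_R \calH}[h_j(x) \neq h_j(x')] \le \beta$ whenever $x \in \GOOD$ and $x' \in x \lll i$ with $i \le \alpha n$. Since $v = (h_1,\ldots,h_\ell)$ is a product of independent draws from $\calH$, the random variable $\Delta(\Embed(v,x),\Embed(v,x'))$ is a sum of $\ell$ independent Bernoulli random variables each with mean $\le \beta$. A Chernoff bound then yields $\Pr[\Delta > \beta'\ell] \le \exp(-\Omega((\beta'-\beta)^2 \ell))$ for any fixed $\beta' > \beta$, and taking $\ell=\log^2 n$ makes this $\exp(-\Omega(\log^2 n))$, which is negligible.

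For \emph{preserves unpredictability}, I would prove the stronger statement that the joint distribution of the output bits $(b_1,\ldots,b_\ell) = (h_1(x),\ldots,h_\ell(x))$ conditioned on $v$ has super-logarithmic average min-entropy, which implies the required predictability bound. The argument is a chain-rule computation: for any fixed predictor $\calA$ and target string $a \in \{0,1\}^\ell$,
\[
\Pr[(b_1,\ldots,b_\ell) = a \mid v] \;=\; \prod_{j=1}^{\ell} \Pr[b_j = a_j \mid v,\, b_{<j} = a_{<j}],
\]
so it suffices to show that every conditional factor is bounded by $1/2 + o(1)$ (in an averaged sense over $v$ and $a_{<j}$). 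Plugging this back in, the total predictability is at most $(1/2 + o(1))^\ell \le 2^{-\log^2 n + o(\log^2 n)}$, which is negligible.

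The remaining task, and the main obstacle, is to establish the per-coordinate conditional bound. Fix coordinate $j$ and any value $a_{<j} \in \{0,1\}^{j-1}$ of the earlier outputs. Since $h_j$ is sampled independently of $h_1,\ldots,h_{j-1}$, conditioning on $b_{<j} = a_{<j}$ affects only the distribution of the input $x$: it restricts $X$ to the sub-distribution $X \mid \sigma(x) = a_{<j}$, where $\sigma(x) := (h_1(x),\ldots,h_{j-1}(x))$ is a function of $x$ (with $h_{<j}$ fixed) taking at most $2^{j-1} \le 2^{\log^2 n}$ values. I would show that this conditional distribution is still \emph{locally unpredictable} in the sense of Definition~\ref{def:localunpred}, with parameter degraded by at most a factor of $2^{\ell-1}$: for every adversary $\calB$ in the local-unpredictability game,
\[
\Pr[\calB(S) = x_{i+S} \mid \sigma(x) = a_{<j}] \;\le\; \frac{\Pr[\calB(S) = x_{i+S}]}{\Pr[\sigma(x) = a_{<j}]},
\]
and after averaging over $a_{<j}$ this costs a factor $\le 2^{\log^2 n}$, still leaving a negligible bound because the base parameter is $2^{-\log^3 n}$. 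With the conditional distribution still locally unpredictable, the binary LPHS unpredictability guarantee applied to $h_j$ gives that $b_j$ is $1/2 + \negl(n)$-predictable even given the leakage, which is exactly the per-coordinate bound needed for the chain-rule argument.

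The hard part is the tightness of the leakage step: one must be careful that the binary LPHS guarantee, which is stated for a \emph{distribution}, transfers to conditional distributions indexed by side information that is itself derived from the input via other hash samples. The cleanest way to handle this, and what I would write out carefully, is to first fix the hash descriptions $h_{<j}$ (so that the conditioning becomes deterministic in $x$), argue that every fixing of $a_{<j}$ with non-trivial weight leaves $X$ locally unpredictable, and then invoke the binary LPHS property on the new distribution --- finally averaging back over $h_{<j}$ and $a_{<j}$ to close the loop with the chain rule.
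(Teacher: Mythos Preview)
Your proposal is correct and follows essentially the same approach as the paper: a Chernoff bound over the $\ell$ independent Bernoulli indicators for the closeness property, and a chain-rule/leakage argument for unpredictability (noting that conditioning on at most $\ell = \log^2 n$ bits of prior hash outputs degrades the $2^{-\log^3 n}$ local-unpredictability parameter by at most a $2^{\log^2 n}$ factor, so the Binary LPHS guarantee still applies to the conditional source). Your write-up is, if anything, slightly more careful than the paper's about the subtlety of transferring the per-distribution Binary LPHS guarantee across the conditioning, which you correctly flag as the point requiring attention.
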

\begin{proof}
Consider the necessary properties.

{\em Preserves closeness}: 
  Let $x \in \GOOD$ and $x' \in x \lll i$ for $i \le \alpha n$. Then it holds that 
	\begin{align*}
  	\Pr_{v \in_R \Gen(1^n)} [\Delta(\Embed(v,x),\Embed(v,x')) \le \beta' \ell]  
		&=  \Pr_{h_j \in_R \calH, j \in [\ell]} \left[ \sum_{j \in [\ell]} \big| h_j(x) - h_j(x') \big| \le \beta' \ell \right]	
	\end{align*}
Now, recall for any $x \in \GOOD$ that each $h_j$ {\em independently} satisfies $h_j(x)=h_j(x')$ with probability at least $1-\beta$. Each $|h_j(x)-h_j(x')|$ can then be analyzed as an independent Bernoulli boolean variable equal to 1 with probability no greater than $\beta$. The probability expression above is thus bounded by $e^{-\Omega(\ell^2)}$ by a Chernoff bound. Since $\ell=\logsq$ this is negligible in $n$, as desired.
	

 {\em Preserves unpredictability}: Let $X$ be a locally unpredictable distribution over $\zo^n$. We wish to show that no algorithm $\calA$ can predict the output $\Embed(v,x)$ (for random $v \in_R \Gen(1^n), x \in_R X$) given the hash index $v$, except with negligible advantage. Fix a strategy $\calA$ for $X$.
Let
	\[ p_j :=  \Pr_{\substack{h_1,\dots,h_{j} \in_R \calH, \\x \in_R X}} \left[ \calA_\ell \big(v,h_1(x),\dots,h_{j-1}(x) \big) = h_j(x) \right] \]
  Then 
    \begin{align*} 
      \Pr_{\substack{v \in_R \Gen(1^n), \\x \in_R X}} &\left[ \calA(v) =  \Embed(v,x) \right]
      	= \Pr_{\substack{h_j \in_R \calH, \\x \in_R X}} \left[ \forall j \in [\ell], \calA_j(v) = h_j(x) \right] \le \prod_{j \in [\ell]} p_j.
    \end{align*}
Consider a single $p_j$. We know the input distribution $X$ is locally unpredictable, where no algorithm can win the game of Definition~\ref{def:localunpred} with probability better than $2^{-\log^3 n}$. In the expression for $p_j$, an algorithm is given leakage on $x$, in the form of the previous hash function evaluations. In the worst case, there are $\ell = \logsq$ such bits of leakage on $x$. However, such leakage can at best improve the local unpredictability success probability to $(2^{-\log^3 n})(2^{\logsq}) < 2^{-\Omega(\log^3 n)}$. That is, even the distribution of $X$ conditioned on the leakage satisfies local unpredictability, the output unpredictability property of the Binary LPHS will apply, implying that $p_j$ is bounded above by $1/2 + \negl(n)$. Thus, the desired probability above is bounded by $\prod_{j \in [\ell]} (1/2 + \negl) \in 2^{-\Omega(\logsq)}$, which is negligible, as required.
    

\end{proof}


\subsection{Algorithmic Applications}
\label{sec-algorithmic}

In this section we discuss two representative algorithmic applications of  LPHS. These applications are generic in nature and could apply to any kind of LPH (see, e.g.,~\cite{IndykMRV97} for other examples). However, we have tried to identify the simplest settings and parameter regimes that benefit from the advantages of LPHS over alternative approaches.
In all of these applications the goal is to identify in {\em sublinear time}, and with low failure probability, either {\em small} or   {\em arbitrary} misalignments of two or more strings.

The first application only takes advantage of the short output length of the LPHS, whereas the second take advantage of the metric property of being ``locality preserving.'' Finally, while we describe the applications in the 1-dimensional, random-input case, they can naturally benefit from the $k$-dimensional, worst-case-input LPHS variants considered in this paper. In fact, 2-dimensional LPHS seems like the most useful variant in these contexts.

\subsubsection{Succinct sublinear-time sketching for shifts}
\label{sec-sketching}

Consider the following sketching scenario, described as a simultaneous messages (SM) communication complexity problem. Two parties $\cal A$ and $\cal B$, with shared randomness $\rho$, hold $n$-bit substrings $x_A$ and $x_B$ of a big string $X\in\{0,1\}^N$. The two inputs are within (non-cyclic) shift offset $s\in [-R,R]$ of each other, for some shift bound $R<n$.
Each party can send $c$ bits to Carol, where the message is computed by (adaptively) reading $d$ bits from the input. Carol should reconstruct $s$ from the two messages with error probability bound $\gamma$, assuming that $X$ is picked uniformly at random and independently of $\rho$. What are the achievable tradeoffs between the parameters?

\begin{claim}
\label{cl:alg1}
There exists a non-cyclic shift-finding SM protocol with $c=\log R+O(1)$ bits of communication, $d=O(\sqrt{n})$ input queries, and $\gamma=\tilde O(R/n)$ error probability.
\end{claim}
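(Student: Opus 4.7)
The plan is to directly leverage the near-optimal non-cyclic LPHS guaranteed by Theorem~\ref{thm:1dupper}. Let $h\colon\{0,1\}^n\to\mathbb{Z}$ be a non-cyclic $(n,1,d,\delta)$-LPHS with $d=O(\sqrt{n})$ and $\delta=\tilde{O}(1/n)$ (its random coins, if any, can be taken from the shared randomness $\rho$). The first observation is that, because $X\in\{0,1\}^N$ is drawn uniformly, the joint distribution of $(x_A,x_B)$ is exactly that of $(x_A,\,x_A\lll s)$: the symbols of $x_B$ that fall outside the overlap with $x_A$ come from fresh, independent coordinates of $X$, which is precisely how the non-cyclic shift operator is defined. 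Thus Lemma~\ref{lem:shift} applies and yields
\[
\Pr_{X}\!\left[h(x_A)\neq h(x_B)+s\right] \le |s|\cdot\delta \le R\cdot\tilde{O}(1/n) = \tilde{O}(R/n).
\]

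The second step is to compress each integer hash value into $\log R+O(1)$ bits while still allowing Carol to recover $s\in[-R,R]$. Set $M=2R+1$ and let $\mathcal{A}$ send $a=h(x_A)\bmod M$ and $\mathcal{B}$ send $b=h(x_B)\bmod M$. Conditioned on the good event $h(x_A)=h(x_B)+s$, we have $a-b\equiv s\pmod M$, and since the map $s\mapsto s\bmod M$ is injective on $[-R,R]$ (as $2R+1\le M$), Carol unambiguously recovers $s$ by computing $(a-b)\bmod M$ and mapping it back to the representative in $[-R,R]$. The total communication is $2\lceil\log M\rceil=2\log R+O(1)$ bits per party, i.e.\ $c=\log R+O(1)$ as claimed, the query complexity is the query complexity of $h$, which is $d=O(\sqrt{n})$, and the failure probability is at most the failure probability of the underlying LPHS on shift $s$, which is $\tilde{O}(R/n)$.

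I do not anticipate a serious obstacle. The only mildly subtle point is justifying that the substring pair $(x_A,x_B)$ drawn from a uniform long string is distributed as a non-cyclic shift pair $(x_A,\,x_A\lll s)$, so that the average-case LPHS guarantee applies directly; this is immediate since the ``new'' coordinates that enter $x_B$ are fresh uniform bits of $X$. A second minor point is that both parties need the same hash function realization: this is handled by drawing the (few) random coins of $h$ from the shared randomness $\rho$, or alternatively by invoking the non-uniform variant of Theorem~\ref{thm:1dupper} that produces a deterministic $h$.
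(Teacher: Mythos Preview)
Your proposal is correct and follows essentially the same approach as the paper: apply the near-optimal non-cyclic LPHS of Theorem~\ref{thm:1dupper}, have each party send its hash value reduced modulo $2R+1$, and bound the error via Lemma~\ref{lem:shift} by $R\cdot\delta=\tilde O(R/n)$. Your additional remarks---that the uniform superstring $X$ makes $(x_A,x_B)$ distributed exactly as a non-cyclic shift pair, and that the shared randomness $\rho$ supplies the coins of $h$---are correct and slightly more explicit than the paper's own treatment.
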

\begin{proof}
Let $h_\rho:\zo^n\to\Z$ be a (near-optimal) non-cyclic $(d,\delta)$-LPHS  with $d=O(\sqrt{n})$ and $\delta=\tilde O(1/n)$, as guaranteed by Theorem~\ref{thm:1dupper}. Each party sends to Carol the output of $h$ on its input, reduced modulo $2R+1$. Carol computes the difference $\delta$ between the two messages modulo $2R+1$, and outputs either $s=\delta$ if $0\le \delta\le R$ or $s=\delta-(2R+1)$ otherwise.
Using Lemma~\ref{lem:shift}, the error probability is at most $|s|\cdot\delta\le\tilde O(R/n)$ as required.
\end{proof}

In particular, if $R=\mathrm{polylog}(n)$, we get $\tilde O(1/n)$ error using $\log R+O(1)$ bits of communication (and with only $O(\sqrt{n})$ queries). Note that within the tight communication budget of $c=\log R+O(1)$, we cannot afford to amplify the success probability of a protocol based on the simple $(d,O(1/d))$-LPHS via repetition. Finally, the repetition-based approach cannot yield sublinear-time protocols with low error probability $\delta$ when $R$ is big, as required by the extension to unbounded shifts discussed below.

One can get a Las Vegas variant of the above sketching protocol, where Carol can {\em detect} whenever an error may occur (except with negligible probability), using the Las Vegas flavor of LPHS (see Definition~\ref{def-morelphs} and following remark).
Other LPHS variants can also be motivated in this setting. The assumption that $X$ is random can be relaxed to ``far from periodic'' by using the notion of worst-case LPHS from Appendix~\ref{sec-worstcase}. A 2D-LPHS can be used to capture a 2-dimensional terrain $X$.

\paragraph{\bf Sketching for unbounded shifts.} The above protocol can be extended to apply to an {\em arbitrary} shift amount, with $\mathrm{polylog}(n)$ communication, $d=\tilde{O}(n^{1/2})$ input queries, and $n^{-\omega(1)}$ failure probability, by running multiple instances of the protocol with $R=n/\mathrm{polylog}(n)$, where in each instance $\cal B$ shifts its inputs by a different multiple of $R$. Using a Las Vegas variant of LPHS, Carol can identify the correct instance. Alternatively, one can avoid using a Las Vegas variant and rely instead on Lemma~\ref{lem-smoothness} for identifying the distractors. Note that, unlike the case of small $R$ discussed above, here we cannot use at all the simple $(d,O(1/d))$-LPHS. Indeed, with $d=\tilde{O}(n^{1/2})$, the failure probability is too big to handle large shifts.

\subsubsection{Locality-sensitive hashing and near-neighbor data structures for shifts}
A near-neighbor data structure represents $m$ points in a metric space and enables efficient, e.g. sublinear time, near-neighbor queries on any point on the space. A near-neighbor query on point $x$ with distance $R$ returns a point $y$ in the data structure that is within  distance $R$ from $x$ or returns an indication of failure if no such point exists. Approximate near-neighbor queries relax the requirement so that with good  probability the output $y$ is within distance $cR$ from $x$ for some constant $c > 1$.

In this section we design  an approximate near-neighbor data structure for strings with distance measured by a shift metric, with applications to matching shifted pictures, or determining the location of an  agent in some terrain  given only a local view of its surroundings.

Intuitively, the term ``shift distance'' refers to the minimal shift amount required to obtain one string from the other. In the non-cyclic case this should correspond to directed distance in the De Bruijn graph, whereas in the cyclic case the distance is infinite if there is no such shift. Our notion of shift distance should not be confused with an alternative notion (cf.~\cite{andoni2013homomorphic}) referring to the smallest Hamming distance between one string and some cyclic shift of the other.

However, defining a shift distance correctly requires some care for non-cyclic shifts. Measuring the number of shifts required to obtain an $n$-bit string $y$ from an $n$-bit string $x$ in the directed  De Bruijn graph is not symmetric. Using the graph metric on the undirected De Bruijn graph leads to cases in which strings that should be distant in our proposed applications are near in the graph. In our proposed applications of this metric, the shifted strings are part of a larger ``universe''. We therefore adopt the following metric, defined over a sub-graph of the De Bruijn graph.

\begin{definition}[Shift metric]
Let $n,N$ be two integers $n \le N$ and let $C$ be a cycle of length $N$ in the undirected De Bruijn graph over strings of length $n$. The {\em shift metric} over $C$, denoted by $d_C$, is the standard graph metric on $V(C)$, the $N$ nodes in $C$.
\end{definition}
An equivalent way to view his definition is to regard $C$ as a circular string of length $N$ such that any substring of  length $n$ appears exactly once in $C$. The set of all $N$ substrings of length $n$ in $C$ is denoted $V(C)$. The distance $d_S(x,y)$ between two substrings $x,y \in V(C)$ is the minimum of two  values: the number of shifts on $C$ required to move from $x$  to $y$ and the number of shifts on $C$ required to move from $y$ to $x$.

A useful tool in the design of near-neighbor data structures is {\em Locality Sensitive Hashing} (LSH) which assigns to any two points $x,y$ the same value with high probability if they are close and two different values if they are distant. More precisely,

\begin{definition}[LSH]
Let $M$ be a set with a metric $d$. A family of hash functions $\cal{H}$ is a \emph{$(R,cR,p_1,p_2)$ Locality-Sensitive Hash (LSH)} for $(M,d)$ if for any two points $x,y \in M$
\begin{itemize}
\item If $d(x,y) \leq R$ then $\Pr_{h \in_R \cal{H}}[h(x)=h(y)] \ge p_1$.
\item If $d(x,y) \ge cR$ then $\Pr_{h \in_R \cal{H}}[h(x)=h(y)] \le p_2$.
\end{itemize}
\end{definition}

Constructing a near-neighbor data structure from an LSH (see \cite{andoni2006near} and references therein) uses a preprocessing step in which $L$ hash functions $h_1,\ldots,h_L$ are randomly chosen from $\cal{H}$ and $L$ hash tables are constructed. Each point $y$ is then placed in $L$ buckets $h_1(y),\ldots,h_L(y)$. Given a query point $x$ the distances between the points in all the buckets $h_1(x),\ldots,h_L(x)$ and $x$ are computed. If there exists some point $y$ in one of these buckets such that $d(x,y) \leq cR$ then $y$ is returned and otherwise failure is announced.

We construct an LSH family for the shift metric $d_C$ on a large cycle $C$ in the De Bruijn graph on $n$-bit strings. Similarly to the case of Location Sensitive Encryption the input string cannot be too regular. We begin by showing an LSH family for random input and then discuss how to achieve similar results to the case of worst-case inputs which are \Good~in the sense of Appendix~\ref{sec:worst}.



\begin{theorem}
\label{th:lsh}
Let $\Sigma$ be an alphabet, $|\Sigma| \ge n^3$, let $n,N$ be integers, $N>n$, and let $C$ be a random $N$-bit binary string. For any constant $c>1$ there exist a constant $a$ and a hash family $\cal{H}$ of functions $h:\Sigma^n \to \zo$ that is $(R,cR,p_1,p_2)$-LSH for the cyclic shift metric over $C$, with parameters $0 \le R \le \frac{n}{2ac}$, $p_1=1-\frac{R(2a+1)+1}{n}$, and $p_2=p_1-\frac{R((2c-2)a-c-1)}{n})$. In addition, any $h \in \cal{H}$ can be computed with $\sqrt{n}$ queries $x[i]$ to the input string $x$, and the same result holds for any alphabet $\Sigma$ with $a$ that is polylogarithmic  in $n$.
\end{theorem}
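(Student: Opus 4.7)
}

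The plan is to reduce LSH for the shift metric to the existence of a near-optimal non-cyclic LPHS. Specifically, let $L : \Sigma^n \to \Z$ be a non-cyclic $(n,b,d,\delta)$-LPHS with $d = O(\sqrt n)$ and $\delta = O(1/n)$, as guaranteed by Theorem~\ref{thm:1dupper} when $b \ge 3\log n$ (this is where $|\Sigma| \ge n^3$ is used). For a window size $W := \lceil n/(2a) \rceil$, define the hash family
\[
	\mathcal{H} \;=\; \Bigl\{\, h_\rho(x) \;=\; \lfloor (L(x) + \rho)/W \rfloor \bmod 2 \;:\; \rho \in \{0,1,\ldots, W-1\}\,\Bigr\}.
\]
Each $h_\rho$ is computed by running $L$ on the input and then applying an $O(1)$-time post-processing, so the query complexity is exactly that of $L$, namely $O(\sqrt n)$. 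The randomness of $\mathcal{H}$ comes from both the uniform choice of $\rho$ and (implicitly) the random offset of $x$ within the random cycle $C$, which ensures that $x \in V(C)$ is a uniformly random element of $\Sigma^n$ and that $y = x\,\lll\,r$ has the joint distribution required by the non-cyclic LPHS definition for any shift $r < N - n$.

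For the collision analysis, fix $x,y \in V(C)$ with $d_C(x,y) = r$. By Lemma~\ref{lem:shift} applied to $L$, the ``good event'' $G_r := \{L(y) = L(x) - r\}$ (up to sign of the shift) occurs except with probability at most $r\delta \le r/n$. Conditioned on $G_r$, a collision $h_\rho(x) = h_\rho(y)$ is determined purely by whether $\rho + L(x) \bmod W < W - r$ (no bucket boundary crossed) or not, provided $r \le W$. Since $L(x)$ is (nearly) uniform on a range much larger than $W$ by the smoothness property of Lemma~\ref{lem-smoothness}, and $\rho$ is uniform in $\{0,\ldots, W-1\}$, the conditional collision probability equals $1 - r/W$, up to an additive $O(1/n)$ term coming from the $O(1/n)$ smoothness deviation of $L(x) \bmod W$.

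Putting this together, for $r \le R$ the collision probability is at least $(1 - r/W) - r\delta - O(1/n) \ge 1 - r(2a+1)/n - O(1/n)$, giving $p_1 = 1 - (R(2a+1)+1)/n$. For $r \ge cR$, since the upper bound $R \le n/(2ac)$ ensures $cR \le W$, the collision probability is at most $(1 - cR/W) + cR\delta + O(1/n) \le 1 - cR(2a-1)/n + O(1/n)$, which matches the stated $p_2$. The gap $p_1 - p_2 = R((2c-2)a - c - 1)/n$ is positive once $a > (c+1)/(2c-2)$, so choosing $a$ as a sufficiently large constant yields a meaningful LSH. Finally, to extend to an arbitrary alphabet $\Sigma$ (in particular binary), we instead invoke the binary-alphabet LPHS from Theorem~\ref{thm:1dupper}, which has $\delta = \tilde O(1/n)$; the polylogarithmic loss in $\delta$ forces the coefficient $a$ in the definition of $W$ to grow to $\mathrm{polylog}(n)$ to maintain the same functional form of $p_1,p_2$. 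The main subtlety is ensuring that the random substrings of $C$ genuinely provide the distribution required by the LPHS guarantee; this is immediate for non-cyclic shifts within $C$ when $C$ is uniform, since the ``fresh'' symbols appended on the right by $\lll r$ are exactly the next $r$ symbols of $C$ and are independent of the first $n-r$ symbols of $x$.
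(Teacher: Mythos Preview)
Your high-level plan---compose a near-optimal LPHS with a randomized one-bit post-processing---is exactly the paper's approach, and your near-case analysis is essentially correct (in fact you do not even need Lemma~\ref{lem-smoothness} there: since $\rho$ is uniform over a full bucket width $W$, the conditional collision probability is \emph{exactly} $1 - r/W$ for every fixed $L(x)$).

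The gap is in the far case. The LSH definition requires $\Pr_h[h(x)=h(y)]\le p_2$ for \emph{every} $r\ge cR$, not just $r\in[cR,W]$. Your post-processing $h_\rho(x)=\lfloor (L(x)+\rho)/W\rfloor \bmod 2$ is $2W$-periodic in $L(x)-L(y)$: if the LPHS succeeds and $L(x)-L(y)=2W$, then $h_\rho(x)=h_\rho(y)$ for every $\rho$. Since $2W=n/a$ lies in $[cR,n)$, the non-cyclic LPHS guarantee still applies and gives success probability $\ge 1-2W\delta = 1 - O(1/a)$, so for this choice of $r$ the collision probability is $1-O(1/a)$, which is far above the stated $p_2$. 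The same happens near every nonzero multiple of $2W$ below $n$.

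The paper avoids this by a different one-bit map: it draws a random threshold $z\in\{0,\ldots,n-1\}$ and outputs the indicator $[\,h'(x)\bmod n < z\,]$. Conditioned on the LPHS values, the collision probability is then $1-|h'(x)\bmod n - h'(y)\bmod n|/n$, which has no short period and is controlled for all $r<n$ directly by the LPHS guarantee (plus a smoothness term for the mod-$n$ wraparound). For $r\ge n$ the paper uses that $x,y$ are independent substrings of the random $C$ and invokes Lemma~\ref{lem-smoothness} to bound $\Pr[(h'(x)-h'(y))\bmod n \le cR]$. Replacing your width-$W$ parity bucketing by this random-threshold map (equivalently, taking $W$ comparable to $n$ rather than $n/(2a)$) repairs the far case and recovers the paper's parameters.
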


\begin{proof}
Let ${\cal H}'$ denote the LPHS family constructed in Theorem~\ref{thm:1dupper} that with $\sqrt{n}$ queries of an input string of length $n$ achieves error  probability $a/n$, such that $a$ is a constant for $|\Sigma| \ge n^3$, and is polylogarithmic in $n$ for general alphabet. Let $\cal{H}$ be the family of hash functions $h_u:V(C) \to \zo$, for $u=(h',z)$, $h' \in \cal{H}'$, $z \in \{0,\ldots,n-1\}$  defined by
$$
h_u(x)=
\begin{cases}
0 & (h'(x) \bmod n) < z \\
1 & \mbox{otherwise}
\end{cases}
$$

If $x,y \in V(C)$ and $d_C(x,y)=1$  then  $\mbox{Pr}_{h' \in \cal{H}'}[h'(x)=h'(y)+1] = 1-a/n$. Therefore, by union bound, if the distance between $x$ and $y$ is $r \le R$  then $\mbox{Pr}_{h' \in \cal{H}'}[h'(x)=h'(y)+r] = 1-ar/n$. If $h'(x)=h'(y)+r$ then $h_u(x) \neq h_u(y)$ if $z$ is chosen so that it is between $h'(x) \bmod n$ and $h'(y) \bmod n$. There are two possible cases for $h'(x)=h'(y)+r$: either $h'(x) \bmod n>h'(y) \bmod n$ over the integers or $h'(x)> n-r$. In  the first case we have that $\Pr_{h_u \in \cal{H}}[h_u(x)=h_u(y)]=1-r/n$. While for the second case it follows from the proof to Lemma \ref{lem-smoothness} (plugging in $m=n$ and $\delta=a/n$) that $\Pr_{x \in \Sigma^n}[(h'(x) \bmod n > n-r] \le 1/n+ ar/n$. It follows that if the shift distance between $x$ and $y$ is at most $R$  then
$$\mbox{Pr}_{h_u \in \cal{H}}[h_u(x)=h_u(y)]\ge \left(1-\frac{ar}{n}\right)\left(1-\frac{r+1+ar}{n}\right) \geq 1-\frac{R(2a+1)+1}{n}.$$

If $x,y \in V(C)$ and $d_C(x,y) \ge cR$ then there are two cases: $(h'(x) -h'(y)) \bmod n \le cR$ and $(h'(x) -h'(y)) \bmod n > cR$. We show an  upper bound for the probability of the first case, and use the fact that in the second case, the probability that $h_u$ assigns different values to $x$ and $y$ is at least $\frac{cR}{n}$. To bound the first case we divide into two sub-cases: $d_C(x,y) \ge n$ and $d_C(x,y)<n$. In the first sub-case, $y$ is a random string, independent of $x$. By the proof to Lemma \ref{lem-smoothness} and by union bound it holds that $\mbox{Pr}_{y \in \Sigma^n}[(h'(x) - h'(y)) \bmod n \le cR]\le \frac{1+acR)}{n}$. The probability that the second sub-case occurs is bounded by $\mbox{Pr}[(h'(x)-h'(y)) \bmod n \le cR] \le acR/n$, by the definition of LPHS and union bound. Therefore,

$$\Pr[h_u(x)=h_u(y)]	\le \frac{1+2acR}{n}+1-\frac{cR}{n}=1-\frac{cR(2a-1)+1}{n}.$$
\end{proof}

For worst-case results consider input that is \Good~for the whole cycle $C$ and windows of size $n$. That is, assume that for every every two strings $x,y \in V(C)$ it holds that a fraction $\alpha$ of the strings is different. If $\alpha$ is a constant then the result of Theorem \ref{th:lsh} holds with a degradation in the probabilities $p_1$ and $p_2$ of at most $O(a/n)$.

A data structure for approximate near-neighbor searches can be constructed directly from binary-LSH for the shift metric. In the interest of brevity we describe a slightly different construction based on previous work on near-neighbor data structures for the $L_1$ metric.

\begin{corollary}
Let $\Sigma$ be an alphabet, let $n,N$ be integers, $N>n$, and let $C \in \Sigma^N$ be \Good. There exists a $c$-approximate $R$-near-neighbor data structure for the shift metric on the strings in $C$, for a constant $c$, $R=o(n)$, with size $\tilde{O}(mn\log{|\Sigma|}+m^{1+1/c})$ and search time $\tilde{O}(m^{1/c})$.
\end{corollary}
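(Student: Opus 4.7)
The plan is to reduce the problem to the well-studied approximate near-neighbor problem under the Hamming (equivalently $L_1$) metric, using the binary LSH family constructed in Theorem~\ref{th:lsh} as a randomized embedding. Concretely, sample $k = \Theta(n \log m)$ independent hash functions $h_{u_1},\dots,h_{u_k}$ from the family $\mathcal{H}$ of Theorem~\ref{th:lsh}, and associate to each string $x \in V(C)$ the binary sketch $\Phi(x) = (h_{u_1}(x),\dots,h_{u_k}(x)) \in \{0,1\}^k$. By Theorem~\ref{th:lsh}, for $x,y \in V(C)$ with $d_C(x,y) \le R$ we have $\Pr[h_{u_j}(x) \ne h_{u_j}(y)] \le q_1 := 1-p_1$, while for $d_C(x,y) \ge cR$ we have $\Pr[h_{u_j}(x) \ne h_{u_j}(y)] \ge q_2 := 1-p_2$, with a multiplicative gap $q_2/q_1 \ge c'$ for some constant $c' > 1$ (inherited from the gap between $p_1$ and $p_2$ in Theorem~\ref{th:lsh}, after the \Good ness assumption absorbs the additive $O(a/n)$ loss). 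Thus the expected Hamming distance $\Delta(\Phi(x),\Phi(y))$ is at most $k q_1$ for near pairs and at least $k q_2 = c' k q_1$ for far pairs.

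Next, I would appeal to the classical Indyk--Motwani LSH-based near-neighbor data structure for Hamming space (see~\cite{IndykMRV97} and subsequent work). For the $m$ sketches $\Phi(y)$, $y \in V(C) \cap \text{database}$, there is a $c'$-approximate $(kq_1)$-near-neighbor data structure of size $\tilde{O}(m k + m^{1+1/c'})$ and query time $\tilde{O}(m^{1/c'} \cdot k)$, whose correctness guarantee is: given a query $x$, if some database point $y^*$ satisfies $d_C(x,y^*)\le R$ (so $\Delta(\Phi(x),\Phi(y^*)) \le k q_1$ with high probability by a Chernoff bound over the choice of the $k$ LSH functions) then the data structure returns some $y$ with $\Delta(\Phi(x),\Phi(y)) \le c' k q_1$. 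To return a point that is actually shift-close rather than merely Hamming-close, I would store each $y$ along with a succinct fingerprint allowing the query algorithm to verify $d_C(x,y) \le cR$ for a suitable constant $c$, using an invocation of the shift-finding sketch from Section~\ref{sec-sketching} (or directly by checking that $\Delta(\Phi(x),\Phi(y))$ lies in the low regime). The \Good ness of $C$ together with the second guarantee of Theorem~\ref{th:lsh} ensures that database points $y$ with $d_C(x,y) \ge cR$ get filtered out with high probability.

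Substituting $k = \tilde{O}(n)$ (sufficient for the Chernoff concentration since $k q_1 = \Omega(\log m)$), the preprocessing space becomes $\tilde{O}(m k + m^{1+1/c'}) = \tilde{O}(mn + m^{1+1/c'})$, and storing the raw database strings themselves costs an additional $O(mn\log|\Sigma|)$. The query time is $\tilde{O}(m^{1/c'})$, where $c'$ is the constant emerging from the ratio $p_2/p_1$ in Theorem~\ref{th:lsh}; picking the original approximation factor $c$ of the corollary large enough allows $c'$ to match it up to polylogarithmic factors. This yields exactly the stated parameters.

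The main obstacle is the narrowness of the LSH gap: Theorem~\ref{th:lsh} gives $p_1, p_2$ both of the form $1 - O(R/n)$, so the single-bit collision probabilities differ only by a multiplicative constant (rather than being bounded away from $0$ and $1$). This is why the amplification step uses $k = \tilde{\Omega}(n)$ rather than $k = \tilde{\Omega}(1)$, and it is what forces the additive $mn$ term (as opposed to $m \cdot \polylog$) in the space bound. The argument is otherwise a routine composition of LPHS-based LSH with off-the-shelf Hamming near-neighbor data structures, but care must be taken to verify that the failure probabilities of the LPHS, the Chernoff concentration over the $k$ sketch coordinates, and the internal randomization of the $L_1$ data structure can all be handled by a union bound while preserving the claimed $\tilde{O}$ bounds.
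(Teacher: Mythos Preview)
The paper takes a more direct route than your proposal. Rather than thresholding the LPHS to a single bit (Theorem~\ref{th:lsh}) and concatenating $k=\tilde{\Theta}(n)$ copies into a Hamming sketch, the paper uses the integer-valued LPHS $h'\colon\{0,1\}^n\to\Z_n$ of Theorem~\ref{thm:1dupper} directly as a (probabilistic) isometric embedding of shift distance into one-dimensional $L_1$. Only $k=\omega(\log nm)$ independent copies $h_1,\dots,h_k$ are needed to drive the embedding failure probability below $1/\mathrm{poly}(nm)$; for each $j$ the integers $h_j(x_1),\dots,h_j(x_m)$ are fed into an off-the-shelf $L_1$ near-neighbor structure. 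Candidates are then verified by reading $\omega(\log n)$ positions of the original strings at the offset $h_j(y)-h_j(x_i)$ --- note that the integer LPHS value already carries this candidate offset, whereas your Hamming sketch does not.

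Your route through the binary LSH and a Hamming embedding is conceptually sound, but it does not yield the stated $\tilde{O}(m^{1/c})$ search time. The issue is exactly the one you flag as the ``main obstacle'': since $q_1=1-p_1=\Theta(R/n)$, Chernoff concentration forces $k=\tilde{\Omega}(n/R)$. Evaluating $\Phi(y)$ then costs $k\cdot\tilde{O}(\sqrt{n})$ time, and the dimension-$k$ Hamming ANN contributes a further factor $k$ to candidate checking --- this is your own initial estimate $\tilde{O}(m^{1/c'}\cdot k)$, which you silently dropped to $\tilde{O}(m^{1/c'})$ in the final paragraph. With $k=\tilde{\Theta}(n)$ the resulting search time is larger by a polynomial-in-$n$ factor than what the corollary claims. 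The paper sidesteps this by never collapsing the LPHS output to one bit, so that polylogarithmically many LPHS evaluations suffice.
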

\begin{proof}
Let the strings in the data structure be $x_1,\ldots,x_m$ and let ${\cal H}'$ denote the LPHS family constructed in Theorem~\ref{thm:1dupper}. Choose $k$ random functions $h_1,\ldots,h_k \in {\cal H}'$, $k=\omega(\log nm)$. For each $h_j$, construct the $c$-approximate near-neighbor data structure for the $L_1$ metric given by
 Andoni and Indyk in \cite{andoni2006efficient,andoni2005approximate} on inputs $h_j(x_1),\ldots,h_j(x_m)$. In each cell that stores $h_j(x_i)$ store a pointer to $x_i$.

Search for a near-neighbor to a string $y$ by running an independent search on each structure for $h_j$. The result is a list of possible neighbors $x_{i_1},\ldots,x_{i_t}$, such that for each $x_{i_\ell}$ for at least one $h_j$ it holds that the $|h_j(x_{i_\ell})-h_j(y)| \le cR$. To test whether $h_j$ correctly  measures the shift distance between  $y$ and $x_{i_\ell}$ choose at random $\omega(\log n)$ locations in $y$ and check that the shifted locations in $x_{i_\ell}$ are identical.

The near-neighbor data structure for the $L_1$ metric is of size $\tilde{O}(m^{1+1/c})$, which together with the $m$ strings of length $n$ that the structure must store implies the size of the structure in the statement of the corollary. The search time in the $L_1$ structure is dominated by computing the distance of $O(m^{1/c})$ points to the query point. In the current structure that requires $\tilde{O}(m^{1/c})$ with the $\tilde{O}$ notation used for polylogarithmic factors in $nm$.

Failing to find a near neighbor can have one of two causes. The first is failure in the $L_1$ search  structure, which has negligible probability \cite{andoni2006efficient,andoni2005approximate}. The second is that the shift distance between $y$ and some $x_i$ is small, but the difference $|h_j(y)-h_j(x_i)|$ is large. The probability of that event for a specific $h_j$ is $\tilde{O}(1/n)$ by Theorem~\ref{thm:1dupper} and the probability that it will occur for all functions $h_j$ is negligible. Reporting a distant string as a neighbor can occur only if $\omega(\log n)$ locations in $y$ are identical to the same locations plus a small shift (less than $R$) in $x_i$, which has negligible probability for substrings of an \Good~$C$.
\end{proof}


\paragraph{Acknowledgements.} We thank Piotr Indyk, Leo Reyzin, David Woodruff, and anonymous reviewers for helpful pointers and suggestions.

E.\ Boyle was supported by AFOSR Award FA9550-21-1-0046, ERC Project HSS (852952), and a Google Research Scholar Award.
I.\ Dinur was supported by ISF grant 1903/20 and ERC starting grant 757731 (LightCrypt).
N.\ Gilboa was supported by ISF grant 2951/20, ERC grant 876110, and a grant by the BGU Cyber Center.
Y.\ Ishai was supported by ERC Project NTSC (742754), ISF grant 2774/20, and BSF grant 2018393.
N.\ Keller was supported by ERC starting grant 757731 (LightCrypt) and by the BIU Center for Research in Applied Cryptography and Cyber Security in conjunction with the Israel National Cyber Bureau in the Prime Minister’s Office.
O.\ Klein was supported by the Clore Scholarship Programme.

\bibliographystyle{plainurl}
\bibliography{LPHS}

%
%
%
%
%
%
%
%
%
%
%
%
%
%

\appendix
\section*{Appendix}


\section{LPHS Results Based on Iterative Random Walks~\cite{DKK18}}
\label{sec:dkk}
\label{app:dkk}

In the rest of this section, we summarize the IRW algorithm used to derive Theorem~\ref{thm:1dupper}.
Note that it is sufficient to derive the theorem for $b \ge 3 \log n$, while the theorem for $b < 3 \log n$
follows from Lemma~\ref{lem:smaller-alphabet}.

\subsection{The Basic LPHS}
\label{sec:basic}


We begin by describing the min-based LPHS~\cite{BGI17} in Algorithm~\ref{alg:basic} and refer to it as the basic LPHS.
It scans the $d$ values $x[0],x[1],\ldots,x[d-1]$ and chooses the index $i_{min}$ for which $x[i]$ is minimal.
The output of the LPHS is $\mathrm{Basic}_d(x)= i_{min}$.

The motivation behind the algorithm is apparent: $\mathrm{Basic}_d(x)$ and $\mathrm{Basic}_d(x \ll 1)$ scan the two lists $x[0],x[1],\ldots,x[d-1]$ and $x[1],x[2],\ldots,x[d]$, respectively. These lists have $d-1$ common symbols, and with high probability, the minimal value among $x[0],x[1],\ldots,x[d-1],x[d]$ is unique and obtained on a common symbol, implying
$\mathrm{Basic}_d(x) = \mathrm{Basic}_d(x \ll 1) + 1,$ as desired.

\paragraph{Error probability.} The following lemma calculates the error probability of the basic DDL algorithm as a function of $d$.
\begin{lemma}\label{lem:basic_fail_prob}
	The error probability of the basic LPHS is
	\[
		\pr_{x}[\mathrm{Basic}_d(x) - \mathrm{Basic}_d(x \ll 1) \neq 1] \leq \frac{2}{1+d} + 1/n.
	\]
\end{lemma}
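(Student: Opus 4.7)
The proof will reduce the error probability to two clean events via a window-overlap argument.

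The plan is to identify a ``success event'' $E$: the minimum of the $d{+}1$ values $x[0],x[1],\ldots,x[d]$ is uniquely attained at an interior position $i^{*}\in\{1,2,\ldots,d-1\}$. First I would verify that $E$ forces correctness: since $\mathrm{Basic}_{d}(x)$ outputs the unique minimizer of $x[0],\ldots,x[d-1]$, and $\mathrm{Basic}_{d}(x\ll 1)$ outputs the unique minimizer of the window $(x\ll 1)[0],\ldots,(x\ll 1)[d-1]=x[1],\ldots,x[d]$, both windows contain the interior position $i^{*}$ and both exclude the other ``boundary'' position ($d$ and $0$ respectively). Consequently $\mathrm{Basic}_{d}(x)=i^{*}$ and $\mathrm{Basic}_{d}(x\ll 1)=i^{*}-1$, giving a difference of exactly~$1$ as required.

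Next I would upper-bound $\Pr[\neg E]$ by splitting into two sub-events: $(A)$ ``the minimum of $x[0],\ldots,x[d]$ is not unique'' and $(B)$ ``the minimum is unique and attained at position $0$ or $d$''. For $(B)$, assuming $d<n$ so that positions $0,1,\ldots,d$ correspond to distinct indices and hence carry i.i.d.\ uniform symbols from $\Sigma_{b}$, the conditional distribution of the unique minimizer's location is uniform on $\{0,1,\ldots,d\}$ by symmetry. Thus
\[
\Pr[(B)]\le\frac{2}{d+1}\cdot\Pr[\text{unique minimum}]\le\frac{2}{d+1}.
\]
For $(A)$, a union bound over the $\binom{d+1}{2}$ pairs of positions gives
\[
\Pr[(A)]\le\binom{d+1}{2}\cdot 2^{-b}.
\]
Since we are in the regime $b\ge 3\log n$ and $d\le n$, this is bounded by $d^{2}/(2n^{3})\le 1/(2n)\le 1/n$. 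Combining the two contributions yields the stated bound $\frac{2}{d+1}+\frac{1}{n}$.

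The main (minor) obstacle is being careful about the symmetry argument in step $(B)$: it relies on the $d+1$ queried positions being genuinely distinct within $x$, i.e.\ the regime $d<n$ (in the edge case $d=n$ the cyclic wrap-around forces $x[d]=x[0]$, automatically making the minimum non-unique, but this is absorbed by the $1/n$ slack). A subtlety is that ``unique minimum attained at $0$'' and ``unique minimum attained at $d$'' must be treated as disjoint contributions so the factor of $2$ in $2/(d+1)$ is correct. Once these bookkeeping points are addressed, the lemma follows immediately from the two bounds above.
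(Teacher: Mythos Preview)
Your proposal is correct and essentially identical to the paper's proof: both argue that correctness holds whenever the minimum of $x[0],\ldots,x[d]$ lands at an interior position, bound the probability of a boundary minimum by $2/(d+1)$ via symmetry, and absorb the possibility of a non-unique minimum (equivalently, a collision among the $d+1$ i.i.d.\ symbols) into the $1/n$ term using $b\ge 3\log n$. The only cosmetic difference is that the paper phrases event (A) as ``the $d+1$ values are not all distinct'' rather than ``the minimum is not unique,'' but both are handled by the same union bound on collisions.
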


\begin{proof}
Executions $\mathrm{Basic}_d(x)$ and $\mathrm{Basic}_d(x \ll 1)$ scan the two lists $x[0],x[1],\ldots,x[d-1]$ and $x[1],x[2],\ldots,x[d]$, respectively. If the minimal value is obtained on a symbol $x_{min}= x[i_{min}]$ which is queried by both, then we have $\mathrm{Basic}_d(x) = i_{min}$ and $\mathrm{Basic}_d(x) =i_{min}-1$, implying that $\mathrm{Basic}_d(x) - \mathrm{Basic}_d(x \ll 1) = 1$ and the executions are successful. Similarly, an error occurs when the minimal value on the symbols $x[0],\ldots,x[d]$ is obtained on a symbol computed only by one party, namely on one of the $2$ symbols $x[0]$ or $x[d]$. Since the symbols are uniform (and since $x[0],\ldots,x[d]$ are distinct, except with probability at most $1/n$), this occurs with probability $2/(1+d)$ and the lemma follows.
\end{proof}

An important quantity that plays a role in the more advanced LPHS constructions is the output difference of the executions in case they fail to synchronize on the same symbol $x[i_{min}]$ (i.e., their output difference is not $1$). Clearly, we have $\left| \mathrm{Basic}_d(x) - \mathrm{Basic}_d(x \ll 1) - 1 \right| \leq d+1$.

\begin{algorithm}
\Begin{
$i \leftarrow 0$, $min \leftarrow \infty$;

\While{$i < d$}{
    $t  \leftarrow x[i]$;

    \If{$t < min$}{

        $i_{min}  \leftarrow i$, $min \leftarrow t$;
    }

    $i \leftarrow i + 1$;
}
Output $i_{min}$;
}
\caption{$\mathrm{Basic}_d(x)$}
\label{alg:basic}
\end{algorithm}

\subsection{The Random Walk LPHS}
\label{sec:randomw}

The random walk LPHS is useful when the shift is bounded by $R$, which is (much) bigger than 1 (see Definition~\ref{def-morelphs}). We think of two LPHS executions as two parties $A$ and $B$ that perform (pseudo) random walks on the symbols of the string $x \in \Sigma_b^n$, starting from $x[0]$ and $x[r]$, respectively. For a parameter $L$, the step length of each party is uniformly distributed in $[1,L-1]$, and is determined by a shared random function $\psi_{L-1}: \Sigma_b \rightarrow [1,L-1]$, as $\psi_{L-1}(x[i])$.


Algorithm~\ref{alg:random} describes the random walk LPHS, parameterized by $(L,d)$ which determine the maximal step length and the number of steps, respectively. The algorithm closely resembles Pollard's ``kangaroo'' random walk algorithm for solving the discrete logarithm in an interval problem using limited memory~\cite{Pollard78}.
\begin{lemma}[\cite{DKK18}, Lemma 6 (adapted)]
\label{lem:big_shift}
For any positive integer $r \leq R$,
\[
\pr_{x}[\mathrm{RW}_{L,d}(x) - \mathrm{RW}_{L,d}(x \ll r) \neq r] = O\left(\frac{r/L + L}{d}\right).
\]
In particular, a choice of $L = \sqrt{R}$ gives
\[
\pr_{x}[\mathrm{RW}_{L,d}(x) - \mathrm{RW}_{L,d}(x \ll r) \neq r] = O\left(\frac{\sqrt{R}}{d}\right).
\]
\end{lemma}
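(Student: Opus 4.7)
The plan is to reduce to a classical kangaroo-style meeting analysis. First I would observe that since $(x \ll r)[j] = x[j + r]$, the execution $\mathrm{RW}_{L,d}(x \ll r)$ can be viewed as a walk on the same string $x$ starting at position $r$ rather than at position $0$. Hence both LPHS invocations are two walks on a common random string $x$, starting from positions $0$ and $r$, each using the same shared function $\psi_{L-1}$ to determine step lengths. The crucial observation is that whenever the two walks land on a common index $p \in \Z_n$, they read the same symbol $x[p]$, take the same step $\psi_{L-1}(x[p])$, and therefore remain synchronized forever. In particular, on the ``success'' event that the walks meet within the $d$ allotted steps, the two outputs differ by exactly $r$.

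The problem thus reduces to bounding $\Pr[T > d]$, where $T$ is the first step at which the two walks share a position. Since $x$ and $\psi_{L-1}$ are uniform, the successive step sizes of each walk are i.i.d.\ uniform in $[1,L-1]$, and the two walks are mutually independent up to their first meeting (their step sizes are determined by disjoint cells of $x$, up to a negligible collision event handled by the assumption $n \gg dL$). I will bound $\Pr[T>d]$ by establishing the key claim $\mathbb{E}[T] = O(r/L + L)$ and applying Markov's inequality.

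To prove the key claim, I would decompose $T$ into two stages. Let $T_1$ be the first step at which walk $A$'s position is $\geq r$ (i.e., it has caught up to $B$'s starting point). Since $A$'s step sizes are i.i.d.\ with mean $\approx L/2$, Wald's identity gives $\mathbb{E}[T_1] = O(r/L)$. For the second stage, after time $T_1$ the two walks differ by at most $L-1$, and both continue as independent renewal processes on $\mathbb Z$ with mean inter-arrival $L/2$. A Pollard-style analysis shows that each new step of the behind-walk lands on a position already visited by the ahead-walk with probability $\Omega(1/L)$ (since the ahead-walk visits a $\Theta(1/L)$ density of integers), so the expected additional time to collide is $O(L)$. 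Summing the two stages yields $\mathbb{E}[T] = O(r/L + L)$.

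The main obstacle is the Stage~2 analysis, in particular giving a clean $O(L)$ bound on the expected additional meeting time of two discrete renewal processes on $\mathbb Z$ with uniform $[1,L-1]$ step distribution, rather than only the heuristic density argument. One concrete route is to define the signed gap $G_t$ between the leading and trailing walk and to show that the process $(G_t)$ is a random walk on $\{0,1,\dots,L-1\}$ with a uniformly mixing stationary distribution, so that it hits $0$ in expected $O(L)$ steps. A secondary technicality is controlling the event that either walk revisits a previously queried cell of $x$ (which would break the uniform-i.i.d.\ assumption on step sizes); under $n = \Omega(d^2)$ and $L \leq \sqrt{R}$ this contributes only a lower-order $O(d^2/n)$ error that can be absorbed into the $O((r/L + L)/d)$ bound.
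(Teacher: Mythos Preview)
Your overall plan matches the paper's kangaroo-style analysis: the two-stage decomposition of the meeting time into a catch-up phase of expected length $O(r/L)$ and a collision phase of expected length $O(L)$ is exactly what the paper does, and your density argument for Stage~2 (each step of the trailing walk hits the leading walk's trail with probability $\Omega(1/L)$) is precisely the paper's argument as well.

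However, there is a genuine gap in your reduction to $\Pr[T>d]$. You assert that if the two walks meet within $d$ steps, the outputs necessarily differ by $r$. This is false. The output of $\mathrm{RW}_{L,d}$ is $j_{\min}$, the position of the \emph{minimum symbol over all $d$ visited positions}, including the prefix before the walks merged. If the walks meet at step $T$, they share their last $d-T$ positions but not their first $T$; if the global minimum for either walk falls among its non-shared prefix, the outputs need not differ by $r$. Thus the event $\{T\le d\}$ does not imply correctness, and the inequality $\Pr[\text{error}]\le\Pr[T>d]$ is unjustified.

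The paper closes this gap with a min-hash step rather than Markov: conditioned on meeting time $M$, at most $O(M)$ of the (at most $2d$) queried symbols are non-shared, so the probability that the overall minimum is non-shared is $O(M/d)$; taking expectations yields $\Pr[\text{error}]=O(\mathbb{E}[M]/d)=O((r/L+L)/d)$. This lands on the same final bound your Markov argument would, but via a correct inequality. Once you replace ``meeting implies success'' by this conditional min-hash argument, the rest of your plan (including the Stage~1 and Stage~2 bounds on $\mathbb{E}[T]$) is sound and essentially identical to the paper's proof sketch.
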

\begin{proof}[sketch]
Similarly to the proof of Lemma~\ref{lem:basic_fail_prob}, if the minimal value in both walks is obtained on the same symbol $x[j_{min}]$ (queried by both), then $\mathrm{RW}_{L,d}(x) - \mathrm{RW}_{L,d}(x \ll r) = r$. Otherwise, we say that the parties err, and our goal is to upper bound the error probability.

Suppose that $A$ lands on a symbol queried by $B$'s walk after performing $M$ steps. From this stage, the walks coincide on the remaining $r-M$ steps. Since the symbols are uniform, then the probability that the minimal value in both executions is not obtained on the same symbol and the parties err is $O(M/d)$. In the following, we argue that the expected value of $M$ is $O(r/L + L)$. A formal proof then shows that $M$ is tightly concentrated around its expectation, which is sufficient to upper bound the error probability by $O(M/d) = O(\frac{r/L + L}{d})$, as required.

In order to estimate the expectation of $M$, partition $A$'s walk into two stages, where the first stage ends when $A$'s walk reaches (or goes beyond) $B$'s starting point. Since the initial distance between the parties is $r$ and step size of $A$ is uniform in $[1,L-1]$, the expected number of steps in the first stage is $O(r/L)$. In the second stage, due to the uniformity of $\psi_{L-1}$, each step of $A$ has probability of at least $1/L$ to land on a symbol queried by $B$'s walk. Therefore, the expected number of steps until this event occurs is $O(L)$.
\end{proof}

We further note that the parties travel a distance of $O(L \cdot d)$. In case the parties err, then their final distance is at most $O(R + L \cdot d)$, which evaluates to $O(R + \sqrt{R} \cdot d)$ for $L = \sqrt{R}$, and $O(\sqrt{R} \cdot d)$ when $d = \Omega(\sqrt{R})$.

\begin{algorithm}
\Begin{
$j \leftarrow 0$, $i \leftarrow 0$, $min \leftarrow \infty$;

\While{$i < d$}{
    $t  \leftarrow x[j]$;

    \If{$t < min$}{

        $j_{min}  \leftarrow j$;

        $min \leftarrow t$;

    }

    $j \leftarrow j + \psi_{L-1}(x[j])$; {\label{step:compute}}

    $i \leftarrow i + 1$;
}
Output $j_{min}$;
}
\caption{$\mathrm{RW}_{L,d}(x)$}
\label{alg:random}
\end{algorithm}

\subsection{The Iterated Random Walk LPHS}
\label{sec:iterated}

The starting point of the Iterated Random Walk (IRW) LPHS is Algorithm~\ref{alg:basic}. It makes $d$ queries and fails with probability of roughly $2/d$ as noted above. Let us assume that we run this algorithm with only $d/2$ queries, which increases the error probability by a factor of 2 to about $4/d$. On the other hand, we still have a budget of $d/2$ queries and we can exploit them to reduce the error probability.

It would be instructive to think about the executions $\mathrm{Basic}_d(x)$ and $\mathrm{Basic}_d(x \ll 1)$ as being performed by two parties $A$ and $B$ (respectively). After the first $d/2$ queries, we say that $A$ (or $B$) is placed at index $i$ if $x[i]$ is the minimal value in its computed set of size $d/2$. Assume that $A$ and $B$ fail to synchronize on the same index after the first $d/2$ queries (which occurs with probability of roughly $4/d$). Then, as noted above, they are placed at symbols which are at distance of at most $d/2 + 1$, i.e., if $A$ is placed at $x[i]$ and $B$ is placed at $x[j]$, then $|i-j| \leq d/2 + 1$.

Next, the parties then use the random walk of Algorithm~\ref{alg:random} to try and synchronize.\footnote{For simplicity, in Algorithm~\ref{alg:random} $j$ is initialized to 0 rather than to a number that depends on previous queries. This is dealt with in the IRW algorithm by appropriately shifting the input $x$ itself.}
Note that since both $A$ and $B$ use the same algorithm (which is deterministic given the shared randomness), they remain synchronized if they already are at the beginning of the walks.

The random walk algorithm is applied using $d/2$ queries and step length of $\sqrt{d}$.
If the parties are not initially synchronized, according to Lemma~\ref{lem:big_shift}, the error probability of the random walk is $O(\sqrt{d}/d) = O(d^{-1/2})$ and the total error probability is $O(d^{-1} \cdot d^{-1/2} = d^{-3/2})$. In this case the distance between the parties is $O(\sqrt{d} \cdot d = d^{3/2})$.

The success probability can be further amplified by reserving an additional number of $O(d)$ queries to be used in another random walk. This is made possible by shortening the first two random walks, without affecting the failure probability significantly. Hence, assume that the parties fail to synchronize after the random walk (which occurs with probability of $O(d^{-3/2}$)) and that we still have enough available queries for another random walk with $O(d)$ steps. As in the previous random walk, the step length is about the square root of the initial distance of the parties, namely $\sqrt{d^{3/2}} = d^{3/4}$. Applying Lemma~\ref{lem:big_shift} with these parameters gives an error probability of $O(d^{3/4}/d) = d^{-1/4}$ for the random walk, and a total error probability of $O(d^{-3/2} \cdot d^{-1/4}) = O(d^{-7/4})$.

We continue executing random walk iterations with a carefully chosen step length (distributing a budget of $O(d)$ queries among them). After $k$ random walk iterations, the error probability is reduced to about $d^{-2 + 2^{-k}}$ (and the expected distance between the parties is roughly $d^{2 - 2^{-k}}$). Choosing $k \approx \log \log d$ gives an error probability of $\tilde{O}(d^{-2 + 1/\log d}) = \tilde{O}(d^{-2})$. Additional optimizations allow to reduce the error probability to $O(d^{-2})$. The IRW LPHS is presented in Algorithms~\ref{alg:random} and~\ref{alg:iterated}.

\subsubsection{Details of the Iterated Random Walk LPHS}

Algorithm~\ref{alg:iterated} describes the full protocol which is composed of application of the basic LPHS (using $d_0 < d$ queries, reserving queries for the subsequent random walks), and then $K$ additional random walks, where the $k$'th random walk is parameterized by $(L_i,d_i)$ which determine its maximal step length and number of steps. Between each two iterations in Step~\ref{step:shift}, both parties are moved forward by a large (deterministic) number of steps, in order to guarantee independence between the iterations. We are free to choose the parameters $K,\{L_i,d_i\}$, as long as $\sum_{i=0}^K d_i=d$ is satisfied.
By fine tuning the parameters which distribute the number of queries among the iterations and select the step length of each random walk, the following theorem is derived.
\begin{theorem}\cite{DKK18}, Theorem 2 (adapted) \label{thm:mainDDK}
There exists a parameter set $PS = (K,d_0,\{(L_i,d_i)_{i=1}^{K}\})$, where $d = \sum_{i=0}^K d_i$ for which
\begin{align*}
	\pr_{x}[\mathrm{IRW}_{PS}(x) - \mathrm{IRW}_{PS}(x \ll 1) \neq 1] \leq 2^{10.2+o(1)}/d^{2}.
\end{align*}
\end{theorem}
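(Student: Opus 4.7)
The plan is to analyze each iteration of IRW separately, chain the failures via the conditional probability chain rule, and then optimize the parameter set $PS$.

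\textbf{Step 1: setting up the chain of failures.} Let $E_i$ denote the event that iterations $0,1,\ldots,i$ all fail to synchronize the two simulated parties $A$ and $B$ (so $E_{-1}$ is the entire probability space). Because Step~\ref{step:shift} of Algorithm~\ref{alg:iterated} pushes both parties forward by an amount exceeding the farthest position that could possibly have been queried so far, the fresh bits consumed in iteration $i+1$ are (almost surely) disjoint from those queried in iterations $0,\ldots,i$; hence conditioned on $E_i$ and on the (random) distance $D_i$ between the two parties after iteration $i$, the iteration $i+1$ behaves exactly as a fresh random-walk LPHS invocation on an independent uniform string. I would first write a small ``fresh randomness'' lemma formalizing this, so that I may write
\[
\Pr[\mathrm{IRW}_{PS}(x) - \mathrm{IRW}_{PS}(x \ll 1) \neq 1] \;\le\; \Pr[E_0] \cdot \prod_{i=1}^{K} \Pr[E_i \mid E_{i-1}].
\]

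\textbf{Step 2: per-iteration estimates.} The zeroth iteration is the basic LPHS on $d_0$ queries, so by Lemma~\ref{lem:basic_fail_prob} we have $\Pr[E_0] \leq 2/(1+d_0) + 1/n$. Conditioned on failure, the ending positions of $A$ and $B$ differ by at most $d_0+1$, so let $D_0 := d_0 + 1$. For $i \geq 1$ I will prove, by conditioning further on the distance $D_{i-1}$ after iteration $i-1$ and applying Lemma~\ref{lem:big_shift} with $r \le D_{i-1}$, the bound
\[
\Pr[E_i \mid E_{i-1}] \;\le\; c \cdot \frac{D_{i-1}/L_i + L_i}{d_i}
\]
for an absolute constant $c$; moreover, on $E_i$ the two parties end within distance $D_i := D_{i-1} + (L_i-1) d_i \le 2 L_i d_i$ (provided $L_i d_i \ge D_{i-1}$). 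The nontrivial ingredient is showing that on $E_{i-1}$ the starting distance is at most $D_{i-1}$ \emph{deterministically}; this follows from the previous step's bound $D_{i-1} \le 2 L_{i-1} d_{i-1}$ together with the forward push in Step~\ref{step:shift}.

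\textbf{Step 3: choice of parameters and optimization.} Balancing the two terms in the per-iteration estimate suggests $L_i = \sqrt{D_{i-1}}$, which turns the recurrence into $\Pr[E_i \mid E_{i-1}] \le 2c\sqrt{D_{i-1}}/d_i$ with $D_i \le 2 \sqrt{D_{i-1}}\, d_i$. Writing $d_i = \gamma_i d$ with $\sum_i \gamma_i = 1$ and $D_i = \delta_i d^{2-2^{-i}}$ (guided by the geometric doubling of the exponents observed in the informal discussion preceding the theorem) yields a product of the form
\[
\Pr[\text{IRW fails}] \;\le\; \frac{2}{d} \cdot \prod_{i=1}^{K} \frac{2c\sqrt{\delta_{i-1}}}{\gamma_i}\cdot d^{-2^{-i}} \;=\; \frac{C_K}{d^{\,2 - 2^{-K}}}
\]
for a constant $C_K$ depending on $K,\{\gamma_i\},\{\delta_i\}$. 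Taking $K = \lceil \log_2 \log_2 d \rceil$ kills the $d^{-2^{-K}}$ deficit up to a sub-polynomial factor, and a careful calculus-style optimization of the $\gamma_i$'s (allocating roughly equal shares to the last few iterations, where the per-step coefficients are largest) should squeeze the leading constant down to $2^{10.2 + o(1)}$, matching the claimed bound.

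\textbf{Expected obstacles.} The step most likely to cause technical pain is Step~2: the iteration-independence ``fresh randomness'' claim and the deterministic distance bound $D_i \le 2L_i d_i$ require care because the random walk of Algorithm~\ref{alg:random} can in principle drift farther than its nominal range. I would handle this either by truncating the walk explicitly (introducing a $\Pr[\text{overshoot}]$ loss that decays super-polynomially thanks to Hoeffding), or by absorbing the overshoot into the $1/n$ slack in Lemma~\ref{lem:basic_fail_prob}. The second likely obstacle is bookkeeping the numerical constant $2^{10.2}$; this is a matter of tracking the hidden constants in Lemmas~\ref{lem:basic_fail_prob} and~\ref{lem:big_shift} through the recursion, and I expect the dominant contributions to come from the last $O(\log\log d)$ iterations where the per-step constants multiply geometrically.
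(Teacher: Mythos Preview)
Your proposal is correct and follows essentially the same approach as the paper's informal discussion preceding the theorem (the paper does not give a formal proof, deferring to~\cite{DKK18}): chain the conditional failure probabilities across iterations, use Lemma~\ref{lem:basic_fail_prob} for iteration~$0$ and Lemma~\ref{lem:big_shift} for subsequent iterations, set $L_i \approx \sqrt{D_{i-1}}$, and take $K \approx \log\log d$. One minor correction: your ``overshoot'' obstacle is not actually an issue, since Algorithm~\ref{alg:random} makes exactly $d_i$ steps each of size at most $L_i-1$, so the visited positions lie deterministically in $[0,(L_i-1)d_i]$ and the distance bound $D_i \le D_{i-1} + (L_i-1)d_i$ holds without any probabilistic truncation.
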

This theorem immediately implies Theorem~\ref{thm:1dupper}.

\begin{remark}[Cyclic vs. non-cyclic]
The random walk makes queries within an interval of size bounded by $O(d^2)$, hence if $n = \Omega(d^2)$ is large enough, the LPHS gives both a cyclic and non-cyclic LPHS with the same parameters.
\end{remark}

\begin{algorithm}
\Begin{

$j \leftarrow \mathrm{Basic}_{d_0}(x)$;

$p \leftarrow 0$;

$k \leftarrow 1$;

\While{$k \leq K$}{

    $J \leftarrow \sum_{i<k}d_{i} L_{i}$; {\label{step:shift}}

    $j \leftarrow j + J$;

    $p \leftarrow p + j$;

	$x \leftarrow x \ll j$;
	
    $j \leftarrow \mathrm{RW}_{L_i,d_i}(x)$;

	$i \leftarrow i+1$;
}
Output $p + j$;
}

\caption{$\mathrm{IRW}_{K,d_0,\{(L_i,d_i)_{i=1}^{K}\}}(x)$}
\label{alg:iterated}
\end{algorithm}


\subsubsection{The Cyclic Random Walk LPHS}
\label{sec:crandomw}

\begin{lemma}
\label{lem:crandomw}
There exists a cyclic $(n,b,d,\delta)$-LPHS with $d = \tilde{O}(n^{1/2})$ and $\delta = n^{-\omega(1)}$.
\end{lemma}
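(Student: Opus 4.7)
My plan is to obtain the cyclic LPHS from a generic-group discrete-logarithm algorithm via the LPHS--DDL equivalence (Lemma~\ref{lem:equiv}). Specifically, I would use a query-restricted variant of Pollard's rho algorithm~\cite{Pollard78}: each party performs a pseudorandom walk in $\Z_n$, where each step is determined by the currently queried symbol $x[p]$ and consists of adding a small precomputed integer (indexed by a hash of $x[p]$) to the current position. Avoiding the squaring step of the textbook rho algorithm keeps the algorithm query-restricted in the sense of Section~\ref{sec:DDL}. Intuitively, the cyclic structure of $\Z_n$ is what enables such an algorithm to achieve negligible error: unlike the non-cyclic case (where Lemma~\ref{lem:big_shift} alone gives only polynomial-in-$d$ error for shift~$1$), here the two walks can ``wrap around'' and collide by a birthday mechanism analogous to Pollard's rho.

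Since $x$ is uniformly random and $b \geq 3\log n$, the symbols $x[p]$ encountered along the walk are essentially fresh uniform values with only negligible probability of collision, so the two walks behave as pseudorandom walks in $\Z_n$. A birthday-paradox argument then shows that two walks of length $T = \sqrt{n\cdot \omega(\log n)}$, starting at cyclically adjacent positions, land on a common symbol of $x$ with probability at least $1 - \exp(-\Theta(T^2/n)) = 1 - n^{-\omega(1)}$. Once such a collision occurs, the trajectories coincide from that step on. Coordinating the output via distinguished points (positions at which $x[p]$ satisfies a rare predicate of density $1/M$) makes both parties output the same anchor symbol of $x$, whose locations in $x$ and in $x \ll 1$ differ by exactly~$1$, as required for an LPHS.

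The main obstacle is controlling the probability that either party hits a distinguished point \emph{before} the walks have merged, which would desynchronize the outputs. A single run of the algorithm, with $d = \tilde O(\sqrt n)$ queries, controls this bad event only to probability $\sim T/M = 1/\operatorname{polylog}(n)$, which is insufficient. I would therefore iterate the construction $K = \omega(\log n)$ times using independent pseudorandom step functions and distinguished-point predicates, and detect and discard failed iterations via the Las Vegas LPHS variant (cf.\ the discussion after Definition~\ref{def-morelphs} and Lemma~\ref{lem-lv}). Because the iterations are independent, the probability that all of them fail is $(1/\operatorname{polylog}(n))^{\omega(\log n)} = n^{-\omega(1)}$, while the total query count stays at $K \cdot \tilde O(\sqrt n) = \tilde O(\sqrt n)$, yielding a cyclic $(n,b,d,\delta)$-LPHS with the claimed parameters.
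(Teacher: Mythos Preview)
There is a genuine gap in how you combine the $K$ iterations. The ``detect and discard via Las Vegas'' step cannot drive the error to $n^{-\omega(1)}$, because the Las Vegas self-check is not the same at the two invocation points: on input $x$ a party tests $h_i(x)=h_i(x\ll 1)+1$, whereas on input $x\ll 1$ the test is $h_i(x\ll 1)=h_i(x\ll 2)+1$. These are distinct events, so one party may accept iteration $i$ while the other rejects it, and they then base their outputs on different iterations. Such a disagreement has probability $\Theta(\delta_{\mathrm{iter}})$ per iteration, so the combined error is $O(K\,\delta_{\mathrm{iter}})$ rather than $\delta_{\mathrm{iter}}^{\,K}$---still only $1/\mathrm{polylog}(n)$, not $n^{-\omega(1)}$. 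Lemma~\ref{lem-lv} does not resolve this: its $\bot$-condition is likewise invocation-dependent.

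The paper's proof avoids any failure detection by making success \emph{sticky}. It runs Algorithm~\ref{alg:random} with $L=d=\sqrt n$ for $m=(\log n)^a$ iterations; between iterations it applies a fresh random symbol-permutation $\phi_i$ (so the min-finding is independent across iterations) and jumps by $\rho_i(x[j_{\min}])$ for a fresh random $\rho_i\colon\Sigma_b\to\Z_n$, where $j_{\min}$ is the previous iteration's output. If the parties are synchronized entering an iteration they stay so, since they see the same $j_{\min}$ and hence make the same jump; if not, the random jump rerandomizes their cyclic distance, so the next iteration succeeds with a fixed constant probability $p$ independently of earlier failures. The final output is simply the last iteration's $j_{\min}$, and the error is $(1-p)^m=n^{-\omega(1)}$. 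Your rho-walk construction could be repaired by the same nesting device. (A secondary issue: the birthday bound $1-\exp(-\Theta(T^2/n))$ you cite presumes steps uniform in $\Z_n$, but with such steps the walk visits only $\Theta(\sqrt n)$ distinct positions before cycling, so with distinguished-point density $\ll 1/\sqrt n$ it typically never hits one; small steps avoid this but then call for a Lemma~\ref{lem:big_shift}-style collision analysis rather than birthday.)
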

\begin{proof}[sketch]
We assume that $b = (\log n)^a$ for some $a > 1$ (e.g., $b = \log ^2 n$) such that the $n$ symbols of $x$ are all distinct with probability $1 - n^{-\omega(1)}$. If $b$ is smaller, then apply Lemma~\ref{lem:bigger-alphabet} and obtain a bigger alphabet.

The main idea is to apply Algorithm~\ref{alg:random} with parameters $L = \sqrt{n}$ and $d = \sqrt{n}$ repeatedly $m = (\log n)^a$ times and output the shift from the final $j_{min}$ location.

Once again, we think of two invocations of the LPHS as two parties as performing random walks.
We make sure that if the parties are synchronized at the beginning of application $i$ of Algorithm~\ref{alg:random}, then they remain synchronized. On the other hand, if they are not synchronized, they will agree on the same symbol with some constant probability $p$, independently of all other applications. Therefore, the total error probability is $p^m = n^{-\omega(1)}$.

In order to make the $m$ applications of the algorithm independent, in application $i \in \{1,2,\ldots,m\}$, replace symbol $j$ of $x$ with $\phi_i(x[j])$, where each $\phi_i: \Sigma_b \rightarrow \Sigma_b$ is an independent (shared) random permutation.
Furthermore, after application $i$, jump ($\bmod$ $n$) by $\rho_i(x[j_{min}])$ (and start the next application from this location), where $j_{min}$ is the output of the algorithm, and $\rho_i: \Sigma_b \rightarrow \mathbb{Z}_n$ is an independent (shared) random function.

Note that for every $0 < c < 1$, the probability that the parties start application $i$ at distance at most $c \cdot n$ is at least $1/c$ (we define the distance as the minimal cyclic distance between the parties). For a sufficiently small $c$, Algorithm~\ref{alg:random} (applied with $L = \sqrt{n}$ and $d = \sqrt{n}$) succeeds with constant probability.
\end{proof}

\subsubsection{Las Vegas LPHS for Big Shifts}
\label{sec:LasVegas}

\begin{lemma}\label{lem-lv}
For $R = O(d)$, $n = \Omega(d^2)$, there exists an LPHS $h$ that makes $O(d)$ queries on average such that
\[
\pr_{x}[\exists \, r \in [R+1] \,:\, h(x) - h(x \ll r) \neq r] = \tilde{O}\left(\frac{R}{d^2}\right),
\]
and for every $x$ such that $\exists \, r \in [R+1] \,:\, h(x) - h(x \ll r) \neq r$, we have $h(x) = \bot$.
\end{lemma}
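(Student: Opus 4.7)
The plan is to take the shift-bounded $(d, \tilde{O}(1/d^2))$-LPHS $h_0$ of Theorem~\ref{thm:1dupper}, whose error for $[R+1]$-shifts by Lemma~\ref{lem:shift} is $\tilde{O}(R/d^2)$, and use additional queries to certify when $h_0$ would be shift-consistent, outputting $\bot$ otherwise. The goal is to make the certificate shift-invariant in the strong sense that $h(x) \neq \bot$ deterministically implies $h(x \ll r) \neq \bot$ and $h(x)-h(x\ll r)=r$ for every $r \in [R+1]$.

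The most direct attempt is to run $h_0(x)$ with $d/2$ queries to obtain a candidate $z$, then run $h_0(x \ll R)$ with another $d/2$ queries and output $z$ iff $h_0(x) - h_0(x \ll R) = R$. The total query count is $O(d)$ and $\Pr[h(x) = \bot]$ is at most $\Pr[h_0(x) - h_0(x \ll R) \neq R] = \tilde{O}(R/d^2)$ by Lemma~\ref{lem:shift}. What remains is to show a deterministic monotonicity property of IRW walks: if $h_0(x)$ and $h_0(x \ll R)$ lock onto the same canonical minimum symbol $x[z]$, then so do the walks $h_0(x \ll r)$ for every intermediate $r \in [R+1]$.

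I would establish this monotonicity by exploiting the structure of the Iterated Random Walk. Each stage of IRW uses a shared pseudorandom function $\psi$ to drive a deterministic walk, which enjoys a ``once collided, stay collided'' property: two walks that meet at any absolute position in $x$ evolve identically thereafter. The walks for $h_0(x \ll r)$, viewed in $x$'s absolute coordinates, start at position $r$; an intermediate starting position $r \in [0,R]$ is closer to either extreme than the extremes are to each other, and agreement at both extremes forces agreement at intermediate $r$ by a coupling argument applied stage-by-stage to the collision events of IRW.

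The main obstacle is making this monotonicity argument rigorous at the boundary between IRW's stages, in particular when the walks barely collide in one stage so that intermediate shifts might fail to collide in the same stage. In such cases, the construction would instead use a multi-stage certification: augment each stage of IRW with $O(R)$ buffer queries that explicitly verify the stage's shift-robustness (for Stage~$0$, certify via an extended query window of size $d_0+O(R)$ anchored at the candidate position $z$ that $x[z]$ is the unique minimum with an $R$-width safety margin, so the certificate is preserved under all shifts in $[R+1]$; for later stages, run an auxiliary walk from the $R$-shifted start and certify collision in the stage's interior). Output $\bot$ iff every stage's certification fails. This yields query complexity $O(d)$ on average, because later stages are reached only with the geometrically decreasing probabilities inherited from the IRW error analysis of Theorem~\ref{thm:1dupper}, and error $\tilde{O}(R/d^2)$ by that theorem combined with Lemma~\ref{lem:shift}.
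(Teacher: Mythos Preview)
Your approach diverges from the paper's and has a real gap. The monotonicity claim---that if the IRW walks from shifts $0$ and $R$ converge to the same minimum then so do all intermediate walks---is not justified for the random-walk stages of IRW. ``Once collided, stay collided'' is true, but it does not imply that a walk starting at an intermediate position $r$ ever collides with either extreme walk: three walks can follow three disjoint trajectories, and agreement of the outer two at a common minimum says nothing about the middle one. Your fallback (``run an auxiliary walk from the $R$-shifted start and certify collision'') still only certifies the two endpoints, not the $R{-}1$ intermediate shifts, so the Las Vegas guarantee (that $h(x)\neq\bot$ \emph{deterministically} implies consistency for every $r\in[R{+}1]$) is not established.

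The paper's construction avoids this issue entirely by a different decomposition. It first runs the \emph{Basic} (min-hash) LPHS, not IRW, on all $R{+}1$ shifts---this costs only $R+d+1$ queries total since the windows overlap---and collects the set $S$ of distinct outputs. Because consecutive Basic outputs disagree with probability $O(1/d)$, $|S|$ has expectation $O(R/d)=O(1)$. It then runs IRW once from each element of $S$ and outputs $\bot$ unless all these runs agree. The Las Vegas property is immediate: every intermediate shift $r$ has $\mathrm{Basic}_d(x\ll r)\in S$, so its IRW-from-that-start was explicitly checked. The expected query count is $O(d)+\mathbb{E}[|S|]\cdot O(d)=O(d)$. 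The key idea you are missing is to use the simple contiguous-window structure of Basic (where your monotonicity argument \emph{does} work trivially) to reduce the $R{+}1$ shifts to $O(1)$ representatives, rather than trying to push monotonicity through the random-walk stages of IRW.
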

The construction gives both a cyclic and non-cyclic LPHS with the same parameters.
We note that the lemma can be strengthened by proving a concentration inequality around the expected number of queries of $h$.

\begin{proof}
We assume that $b \geq 3 \log n$ (in case $b < 3 \log n$, the construction can be extended using Lemma~\ref{lem:bigger-alphabet} with logarithmic loss in the error probability).

The LPHS $h$ works in two stages. It first runs Algorithm~\ref{alg:basic} $R+1$ times:
$\mathrm{Basic}_d(x), \ldots,$ $\mathrm{Basic}_d(x \ll R)$. Note that these algorithms require a total of $R +d+ 1$ queries to $x[0],\ldots,x[R + d]$. Let $S = \{\mathrm{Basic}_d(x \ll r) \, : \, r \in [R+1]\}$. In the second stage, $h$ runs the IRW algorithm with $d$ queries on each input in $\{x \ll i \, | \, i \in S\}$, and if all runs agree on the same symbol $x[j]$, then it outputs $j$. Otherwise, it outputs $\bot$.

The bound $O(R/d^2)$ on the probability of $h$ outputting $\bot$ follows from the IRW analysis and Lemma~\ref{lem:shift}.
The number of queries of $h$ is $R + d + 1 + |S| \cdot d$, where $|S|$ is the size of $S$.
Consider a pair of executions $\mathrm{Basic}_d(x \ll i), \mathrm{Basic}_d(x \ll (i+1))$ for $i \in [R]$. According to the analysis of Algorithm~\ref{alg:basic}, $\mathrm{Basic}_d(x \ll (i+1))$ contributes a new element to $S$ (that is different from $\mathrm{Basic}_d(x \ll i)$) with probability $O(1/d)$. Hence the expected size of $S$ is $O(R/d) = O(1)$.
\end{proof}

\end{document}